\def\BibTeX{{\rm B\kern-.05em{\sc i\kern-.025em b}\kern-.08em
    T\kern-.1667em\lower.7ex\hbox{E}\kern-.125emX}}
\newtheorem{theorem}{Theorem}
\newtheorem{remark}{Remark}
\newtheorem{definition}{Definition}
\newtheorem{lemma}{Lemma}
\def\proof{\noindent{\it Proof}. \ignorespaces}
\def\endproof{\vbox{\hrule height0.6pt\hbox{\vrule height1.3ex%
width0.6pt\hskip0.8ex\vrule width0.6pt}\hrule height0.6pt}}
\newcommand{\N}{\mathbb{N}}
\newcommand{\R}{\mathbb{R}}
\newcommand{\E}{\mathbb{E}}
\pgfplotsset{compat=1.17}
\begin{document}

\title{Strategic Communication with Cost-Dependent Decoders via the Gray-Wyner Network}

 
 \author{\IEEEauthorblockN{Rony Bou Rouphael and Ma\"{e}l Le Treust
\thanks{
Ma\"el Le Treust gratefully acknowledges financial support from INS2I CNRS, DIM-RFSI, SRV ENSEA, UFR-ST UCP, The Paris Seine Initiative and IEA Cergy-Pontoise. This research has been conducted as part of the project Labex MME-DII (ANR11-LBX-0023-01).} 
}\\
\IEEEauthorblockA{
ETIS UMR 8051, CY Cergy-Paris Université, ENSEA, CNRS,\\
6, avenue du Ponceau, 95014 Cergy-Pontoise CEDEX, FRANCE\\
Email: \{rony.bou-rouphael ; mael.le-treust\}@ensea.fr}\\
 }

\maketitle

\begin{abstract}
In decentralized decision-making problems, communicating agents choose their actions based on locally available information and knowledge about decision rules or strategies of other agents.  
In this work, we consider a strategic communication game between an informed and biased encoder and two decoders communicating via a Gray-Wyner network. All three agents are assumed to be rational and endowed with distinct objectives captured by non-aligned cost functions. The encoder selects and announces beforehand the compression scheme to be implemented. 
Then, it transmits three signals: a public signal, and a private signal to each decoder inducing a Bayesian game among the decoders. 
Our goal is to characterize the encoder's minimal long-run cost function subject to the optimal compression scheme that satisfies both decoders incentives constraints.



\end{abstract}



\section{Introduction} 

We study a decentralized decision-making problem with restricted communication between multiple agents with mismatched objectives. 
We are interested in designing an achievable coding scheme that minimizes the encoder's long run cost function subject to the challenges imposed by the Gray-Wyner network.  
This paper extends our work in \cite{rouphael2021strategic} to the case where decoders are cost-dependent, i.e. the cost function of each decoder depends on the action of the other decoder, and each decoder observes one private signal in addition to the public signal. Knowing the cost functions of all agents, the goal of each player is to minimize its respective cost.\\ Originally referred to as the sender-receiver game, the problem was formulated in the game theory literature with no restrictions on the amount of information transmitted. The Nash equilibrium solution of the cheap talk game was investigated by Crawford and Sobel in 
\cite{crawford1982}
. In \cite{KamenicaGentzkow11}, Kamenica and Gentzkow formulate the Stackelberg version of the strategic communication game
. This setting, referred to as the Bayesian persuasion game, is the one under study in this paper by considering a Gray-Wyner network with two decoders. The simple Gray-Wyner network was formulated in the seminal work \cite{graywynersimplenetwork} with a formulation of the region of attainable rates. The optimal region of second-order coding for the lossy Gray-Wyner network was derived in \cite{secondordergraywynerlossy}.

Information design with multiple designers interacting with a set of agents is studied in \cite{koesslerlaclau}.   
In \cite{sar1}, \cite{SaritasFurrerGeziciLinderYukselISIT2019}, the Nash equilibrium solution is investigated for multi-dimensional sources and quadratic cost functions, whereas the Stackelberg solution is studied in \cite{sar2}. The computational aspects of the persuasion game are considered in \cite{dughmi}. 
The strategic communication problem with a noisy channel is investigated in \cite{AkyolLangbortBasar15}, \cite{akyol2017information}, \cite{LeTreustTomala(Allerton)16}, \cite{jet}, and 
\cite{pointtopoint}. The case where the decoder privately observes a signal correlated to the state, also referred to as the Wyner-Ziv setting \cite{wyner-it-1976}, is studied in \cite{akyol2016role}, \cite{corsica2020} and \cite{LeTreustTomalaISIT21}. Vora and Kulkarni investigate the achievable rates for the strategic communication problem in \cite{voraachievable}, \cite{voraextraction} where the decoder is the Stackelberg leader.

In this paper, we investigate the two-receiver 
strategic communication via a Gray-Wyner network, as in Fig. \ref{fig:successive00}. We assume that the encoder $\mathcal{E}$ commits to and reveals an encoding strategy before observing the source. Each commitment of the encoder induces a Bayesian game among the decoders $\mathcal{D}_1$ and $\mathcal{D}_2$. Since the set of information policies is compact, the Bayesian game admits Bayes-Nash equilibria \cite{Yu1999EssentialEO}. We assume that decoders will select the pair of output sequences that minimizes their respective costs and maximizes the encoder's cost.
We characterize the encoder's optimal cost obtained with strategic Gray-Wyner lossy source coding 
that satisfies both decoders incentives constraints.

\begin{figure}
    \centering
       \begin{tikzpicture}[xscale=3,yscale=1.3]
\draw[thick,->](0.6,0.2)--(0.95,0.2);
\draw[thick,-](0.95,-0.58)--(1.25,-0.58)--(1.25,1)--(0.95,1)--(0.95,-0.58);
\node[below, black] at (1.1,1.425) {$c_e(U,V_1,V_2)$};
\node[below, black] at (1.1,0.4) {$\mathcal{E}$};
\draw[thick,-](1.25,0.2)--(2.4,0.2);
\draw[thick, ->](1.7,0.62)--(1.8,0.82);
\draw[thick, ->](1.7,-0.37)--(1.8,-0.18);
\draw[thick, ->](1.7,0.12)--(1.8,0.32);
\draw[thick, ->](1.25,0.7)--(2.25,0.7);
\draw[thick, <->](2.4,-0.01)--(2.4,0.43);
\draw[thick, ->](1.25,-0.3)--(2.25,-0.3);
\draw[thick,-](2.25,0.44)--(2.55,0.44)--(2.55,1)--(2.25,1)--(2.25,0.44);
\node[below, black] at (2.4,1.425) { $c_1(U,V_1,V_2)$};
\node[below, black] at (2.4,0.925) { $\mathcal{D}_{1}$};
\draw[thick,-](2.25,-0.58)--(2.55,-0.58)--(2.55,-0.02)--(2.25,-0.02)--(2.25,-0.58);
\node[below, black] at (2.4,-0.55) { $c_2(U,V_1,V_2)$};
\node[below, black] at (2.4,-0.15) { $\mathcal{D}_{2}$};
\draw[thick,->](2.55,0.7)--(2.85,0.7);
\draw[thick,->](2.55,-0.3)--(2.85,-0.3);
\node[above, black] at (2.72,-0.3) {$V^n_2$};
\node[above, black] at (2.72,0.68) {$V^n_1$};
\node[above, black] at (0.78,0.25) {$U^n$};
\node[above, black] at (1.5,0.2) {$M_0$};
\node[above, black] at (1.5,0.67){$M_1$};
\node[above, black] at (1.5,-0.3){$M_2$};
\node[above, black] at (1.85,0.3){$R_0$};
\node[above, black] at (1.85,-0.22){$R_2$};
\node[above, black] at (1.85,0.8){$R_1$};
    \end{tikzpicture}
    \caption{Gray-Wyner network with cost-dependent decoders.}
    \label{fig:successive00}
\end{figure}
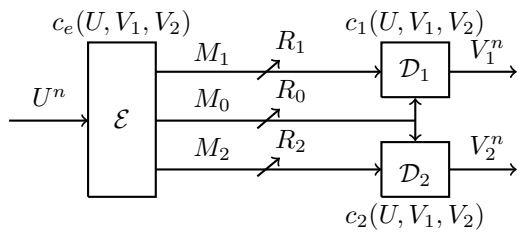


\subsection{Notations}

Let $\mathcal{E}$ denote the encoder and $\mathcal{D}_i$ denote decoder $i \in \{1,2\}$. Notations $U^n$ and $V^n_i$ denote the sequences of random variables of source information $u^n=(u_{1},...,u_{n}) \in \mathcal{U}^n $, and decoder $\mathcal{D}_i$'s actions $v^n_i \in \mathcal{V}^n_i$ respectively for $i \in \{1,2\}$. Calligraphic fonts $\mathcal{U}$ 
 and $\mathcal{V}_i$ denote the alphabets and lowercase letters $u$ 
 and $v_i$ denote the realizations. 
By a slight abuse of notation, indexing sets $\{1,...,2^{\lfloor nR \rfloor}\}, \ R \geq 0$ are referred to as message sets.  
For a discrete random variable $X,$ we denote by $\Delta(\mathcal{X})$ the probability simplex, i.e. the set of probability distributions over $\mathcal{X},$ and by $\mathcal{P}_{X}(x)$ the probability mass function $\mathbb{P}\{X=x\}$. Notation $X -\!\!\!\!\minuso\!\!\!\!- Y -\!\!\!\!\minuso\!\!\!\!- Z$ stands for the Markov chain $\mathcal{P}_{Z|XY}=\mathcal{P}_{Z|Y}$. The information source $U$ follows the independent and identically distributed (i.i.d) probability distribution $\mathcal{P}_{U}\in\Delta(\mathcal{U})$.

\section{System Model}
In this section, we introduce the coding problem 
and formulate the Bayesian game induced by each encoding function. 
\begin{definition}
Let $R_0,R_1,R_2 \in \mathbb{R}_{+}^3=[0,+\infty[^3$, and $n\in\mathbb{N}^{\star}=\mathbb{N}\backslash\{0\}$. The encoding function $\sigma$ and the decoding functions $\tau_i$ of the encoder $\mathcal{E}$ and decoders $\mathcal{D}_i$ for $i \in \{1,2\}$ respectively, are given by
\begin{align}
    \sigma:& \  \mathcal{U}^n \mapsto \Delta\big(\{1,..2^{\lfloor nR_0 \rfloor}\}\times\{1,..2^{\lfloor nR_1\rfloor}\}\times\{1,..2^{\lfloor nR_2\rfloor}\}\big), \nonumber\\
\tau_i:& \ (\{1,2,..2^{\lfloor nR_0 \rfloor}\}\times\{1,2,..2^{\lfloor nR_i\rfloor}\}) 
\mapsto \Delta(\mathcal{V}_i^n).\nonumber
 \end{align} 

\end{definition}
The coding triplets ($\sigma,\tau_1,\tau_2) 
$ are stochastic and induce a joint probability distribution 
\begin{align}
    &\mathcal{P}_{U^nM_0M_1M_2V_1^nV_2^n}^{\sigma,\tau_1,\tau_2} \nonumber = \\ & \bigg(\prod_{t=1}^n\mathcal{P}_{U_t}\bigg)\mathcal{P}^{\sigma}_{M_0M_1M_2|U^n}\mathcal{P}_{V_1^n|M_0M_1
    }^{\tau_1}\mathcal{P}_{V_2^n|M_0M_2
    }^{\tau_2}. 
\end{align}

\begin{definition} Single-letter cost functions $c_e:\mathcal{U}\times\mathcal{V}_1\times\mathcal{V}_2 \mapsto \mathbb{R}$ of the encoder and $c_i:\mathcal{U}\times\mathcal{V}_1\times\mathcal{V}_2  \mapsto \mathbb{R}$ of decoder $\mathcal{D}_i$ for $i \in \{1,2\}$ induce 
\label{def:longrun}
Long-run cost functions $c_e^n(\sigma,\tau_1,\tau_2)$ and $c_i^n(\sigma,\tau_1,\tau_2)$ 
\begin{align*}
&c_e^n(\sigma,\tau_1,\tau_2)= \mathbb{E}_{\sigma,\tau_1,\tau_2}\Bigg[\frac{1}{n}\sum_{t=1}^n c_e(U_t,V_{1,t},V_{2,t})\Bigg] \\
&=\sum_{u^n, v_1^n, v_2^n} \mathcal{P}_{U^nV_1^nV_2^n}^{\sigma,\tau_1,\tau_2}(u^n, v_1^n, v_2^n)\cdot\Bigg[\frac{1}{n}\sum_{t=1}^n c_e(u_t,v_{1,t},v_{2,t})\Bigg], \\
&c^n_{i}(\sigma,\tau_1,\tau_2)= \mathbb{E}_{\sigma,\tau_1,\tau_2}\Bigg[\frac{1}{n}\sum_{t=1}^n c_i(U_t,V_{1,t},V_{2,t})\Bigg]. 
\label{dn2} 
\end{align*}
where $\mathcal{P}^{\sigma,\tau_1,\tau_2}_{U^nV_1^nV_2^n}$ 
 denotes the marginal distributions of $\mathcal{P}^{\sigma,\tau_1,\tau_2}$ over the n-sequences $(U^n,V_1^n,V_2^n)$
 .
\end{definition}

 We consider the strategic communication game in which each player aims to minimize its long-run cost function. 
Each encoding function $\sigma$ induces a Bayesian game $G^{\sigma}(M_0,M_1,M_2,V_1^n,V_2^n)$ 
among the decoders which is defined below. 
 \begin{definition}
 For each encoding $\sigma$, the finite Bayesian game $G^{\sigma}(M_0,M_1,M_2,V_1^n,V_2^n)$
  \ consists of:
 \begin{itemize}
     \item the decoders $\mathcal{D}_i, i \in \{1,2\}$ as the players of the game,
     \item $\mathcal{V}_i^n$ is the set of action sequences of $\mathcal{D}_i$,
          \item  $(M_0,M_i)$ 
          is the type of decoder $\mathcal{D}_i$,
      \item $\tau_i$ 
is a behavior strategy of decoder $\mathcal{D}_i$,
 \item the belief of decoder $\mathcal{D}_1$ (resp. $\mathcal{D}_2$) over the type of decoder $\mathcal{D}_{2}$ (resp. $\mathcal{D}_1$) is given by $\mathcal{P}^{\sigma}_{M_2|M_0M_1}$ (resp. $\mathcal{P}^{\sigma}_{M_{1}|M_0M_2}$). 
     \item $C_i^{\sigma}: \{1,2,..2^{\lfloor nR_0 \rfloor}\}\times\{1,2,..2^{\lfloor nR_1\rfloor}\} \times\{1,2,..2^{\lfloor nR_2\rfloor}\}\times \mathcal{V}^n_1 \times \mathcal{V}_2^n \mapsto \mathbb{R} $ is the $\sigma$-cost function of $\mathcal{D}_i$ such that \begin{align} 
     C_i^{\sigma}(m_0,m_1,m_2,v_1^n,v_2^n)=\sum_{u^n}\mathcal{P}^{\sigma}(u^n|m_0,m_1,m_2)\times \nonumber \\ \Bigg[\frac{1}{n}\sum_{t=1}^n c_i(u_t,v_{1,t},v_{2,t})\Bigg], \ \forall v_1^n,v_2^n,m_0,m_1,m_2. \nonumber
 \end{align}
  \item for a fixed strategy profile $(\tau_1,\tau_2)$, the expected $\sigma$-costs $\Psi^{\sigma}_1(\tau_1,\tau_2,m_0,m_1)$ of $\mathcal{D}_1$ with type $(m_0,m_1)$ is given by
\begin{align}
&\Psi^{\sigma}_1(\tau_1,\tau_{2},m_0,m_1)=\sum_{m_{2}}\mathcal{P}^{\sigma}(m_{2}|m_0,m_1)\times \nonumber \\ &\sum_{v_1^n,v_{2}^n}\mathcal{P}^{\tau_1}(v_1^n|m_0,m_1)\mathcal{P}^{\tau_{2}}(v_{2}^n|m_0,m_{2})\times \nonumber \\&C_1^{\sigma}(v_1^n,v_{2}^n,m_0,m_1,m_{2}). \nonumber
\end{align}
Similarly, $\Psi^{\sigma}_2(\tau_1,\tau_2,m_0,m_2)$ can be defined.
 \end{itemize}
 \end{definition}


  \begin{definition} Given $\sigma$, 
for each behavior strategy $\tau_{2}$, decoder $\mathcal{D}_1$, 
computes the sets 
$BR_1^{\sigma}(\tau_{2})$  
of best-response strategies 
\begin{align}
    BR_1^{\sigma}(\tau_{2})=\big\{\tau_1, \Psi^{\sigma}_1(\tau_1,\tau_{2},m_0,m_1) \leq \Psi^{\sigma}_1(\tilde{\tau_1},\tau_{2},m_0,m_1), \nonumber \\ \ \forall \ \Tilde{\tau_1}, m_0,m_1 \big\}. \nonumber
\end{align}
Similarly, $\mathcal{D}_2$ computes $BR_2^{\sigma}(\tau_{1})$. 
\end{definition}



This Bayesian game $G^{\sigma}(M_0,M_1,M_2,V_1^n,V_2^n)$ is finite, the players use behavioral strategies and Nash Theorem \cite{nash1951non} ensures the existence of at least one Bayes-Nash equilibrium. In the following, we define the set of such equilibria.
\begin{definition}
For each encoding strategy $\sigma$, we define the set $BNE(\sigma)$ of Bayes-Nash equilibria $(\tau_1,\tau_{2})$ of $G^{\sigma}(M_0,M_1,M_2,V_1^n,V_2^n)$ as follows
\begin{equation}
    BNE(\sigma)=\{(\tau_1,\tau_{2}), \ \tau_1 \in BR_1^{\sigma}(\tau_2) \textrm{ and } \tau_{2} \in BR_2^{\sigma}(\tau_1)  
    \}. \nonumber
\end{equation}
\end{definition}

The communication game goes in the following order: 
 \begin{itemize}
\item The encoder $\mathcal{E}$ chooses, announces the encoding $\sigma$.
\item The sequence $U^n$ is drawn i.i.d with distribution $\mathcal{P}_U$, and the game $G^{\sigma}(M_0,M_1,M_2,V_1^n,V_2^n)$ begins.
\item The messages $(M_0,M_1,M_2)$ are encoded according to $\mathcal{P}^{\sigma}_{M_0M_1M_2|U^n}$.
\item Knowing $\sigma$, the decoders select the worst $BNE(\sigma)$ for the encoder's cost. 
    \item The cost values are given by $c_e^n(\sigma,\tau_1,\tau_2)$, $\Psi^{\sigma}_1(\tau_1,\tau_2,m_0,m_1)$, $\Psi^{\sigma}_2(\tau_1,\tau_2,m_0,m_2) $.
\end{itemize}

For $(R_0,R_1,R_2)\in \mathbb{R}_{+}^{3}$ and $n \in \mathbb{N}^{\star}$, the encoder has to solve the following coding problem.
\begin{equation}
\Gamma_e^n(R_0,R_1,R_2)=\underset{\sigma}{\inf}\underset{(\tau_1,\tau_2) \in BNE(\sigma), 
}{\max} c_e^n(\sigma,\tau_1,\tau_2).
\end{equation} 
\section{Main Result}
We consider three auxiliary random variables $W_0\in\mathcal{W}_0$, $W_1\in\mathcal{W}_1$ and $W_2\in\mathcal{W}_2$ with  $|\mathcal{W}_0|=|\mathcal{V}_1|\times|\mathcal{V}_2|+1$, and $|\mathcal{W}_i|=|\mathcal{V}_i|$, for $i \in \{1,2\}$. 
\begin{definition}\label{def:characterization}
For $(R_0,R_1,R_2) \in \mathbb{R}^{3}_{+}$, we define 
\begin{align}
\mathbb{Q}_0(R_0,&R_1,R_2) = \bigg\{ \mathcal{Q}_{W_0|U} \mathcal{Q}_{W_1|UW_0}\mathcal{Q}_{W_2|UW_0}, \nonumber \\ &R_0 \geq I(U;W_0), 
\ R_0+R_1 \geq I(U;W_1,W_0), \nonumber \\ 
&R_0+R_2\geq I(U;W_2,W_0) 
\bigg\},\label{srqq0} \\
\Hat{\mathbb{Q}}_0(R_0,&R_1,R_2) = \bigg\{ \mathcal{Q}_{W_0W_1W_2|U}, \nonumber \\ &R_0 \geq I(U;W_0), 
\ R_0+ R_1 \geq I(U;W_1,W_0), \nonumber \\ 
&R_0+R_2\geq I(U;W_2,W_0) 
\bigg\},\label{srqq0002}
\end{align}
In the following, we define the Bayesian game played among the decoders in the one-shot case scenario. We introduce the concept of decoders types in order to avoid hierarchy of Bayesian beliefs. 
\begin{definition} \label{def:singleletterBayesianGame}
For each distribution $\mathcal{Q}_{W_0W_1W_2|U}\in \Delta(\mathcal{W}_0\times\mathcal{W}_1\times \mathcal{W}_2)^{|\mathcal{U}|} $, the auxiliary single-letter Bayesian game $G^{w}(W_0,W_1,W_2,V_1,V_2)
$ 
is given as follows:
 \begin{itemize}
     \item  
     $(w_0,w_i)$ is the type of decoder $\mathcal{D}_i$, $i\in\{1,2\}$,
      \item the belief of decoder $\mathcal{D}_1$ (resp. $\mathcal{D}_2$) over the type of decoder $\mathcal{D}_{2}$ (resp. $\mathcal{D}_1$) is given by $\mathcal{Q}_{W_2|W_0W_1}$ (resp. $\mathcal{Q}_{W_{1}|W_0W_2}$).
     \item $C^{\star}_i:  \mathcal{V}_1 \times \mathcal{V}_2\times\mathcal{W}_0\times\mathcal{W}_1 \times \mathcal{W}_2 \mapsto \mathbb{R}$ is the single-letter cost of $\mathcal{D}_i$ such that $\forall v_1,v_2,w_0,w_1,w_2$ \begin{align} 
    C^{\star}_i(v_1,v_2,w_0,w_1,w_2)=\sum_{u}\mathcal{Q}(u|w_0,w_1,w_2)c_i(u,v_{1},v_{2}), \nonumber 
 \end{align} \vspace{-0.35cm}
\item  For each pair $(\mathcal{Q}_{V_1|W_0W_1},\mathcal{Q}_{V_2|W_0W_2})$ and profile $(w_0,w_i)$, 
the single-letter expected costs $\Psi^{\star}_i(\mathcal{Q}_{V_1|W_0W_1},\mathcal{Q}_{V_2|W_0W_2},w_0,w_i)$ of $\mathcal{D}_i$ are given by
\begin{align}
&\Psi^{\star}_1(\mathcal{Q}_{V_1|W_0W_1},\mathcal{Q}_{V_2|W_0W_2},w_0,w_1)= \nonumber \\ &\sum_{w_{2}}\mathcal{Q}(w_{2}|w_0,w_!) \sum_{v_1,v_2}\mathcal{Q}(v_1|w_0,w_1)\times \nonumber \\ &\mathcal{Q}(v_2|w_0,w_2)C^{\star}_1(v_1,v_2,w_0,w_1,w_2).  \nonumber
\end{align}
 \end{itemize}
 \end{definition}

Similarly, we get $\Psi^{\star}_2(\mathcal{Q}_{V_1|W_0W_1},\mathcal{Q}_{V_2|W_0W_2},w_0,w_2)$.

For each distribution $\mathcal{Q}_{W_0W_1W_2|U}\in \Delta(\mathcal{W}_0\times\mathcal{W}_1\times \mathcal{W}_2)^{|\mathcal{U}|} $, the auxiliary set of Bayes-Nash equilibria is given by
\begin{align}
\mathbb{BNE}&(\mathcal{Q}_{W_0W_1W_2|U})=\Big\{(\mathcal{Q}_{V_1| W_0W_1}, \mathcal{Q}_{V_2|W_0W_2}),   \nonumber \\ \quad  &\Psi^{\star}_1(\mathcal{Q}_{V_1|W_0W_1},\mathcal{Q}_{V_2|W_0W_2},w_0,w_1)\leq  \nonumber \\  &\Psi^{\star}_1(\tilde{\mathcal{Q}}_{V_1|W_0W_1},\mathcal{Q}_{V_2|W_0W_2},w_0,w_1) \ \forall \  \tilde{\mathcal{Q}}_{V_1|W_0W_1}, w_0,w_1,  \nonumber \\  \nonumber \quad &\Psi^{\star}_2(\mathcal{Q}_{V_1|W_0W_1},\mathcal{Q}_{V_2|W_0W_2},w_0,w_2)
\leq \nonumber \\ &\Psi^{\star}_2(\mathcal{Q}_{V_1|W_0W_1},\tilde{\mathcal{Q}}_{V_2|W_0W_2},w_0,w_2)\ \forall \  \tilde{\mathcal{Q}}_{V_2|W_0W_2}, w_0, w_2\Big\} . \nonumber
\end{align} 

The encoder's optimal cost is defined w.r.t. $\mathbb{Q}_0(R_0,R_1,R_2)$ and $\Hat{\mathbb{Q}}_0(R_0,R_1,R_2)$ respectively as follows
 \begin{align}
&\Gamma^{\star}_e(R_0,R_1,R_2)
= \nonumber \\ &\underset{\mathcal{Q}_{W_0|U}\mathcal{Q}_{W_1|W_0U}\atop\mathcal{Q}_{W_2|W_0U} 
\in\mathbb{Q}_0(R_0,R_1,R_2)}{\inf}\underset{(\mathcal{Q}_{V_1|W_0W_1},\mathcal{Q}_{V_2|W_0W_2}) \in \atop \mathbb{BNE}(\mathcal{Q}_{W_0|U}\mathcal{Q}_{W_1|W_0U}\mathcal{Q}_{W_2|W_0U} 
)}{\max}\hspace{-0.5cm}\mathbb{E} \Big[c_e(U,V_1,V_2)\Big], \label{optdistoooo1} \\
&\Hat{\Gamma}_e(R_0,R_1,R_2)
= \nonumber \\ &\underset{\mathcal{Q}_{W_0W_1W_2|U}\atop\in\Hat{\mathbb{Q}}_0(R_0,R_1,R_2)}{\inf}\underset{(\mathcal{Q}_{V_1|W_0W_1},\mathcal{Q}_{V_2|W_0W_2}) \in \atop \mathbb{BNE}(\mathcal{Q}_{W_0W_1W_2|U})}{\max}\mathbb{E} \Big[c_e(U,V_1,V_2)\Big], \label{optdistoooooo2}
\end{align} 
where the expectation in \eqref{optdistoooo1} is evaluated with respect to $\mathcal{P}_{U} \mathcal{Q}_{W_0|U}\mathcal{Q}_{W_1|W_0U}\mathcal{Q}_{W_2|UW_0}\mathcal{Q}_{V_1|W_0W_1}\mathcal{Q}_{V_2|W_0W_2}$, and satisfies the following Markov chain  \begin{align*}
     W_1 -\!\!\!\!\minuso\!\!\!\!- U,W_0 -\!\!\!\!\minuso\!\!\!\!- W_2.
 \end{align*} 
the expectation in \eqref{optdistoooooo2} is evaluated with respect to $\mathcal{P}_{U} \mathcal{Q}_{W_0W_1W_2|U}\mathcal{Q}_{V_1|W_0W_1}\mathcal{Q}_{V_2|W_0W_2}$. 
\end{definition}

\begin{remark}
The random variables $U, W_0,W_1,W_2,V_1$ and $V_2$ satisfy the following Markov chains
\begin{align*}
     &(U,W_2,V_2) -\!\!\!\!\minuso\!\!\!\!- (W_0,W_1) -\!\!\!\!\minuso\!\!\! \!- V_1, \\
     &(U,W_1,V_1) -\!\!\!\!\minuso\!\!\!\!- (W_0,W_2) -\!\!\!\!\minuso\!\!\! \!-V_2.
 \end{align*}
\end{remark}
\begin{theorem}\label{main result}
Let $(R_0,R_1,R_2) \in \mathbb{R}_{+}^3$, we have
\begin{align*} 
\forall \varepsilon>0,  \exists \hat{n} \in \mathbb{N},   \forall n \geq \hat{n}, \nonumber \\  \Gamma^n_e(R_0,R_1,R_2) &\leq \Gamma_e^{\star}(R_0,R_1,R_2) + \varepsilon,\\
\forall n \in \mathbb{N},  \Gamma_e^n(R_0,R_1,R_2) &\geq \Hat{\Gamma}_e(R_0,R_1,R_2).  
\end{align*}
\end{theorem}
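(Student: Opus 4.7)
The plan is to prove the two inequalities separately via a strategic Gray-Wyner coding scheme for the upper bound and a single-letterization with auxiliary variables for the lower bound.

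\textbf{Achievability.} I fix a triple $\mathcal{Q}_{W_0|U}\mathcal{Q}_{W_1|UW_0}\mathcal{Q}_{W_2|UW_0}\in\mathbb{Q}_0(R_0,R_1,R_2)$ that is $\varepsilon/2$-optimal in \eqref{optdistoooo1}, together with worst-case equilibrium kernels $(\mathcal{Q}_{V_1|W_0W_1},\mathcal{Q}_{V_2|W_0W_2})\in\mathbb{BNE}(\mathcal{Q}_{W_0W_1W_2|U})$ attaining the outer $\max$. I then construct the standard Gray-Wyner random codebook: a bank of public sequences $W_0^n(m_0)$ drawn i.i.d.\ from $\mathcal{Q}_{W_0}$, and, conditionally on each $W_0^n(m_0)$, two independent private banks $W_1^n(m_0,m_1)$ and $W_2^n(m_0,m_2)$ drawn i.i.d.\ from $\mathcal{Q}_{W_1|W_0}$ and $\mathcal{Q}_{W_2|W_0}$ respectively. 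Upon observing $U^n$, the encoder $\sigma$ selects indices $(m_0,m_1,m_2)$ such that the four-tuple $(U^n,W_0^n(m_0),W_1^n(m_0,m_1),W_2^n(m_0,m_2))$ is jointly $\delta$-typical for $\mathcal{P}_U\mathcal{Q}_{W_0W_1W_2|U}$; the rate constraints in \eqref{srqq0} and the mutual covering lemma ensure success with probability tending to $1$. Each decoder $\mathcal{D}_i$ decodes $(W_0^n,W_i^n)$ from its pair $(M_0,M_i)$ and generates $V_i^n$ letter-by-letter through $\mathcal{Q}_{V_i|W_0W_i}$.

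The central step is to show that this symbol-by-symbol profile is an $\varepsilon$-BNE of $G^\sigma(M_0,M_1,M_2,V_1^n,V_2^n)$, and that every BNE selected by the adversarial $\max$ yields essentially the same cost. Conditional on jointly typical codewords, the posterior on $U_t$ given $(W_0^n,W_i^n)$ concentrates on $\mathcal{Q}_{U|W_0W_i}(\cdot\mid W_{0,t},W_{i,t})$, and $\mathcal{D}_i$'s belief over the opponent's type reduces, at each $t$, to $\mathcal{Q}_{W_{-i}|W_0W_i}$; so any candidate deviation $\tilde\tau_1$ averages, time-slot by time-slot, into a single-letter deviation $\tilde{\mathcal{Q}}_{V_1|W_0W_1}$ which is suboptimal by the single-letter BNE property of $\mathcal{Q}_{V_1|W_0W_1}$. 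A typical average lemma argument then produces $\mathbb{E}[c_e(U,V_1,V_2)]+o(1)$ as the long-run encoder cost, yielding $\Gamma^n_e\leq \Gamma^{\star}_e+\varepsilon$ for $n$ large enough.

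\textbf{Converse.} I fix $\sigma$ and a worst-case $(\tau_1,\tau_2)\in BNE(\sigma)$, introduce an independent uniform time-sharing index $T\in\{1,\dots,n\}$, and set
\begin{align*}
U=U_T,\quad V_i=V_{i,T},\quad W_0=(M_0,U^{T-1},T),\\
W_1=(M_1,W_0),\quad W_2=(M_2,W_0).
\end{align*}
Standard chain-rule manipulations give $nR_0\geq H(M_0)\geq I(U^n;M_0)=nI(U;W_0)$ and, similarly, $n(R_0+R_i)\geq I(U^n;M_0,M_i)=nI(U;W_0,W_i)$, so the induced joint law lies in $\hat{\mathbb{Q}}_0(R_0,R_1,R_2)$. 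Crucially, no Markov chain $W_1-(U,W_0)-W_2$ is imposed on this choice, which is precisely the reason the converse is expressed through $\hat{\Gamma}_e$ rather than $\Gamma^{\star}_e$. The decoders' outputs $V_{i,t}$ depend on $(M_0,M_i)$ only, hence through $(W_0,W_i)$, so the induced single-letter kernels are of the required form $\mathcal{Q}_{V_i|W_0W_i}$.

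To conclude, I verify that this pair of kernels belongs to $\mathbb{BNE}(\mathcal{Q}_{W_0W_1W_2|U})$: any single-letter deviation $\tilde{\mathcal{Q}}_{V_1|W_0W_1}$ can be lifted to an $n$-letter deviation that keeps $\tilde\tau_1$ symbol-wise identical except at the ``virtual'' slot $T$; the $n$-letter BNE inequality $\Psi_1^\sigma(\tau_1,\tau_2,m_0,m_1)\leq \Psi_1^\sigma(\tilde\tau_1,\tau_2,m_0,m_1)$ then averages, via the $T$-marginalisation, to the single-letter inequality required in $\mathbb{BNE}$. Taking the supremum of $c_e^n$ over all such equilibria lower bounds it by the single-letter adversarial maximum, and the infimum over admissible $\mathcal{Q}_{W_0W_1W_2|U}\in\hat{\mathbb{Q}}_0$ yields $\Gamma^n_e\geq\hat{\Gamma}_e(R_0,R_1,R_2)$ for every $n$. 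The main obstacle I expect is the rigorous transfer of the BNE property between the $n$-letter and single-letter games in the achievability step: one must rule out coalitional or block-correlated deviations that could, in principle, exploit the codebook structure to break the per-letter equilibrium, which requires a careful typicality-based belief-consistency argument combined with a uniform continuity statement for best responses.
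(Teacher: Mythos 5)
There is a genuine gap in both directions. For the achievability, the quantity $\Gamma^n_e$ is defined through a $\max$ over the set $BNE(\sigma)$ of \emph{exact} Bayes--Nash equilibria of the induced block game, with the decoders choosing the equilibrium that is worst for the encoder. It is therefore not enough to show that the symbol-by-symbol profile built from $(\mathcal{Q}_{V_1|W_0W_1},\mathcal{Q}_{V_2|W_0W_2})$ is an $\varepsilon$-BNE of $G^{\sigma}(M_0,M_1,M_2,V_1^n,V_2^n)$; one must show that \emph{every} exact equilibrium of the induced game has encoder cost within $\varepsilon$ of the single-letter worst-case value. You name this as "the main obstacle" and invoke a "uniform continuity statement for best responses", but such a statement is false in general: the correspondence $\mathcal{Q}_{W_0W_1W_2|U}\rightrightarrows\mathbb{BNE}(\mathcal{Q}_{W_0W_1W_2|U})$ is only upper hemicontinuous, and the random-coding perturbation of the beliefs can create or destroy equilibria. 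The paper's proof resolves exactly this point by restricting the optimization to the dense subset $\tilde{\mathbb{Q}}_0$ of distributions all of whose equilibria are \emph{essential} (Lemmas \ref{kklemmmaa5}, \ref{kklem5}, \ref{lemzxc}, based on Yu's theorems and Berge's maximum theorem), and then passing through the chain of intermediate stage games $G^{\sigma,t}$, $\Tilde{G}^{\sigma,t}$, $G^{\sigma,t}_w$ with the KL-belief control of Lemmas \ref{lemm134}--\ref{lemmtre}. Without the essential-equilibrium reduction (or an equivalent robustness device), the transfer from the block game to the single-letter game does not go through.

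The converse also has two problems. First, your identification $W_0=(M_0,U^{T-1},T)$ differs from the paper's $W_0=(M_0,T)$ in a way that matters: the decoders never observe past source symbols, so a single-letter deviation $\tilde{\mathcal{Q}}_{V_1|W_0W_1}$ may depend on $u^{T-1}$ (which, conditionally on $(M_0,M_1)$, is correlated with $U_T$) and cannot be lifted to any admissible block strategy $\tilde{\tau}_1:(M_0,M_1)\mapsto\Delta(\mathcal{V}_1^n)$; hence your verification that the induced kernels lie in $\mathbb{BNE}$ of the induced law breaks down, and the single-letter beliefs $\mathcal{Q}_{U|W_0W_1W_2}$, $\mathcal{Q}_{W_2|W_0W_1}$ no longer coincide with the decoders' actual Bayesian beliefs as in Lemma \ref{lemma:singlelettercost}. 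Second, and independently of the choice of $W_0$, the logical direction of your equilibrium argument is the wrong one for the bound you need: showing that each block BNE \emph{induces} a single-letter BNE only yields $\max_{(\tau_1,\tau_2)\in BNE(\sigma)}c_e^n\leq\max_{\mathbb{BNE}(\mathcal{P}^{\sigma}_{W_0W_1W_2|U})}\mathbb{E}[c_e]$, which cannot be combined with $\mathcal{P}^{\sigma}_{W_0W_1W_2|U}\in\Hat{\mathbb{Q}}_0$ to give $\Gamma^n_e\geq\Hat{\Gamma}_e$. What the converse requires is the reverse inclusion proved in Lemma \ref{lemma:4}: every single-letter equilibrium of the induced distribution is realized (as a product lift) by some block equilibrium with identical cost, so that the adversarial block maximum dominates the single-letter adversarial maximum before taking the infimum over $\Hat{\mathbb{Q}}_0$.
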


\begin{lemma}\label{lemn} Let $(R_0,R_1,R_2) \in \mathbb{R}_{+}^3$, and consider $c_{e1}:\mathcal{U}\times\mathcal{V}_1\mapsto\mathbb{R}$, and $c_{e2}:\mathcal{U}\times\mathcal{V}_2\mapsto\mathbb{R}$. If for all $(u,v_1,v_2)$, $c_e(u,v_1,v_2)=c_{e1}(u,v_1)+c_{e2}(u,v_2)$, then
\begin{align}
     \Gamma_e^{\star}(R_0,R_1,R_2) = \Hat{\Gamma}_e(R_0,R_1,R_2)
\end{align}\end{lemma}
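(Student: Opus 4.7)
\emph{Proof plan.} The inequality $\Gamma_e^{\star}(R_0,R_1,R_2) \geq \hat{\Gamma}_e(R_0,R_1,R_2)$ is immediate from the set inclusion $\mathbb{Q}_0(R_0,R_1,R_2) \subseteq \hat{\mathbb{Q}}_0(R_0,R_1,R_2)$: every factorised distribution $\mathcal{Q}_{W_0|U}\mathcal{Q}_{W_1|UW_0}\mathcal{Q}_{W_2|UW_0}$ is a particular case of an arbitrary joint $\mathcal{Q}_{W_0W_1W_2|U}$, and the rate constraints in \eqref{srqq0} and \eqref{srqq0002} involve the same mutual informations $I(U;W_0)$, $I(U;W_0,W_1)$, $I(U;W_0,W_2)$. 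Since the infimum over a smaller feasible set is at least as large as the infimum over a larger one, this direction does not require the separability assumption.

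For the reverse inequality $\Gamma_e^{\star} \leq \hat{\Gamma}_e$, I would fix $\varepsilon>0$, select $\mathcal{Q} \in \hat{\mathbb{Q}}_0$ satisfying $\max_{\mathbb{BNE}(\mathcal{Q})}\mathbb{E}[c_e] \leq \hat{\Gamma}_e + \varepsilon$, and introduce the Markovian surrogate
\begin{align*}
\tilde{\mathcal{Q}}(w_0,w_1,w_2|u) := \mathcal{Q}(w_0|u)\,\mathcal{Q}(w_1|u,w_0)\,\mathcal{Q}(w_2|u,w_0),
\end{align*}
obtained by enforcing $W_1 -\!\!\!\!\minuso\!\!\!\!- (U,W_0) -\!\!\!\!\minuso\!\!\!\!- W_2$ while keeping the three conditionals $\mathcal{Q}_{W_0|U}, \mathcal{Q}_{W_1|UW_0}, \mathcal{Q}_{W_2|UW_0}$ untouched. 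The three marginals $\mathcal{Q}_{UW_0}, \mathcal{Q}_{UW_0W_1}, \mathcal{Q}_{UW_0W_2}$ are preserved, hence so are the three mutual informations of the rate constraints, and $\tilde{\mathcal{Q}} \in \mathbb{Q}_0(R_0,R_1,R_2)$.

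The separability assumption enters through the following key identity: for every behavioural pair $\sigma=(\mathcal{Q}_{V_1|W_0W_1},\mathcal{Q}_{V_2|W_0W_2})$,
\begin{align*}
\mathbb{E}_{\mathcal{Q},\sigma}[c_e]= \sum_{u,w_0,w_1,v_1}\mathcal{Q}(u,w_0,w_1)\mathcal{Q}(v_1|w_0,w_1)c_{e1}(u,v_1) + \sum_{u,w_0,w_2,v_2}\mathcal{Q}(u,w_0,w_2)\mathcal{Q}(v_2|w_0,w_2)c_{e2}(u,v_2).
\end{align*}
Each summand depends only on a single preserved bivariate marginal and on the relevant decoder's policy, so $\mathbb{E}_{\mathcal{Q},\sigma}[c_e]=\mathbb{E}_{\tilde{\mathcal{Q}},\sigma}[c_e]$ for every $\sigma$. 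In particular, the encoder-cost functional seen by the adversarial BNE selection is \emph{the same function of $\sigma$} under $\mathcal{Q}$ and under $\tilde{\mathcal{Q}}$.

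The remaining and hardest step is to establish $\max_{\mathbb{BNE}(\tilde{\mathcal{Q}})}\mathbb{E}[c_e] \leq \max_{\mathbb{BNE}(\mathcal{Q})}\mathbb{E}[c_e]$. The obstacle is that the decoders' posteriors over the opponent's type genuinely differ: under the surrogate one has $\tilde{\mathcal{Q}}(u,w_2|w_0,w_1)=\mathcal{Q}(u|w_0,w_1)\mathcal{Q}(w_2|u,w_0)$, whereas under the original $\mathcal{Q}(u,w_2|w_0,w_1)=\mathcal{Q}(u|w_0,w_1)\mathcal{Q}(w_2|u,w_0,w_1)$, and hence the best-response correspondences $BR_i^{\sigma}$ are generally different sets. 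My plan is to pair each BNE $(\tilde{\sigma}_1,\tilde{\sigma}_2)$ of $\tilde{\mathcal{Q}}$ with some BNE of $\mathcal{Q}$ inducing an encoder cost at least as large, exploiting that the encoder's functional depends only on the preserved pairwise marginals together with the strategy profile, and then letting $\varepsilon\to 0$ after infimising over $\mathcal{Q}\in\hat{\mathbb{Q}}_0$. This yields $\Gamma_e^{\star} \leq \hat{\Gamma}_e$ and, combined with the first direction, the claimed equality.
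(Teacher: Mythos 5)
Your easy direction and your surrogate construction are sound: $\mathbb{Q}_0(R_0,R_1,R_2)\subseteq\Hat{\mathbb{Q}}_0(R_0,R_1,R_2)$ gives $\Gamma_e^{\star}\geq\Hat{\Gamma}_e$ with no separability assumption, and the product distribution $\tilde{\mathcal{Q}}=\mathcal{Q}_{W_0|U}\mathcal{Q}_{W_1|UW_0}\mathcal{Q}_{W_2|UW_0}$ indeed preserves the marginals $\mathcal{Q}_{UW_0}$, $\mathcal{Q}_{UW_0W_1}$, $\mathcal{Q}_{UW_0W_2}$, hence the rate constraints of \eqref{srqq0}, while $c_e=c_{e1}+c_{e2}$ makes the encoder's functional $(\mathcal{Q}_{V_1|W_0W_1},\mathcal{Q}_{V_2|W_0W_2})\mapsto\mathbb{E}[c_e(U,V_1,V_2)]$ identical under $\mathcal{Q}$ and $\tilde{\mathcal{Q}}$. (For reference, the paper states this lemma without giving a proof, so there is no authors' argument to compare yours against.)

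The step you defer, however, is not a loose end but the entire content of the lemma, and your plan as described does not close it. For cost-dependent decoders, $\Psi^{\star}_1$ depends on the joint interim belief $\mathcal{Q}_{UW_2|W_0W_1}$ (and $\Psi^{\star}_2$ on $\mathcal{Q}_{UW_1|W_0W_2}$), which the marginal-preserving surrogate deliberately alters; only if $c_1,c_2$ were themselves separable would the best-response correspondences be unchanged, and that is not among the hypotheses. Hence $\mathbb{BNE}(\tilde{\mathcal{Q}})$ and $\mathbb{BNE}(\mathcal{Q})$ are in general incomparable sets of strategy profiles, and the fact that the encoder evaluates both with the same functional yields no ordering between $\max_{\mathbb{BNE}(\tilde{\mathcal{Q}})}\mathbb{E}[c_e]$ and $\max_{\mathbb{BNE}(\mathcal{Q})}\mathbb{E}[c_e]$: the adversarial maximum can a priori be strictly larger under $\tilde{\mathcal{Q}}$. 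Your proposed pairing of each equilibrium of $\tilde{\mathcal{Q}}$ with an equilibrium of $\mathcal{Q}$ of at least the same encoder cost is exactly the assertion that must be proved, and no mechanism is offered (an inclusion $\mathbb{BNE}(\tilde{\mathcal{Q}})\subseteq\mathbb{BNE}(\mathcal{Q})$, for instance, is false in general). As it stands, the proposal establishes only $\Gamma_e^{\star}\geq\Hat{\Gamma}_e$; the reverse inequality, which is where the separability of $c_e$ must actually do work, remains unproven.
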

Using Fekete's Lemma for the sub-additive sequence $\big(n \Gamma_e^n(R_0,R_1,R_2)\big)_{n\in \N^{\star}}$ \cite[Lemma 1]{rouphael2021strategic} we get
$$\lim_{n\rightarrow \infty} \Gamma^n_e(R_0,R_1,R_2)=\inf \Gamma^n_e(R_0,R_1,R_2) = \Gamma^{\star}_e(R_0,R_1,R_2).$$

\section{Achievability Proof of Theorem \ref{main result}}
Our proof consists of three main parts: Firstly, we restrict the optimization to a dense subset of distributions inducing essential equilibria in order to ensure convergence. Secondly, we generate the codebook and show that the probability of error over the codebook is small. Thirdly, we outline the passage from the game by block to the single-letter game and we analyze Bayes-Nash equilibria of all intermediary Bayesian games. This is possible since the beliefs induced by the coding functions are close under the KL-divergence to the single-letter beliefs described using auxiliary random variables.

\subsection{Essential Equilibria}
The random coding scheme may induce some perturbations in the probability distribution $\mathcal{Q}_{W_0|U}\mathcal{Q}_{W_1|W_0U}\mathcal{Q}_{W_2|W_0U}$ of the Bayesian game of Definition \ref{def:singleletterBayesianGame}. A Bayesian game is essential \cite[Definition 4.1]{Yu1999EssentialEO} if small perturbations of the probability distributions may induce small perturbations of the set of Bayes-Nash equilibria. According to \cite[Theorem 4.2]{Yu1999EssentialEO}, the set of essential Bayesian games is a dense subset of the set of Bayesian games. 

\begin{definition}\label{defessential} Given $\mathcal{Q}_{W_0W_1W_2|U} \in \Delta(\mathcal{W}_0\times\mathcal{W}_1\times \mathcal{W}_2)^{|\mathcal{U}|}$, 
an equilibrium $(\mathcal{Q}_{V_1|W_1,W_0},\mathcal{Q}_{V_2|W_2,W_0}) \in \mathbb{BNE}(\mathcal{Q}_{W_0W_1W_2|U})$ is said to be \emph{essential} if for all $\varepsilon>0$, there exists an open neighborhood $\Omega$ of $\mathcal{Q}_{W_0W_1W_2|U}$ such that for all  $\Hat{\mathcal{Q}}_{W_0W_1W_2|U} 
\in \Omega$,
\begin{align} 
&(\Hat{\mathcal{Q}}_{V_1|W_1,W_0},\Hat{\mathcal{Q}}_{V_2|W_2,W_0}) \in \mathbb{BNE}(\Hat{\mathcal{Q}}_{W_0W_1W_2|U}
)
\Longrightarrow \nonumber \\ &||\mathcal{Q}_{V_1|W_0,W_1} - \Hat{\mathcal{Q}}_{V_1|W_0,W_1} ||+ ||\mathcal{Q}_{V_2|W_0,W_2} - \Hat{\mathcal{Q}}_{V_2|W_0,W_2}|| \leq \varepsilon. \nonumber
\end{align}
We denote by $\mathbb{EBNE}
(\mathcal{Q}_{W_0W_1W_2|U})$ the set of essential Bayes-Nash equilibria.
\end{definition}
\begin{definition}
For $(R_0,R_1,R_2) \in \mathbb{R}^{3}_{+}$, we define the set 
\begin{align}
&\tilde{\mathbb{Q}}_0(R_0,R_1,R_2) = \Big\{
\mathcal{Q}_{W_0|U}\mathcal{Q}_{W_1|W_0U}\mathcal{Q}_{W_2|W_0U},  \nonumber \\  &\underset{u,w_0,w_1,\atop w_2}{\min}\mathcal{Q}(w_0|u)\mathcal{Q}(w_1|w_0,u)\mathcal{Q}(w_2|w_0,u) >0, \nonumber \\ &R_0 > I(U;W_0), R_1 > I(U;W_1|W_0),R_2 > I(U;W_2|W_0),  \nonumber \\  
 &\mathbb{BNE}(\mathcal{Q}_{W_0|U}\mathcal{Q}_{W_1|W_0U}\mathcal{Q}_{W_2|W_0U})= \nonumber \\ &\hspace{2cm}\mathbb{EBNE}
(\mathcal{Q}_{W_0|U}\mathcal{Q}_{W_1|W_0U}\mathcal{Q}_{W_2|W_0U}) \Big\}.
\nonumber
    \end{align}

\end{definition}
In the following, we show that optimizing over the full set of target distributions results in the same cost as when the optimization is taken over the set of target distributions restricted to unique Nash Equilibrium.

\begin{definition}
We replace the set $\mathbb{Q}_0(R_0,R_1,R_2)$ by the set $\Tilde{\mathbb{Q}}_0(R_0,R_1,R_2)$ and we define the following program:
 \begin{align}
 &\tilde{\Gamma}_e(R_0,R_1,R_2)= \nonumber \\ &\underset{\mathcal{Q}_{W_0|U}\mathcal{Q}_{W_1|W_0U}\mathcal{Q}_{W_2|W_0U}
   \atop\in\Tilde{\mathbb{Q}}_0(R_0,R_1,R_2)}{\inf}\hspace{-0.2cm}\underset{(\mathcal{Q}_{V_1|W_0W_1},\mathcal{Q}_{V_2|W_0W_2})\in  \atop \mathbb{EBNE}
   (\mathcal{Q}_{W_0|U}\mathcal{Q}_{W_1|W_0U}\mathcal{Q}_{W_2|W_0U}
   )}{\max}\hspace{-0.8cm}\mathbb{E}\Big[c_e(U,V_1,V_2)\Big]. \nonumber
\end{align} 
\end{definition}

\begin{lemma}
\label{kklemmmaa5} For $(R_0,R_1,R_2) \in \mathbb{R}_{+}^3$, we have
\begin{align}
   \Gamma_e^{\star}(R_0,R_1,R_2) =  \tilde{\Gamma}_e(R_0,R_1,R_2). 
\end{align}
\end{lemma}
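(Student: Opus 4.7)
The plan is to prove both inequalities separately. For the easy direction $\tilde{\Gamma}_e(R_0,R_1,R_2) \geq \Gamma_e^{\star}(R_0,R_1,R_2)$, I would observe that $\tilde{\mathbb{Q}}_0(R_0,R_1,R_2) \subseteq \mathbb{Q}_0(R_0,R_1,R_2)$ since the strict rate inequalities imply the non-strict ones, and that on $\tilde{\mathbb{Q}}_0$ the defining condition gives $\mathbb{EBNE}(\cdot) = \mathbb{BNE}(\cdot)$, so the inner maxima agree. Taking the infimum over a smaller set cannot decrease the value, giving the claim.

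For the harder direction $\tilde{\Gamma}_e(R_0,R_1,R_2) \leq \Gamma_e^{\star}(R_0,R_1,R_2)$, I would fix $\varepsilon>0$ and choose $\mathcal{Q}^{\star} = \mathcal{Q}_{W_0|U}\mathcal{Q}_{W_1|W_0U}\mathcal{Q}_{W_2|W_0U} \in \mathbb{Q}_0(R_0,R_1,R_2)$ attaining cost within $\varepsilon$ of $\Gamma_e^{\star}$. I would then construct a perturbed sequence $\mathcal{Q}^{\eta}$ with three ingredients: first, convex combine with a fully supported distribution (e.g.\ uniform on $\mathcal{W}_0\times\mathcal{W}_1\times\mathcal{W}_2$ per $u$) weighted by $\eta$, which forces $\min \mathcal{Q}(w_0|u)\mathcal{Q}(w_1|w_0,u)\mathcal{Q}(w_2|w_0,u)>0$; second, since the mutual informations are continuous in the joint distribution and the originals satisfy $R_0 \geq I(U;W_0)$ etc., a small additional slack (either by slightly increasing rates and then scaling back, or by choosing the perturbation to strictly decrease the informations) yields strict inequalities $R_0 > I(U;W_0)$, $R_1>I(U;W_1|W_0)$, $R_2>I(U;W_2|W_0)$; third, invoke the density result \cite[Theorem 4.2]{Yu1999EssentialEO} to further perturb within the resulting open set to a distribution for which the induced Bayesian game is essential, i.e.\ $\mathbb{BNE}(\mathcal{Q}^{\eta}) = \mathbb{EBNE}(\mathcal{Q}^{\eta})$. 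By construction $\mathcal{Q}^{\eta} \in \tilde{\mathbb{Q}}_0(R_0,R_1,R_2)$ and $\mathcal{Q}^{\eta} \to \mathcal{Q}^{\star}$ as $\eta \to 0$.

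The main obstacle, and the last ingredient, is to pass the inner $\max$ to the limit, namely to show
\begin{align*}
\limsup_{\eta\to 0}\ \underset{(\mathcal{Q}_{V_1|W_0W_1},\mathcal{Q}_{V_2|W_0W_2})\in \mathbb{EBNE}(\mathcal{Q}^{\eta})}{\max}\mathbb{E}_{\mathcal{Q}^{\eta}}[c_e(U,V_1,V_2)] \\ \leq \underset{(\mathcal{Q}_{V_1|W_0W_1},\mathcal{Q}_{V_2|W_0W_2})\in \mathbb{BNE}(\mathcal{Q}^{\star})}{\max}\mathbb{E}_{\mathcal{Q}^{\star}}[c_e(U,V_1,V_2)].
\end{align*}
Equilibrium sets can jump discontinuously, but the Bayes–Nash correspondence is upper hemi-continuous in the game parameters because the best-response correspondences $BR_1^{\sigma},BR_2^{\sigma}$ are upper hemi-continuous (the cost $\Psi^{\star}_i$ is continuous in $(\mathcal{Q}_{W_0W_1W_2|U},\mathcal{Q}_{V_1|W_0W_1},\mathcal{Q}_{V_2|W_0W_2})$ and strategy spaces are compact). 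Thus if $(\mathcal{Q}^{\eta}_{V_1|W_0W_1},\mathcal{Q}^{\eta}_{V_2|W_0W_2}) \in \mathbb{EBNE}(\mathcal{Q}^{\eta}) \subseteq \mathbb{BNE}(\mathcal{Q}^{\eta})$ achieves the inner maximum and (along a subsequence, by compactness of $\Delta(\mathcal{V}_1)^{|\mathcal{W}_0\times\mathcal{W}_1|}\times\Delta(\mathcal{V}_2)^{|\mathcal{W}_0\times\mathcal{W}_2|}$) converges, its limit lies in $\mathbb{BNE}(\mathcal{Q}^{\star})$, and continuity of $\mathbb{E}[c_e]$ gives the desired bound. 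Combining the three perturbation steps with this semi-continuity argument shows that $\tilde{\Gamma}_e(R_0,R_1,R_2) \leq \Gamma_e^{\star}(R_0,R_1,R_2) + \varepsilon$ for every $\varepsilon>0$, completing the proof.
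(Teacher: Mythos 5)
Your proposal is correct in substance and rests on the same two pillars as the paper---the density of essential Bayesian games (\cite[Theorem 4.2]{Yu1999EssentialEO}, packaged in the paper as Lemma \ref{lemzxc}) and a semicontinuity property of the Bayes--Nash correspondence---but the way you pass the inner maximum to the limit is genuinely different. The paper routes the argument through Lemma \ref{kklem5}: it upgrades upper semicontinuity of $\mathcal{Q}_{W_0W_1W_2|U}\rightrightarrows \mathbb{BNE}(\mathcal{Q}_{W_0W_1W_2|U})$ to full continuity via \cite[Theorem 2]{MKfort51} and then applies Berge's Maximum Theorem \cite{Berge}, so that the max-value map is continuous and the two infima agree by density. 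You observe instead that only one inequality is nontrivial: $\tilde{\Gamma}_e\ge\Gamma^{\star}_e$ follows from the inclusion $\tilde{\mathbb{Q}}_0\subseteq\mathbb{Q}_0$ together with $\mathbb{EBNE}=\mathbb{BNE}$ on $\tilde{\mathbb{Q}}_0$, and for $\tilde{\Gamma}_e\le\Gamma^{\star}_e$ you only need a $\limsup$ bound, which you obtain from the closed-graph (upper hemicontinuity) property alone, compactness of the strategy simplices, and joint continuity (multilinearity) of $\mathbb{E}[c_e]$. This is more economical and arguably more robust: lower semicontinuity of Nash correspondences fails in general (Fort's theorem gives it only on a residual set, which is exactly why essential equilibria are introduced), and your one-sided argument never needs it.

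One caveat, which you inherit from the paper's Lemma \ref{lemzxc} rather than introduce: your perturbation step asserts that a small slack produces the per-layer strict inequalities $R_1>I(U;W_1|W_0)$ and $R_2>I(U;W_2|W_0)$, but a generic $\mathcal{Q}^{\star}\in\mathbb{Q}_0$ only satisfies the sum-rate constraints $R_0+R_i\geq I(U;W_i,W_0)$, which do not imply $R_i\geq I(U;W_i|W_0)$; if, say, $R_1=0$, the strict version is unattainable and $\tilde{\mathbb{Q}}_0$ may even be empty while $\mathbb{Q}_0$ is not. So approximating an arbitrary near-optimal $\mathcal{Q}^{\star}$ by elements of $\tilde{\mathbb{Q}}_0$ requires an extra argument (e.g.\ first showing the infimum defining $\Gamma^{\star}_e$ can be restricted to distributions with per-layer slack), a gap that sits in the shared density claim and not in your semicontinuity step. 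Likewise, continuity of $\Psi^{\star}_i$ in $\mathcal{Q}_{W_0W_1W_2|U}$ is delicate at distributions where some type $(w_0,w_i)$ has zero probability, but this is precisely the upper-semicontinuity statement the paper imports from \cite[Theorem 3.3]{Yu1999EssentialEO}, so you stand on the same footing there.
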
 
\begin{proof}
The proof of Lemma \ref{kklemmmaa5} follows from  [Theorem 4.2, \cite{Yu1999EssentialEO}] and Lemmas   \ref{kklem5}, and \ref{lemzxc}.
\end{proof}

\begin{lemma} \label{kklem5} Given a distribution $\mathcal{Q}_{W_0W_1W_2|U} \in \Delta(\mathcal{W}_0\times\mathcal{W}_1\times\mathcal{W}_2)^{|\mathcal{U}|}$, we have
\begin{align}
    \underset{(\mathcal{Q}_{V_1|W_0W_1},\mathcal{Q}_{V_2|W_0W_2})\atop \in\mathbb{EBNE}(\mathcal{Q}_{W_0W_1W_2|U})}{\max}&\mathbb{E}\Big[c_e(U,V_1,V_2)\Big] = \nonumber \\ &\underset{(\mathcal{Q}_{V_1|W_0W_1},\mathcal{Q}_{V_2|W_0W_2})\atop\in\mathbb{BNE}(\mathcal{Q}_{W_0W_1W_2|U})}{\max}\mathbb{E}\Big[c_e(U,V_1,V_2)\Big].
\end{align} 
\end{lemma}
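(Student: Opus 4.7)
The direction $\max_{\mathbb{EBNE}(\mathcal{Q}_{W_0W_1W_2|U})} \mathbb{E}[c_e] \leq \max_{\mathbb{BNE}(\mathcal{Q}_{W_0W_1W_2|U})} \mathbb{E}[c_e]$ is immediate from the inclusion $\mathbb{EBNE}(\mathcal{Q}_{W_0W_1W_2|U}) \subseteq \mathbb{BNE}(\mathcal{Q}_{W_0W_1W_2|U})$, which is built into Definition \ref{defessential}. So the whole content of the lemma lies in the reverse inequality.

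For that reverse direction, the plan is a density-plus-continuity argument. I fix a maximiser $(\mathcal{Q}^{\star}_{V_1|W_0W_1},\mathcal{Q}^{\star}_{V_2|W_0W_2}) \in \mathbb{BNE}(\mathcal{Q}_{W_0W_1W_2|U})$ of the encoder's cost. By \cite[Theorem~4.2]{Yu1999EssentialEO} the set of distributions inducing essential Bayesian games (i.e. those for which $\mathbb{BNE} = \mathbb{EBNE}$) is dense in $\Delta(\mathcal{W}_0\times\mathcal{W}_1\times\mathcal{W}_2)^{|\mathcal{U}|}$, so I pick a sequence $\mathcal{Q}^{(k)}_{W_0W_1W_2|U} \to \mathcal{Q}_{W_0W_1W_2|U}$ with $\mathbb{BNE}(\mathcal{Q}^{(k)}) = \mathbb{EBNE}(\mathcal{Q}^{(k)})$ for every $k$. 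Since the decoder strategy simplices are compact and the single-letter expected costs $\Psi^{\star}_1,\Psi^{\star}_2$ of Definition \ref{def:singleletterBayesianGame} are jointly continuous in the prior $\mathcal{Q}_{W_0W_1W_2|U}$ and in the decoder strategies, Berge's maximum theorem yields upper hemicontinuity of the best-response correspondences and hence of the map $\mathcal{Q}_{W_0W_1W_2|U} \mapsto \mathbb{BNE}(\mathcal{Q}_{W_0W_1W_2|U})$. This produces approximants $(\mathcal{Q}^{\star(k)}_{V_1|W_0W_1},\mathcal{Q}^{\star(k)}_{V_2|W_0W_2}) \in \mathbb{EBNE}(\mathcal{Q}^{(k)})$ converging to $(\mathcal{Q}^{\star}_{V_1|W_0W_1},\mathcal{Q}^{\star}_{V_2|W_0W_2})$, and joint continuity of $\mathbb{E}[c_e]$ delivers convergence of the encoder costs to $\mathbb{E}[c_e(U,V_1^{\star},V_2^{\star})]$.

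The main obstacle is that the approximating essential equilibria belong to the perturbed games $\mathcal{Q}^{(k)}_{W_0W_1W_2|U}$ and not to the original game, so they do not directly furnish an essential BNE of $\mathcal{Q}_{W_0W_1W_2|U}$ with the desired cost. To close that gap I would lift the argument to the level of value functions by setting $F(\mathcal{Q}) = \max_{\mathbb{BNE}(\mathcal{Q})} \mathbb{E}[c_e]$ and $G(\mathcal{Q}) = \max_{\mathbb{EBNE}(\mathcal{Q})} \mathbb{E}[c_e]$. Upper hemicontinuity of the two equilibrium correspondences together with Berge's maximum theorem gives upper semicontinuity of both $F$ and $G$, and on the dense subset of essential distributions $F = G$ by construction. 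The delicate step, which I expect to be the hardest, is establishing lower semicontinuity of $G$ at a general $\mathcal{Q}_{W_0W_1W_2|U}$; I would obtain it by showing that the essentiality condition of Definition \ref{defessential} is itself stable under vanishing perturbations of the prior, so that the $\mathbb{EBNE}$ correspondence inherits lower hemicontinuity from its $\mathbb{BNE}$ envelope along the approximating sequence. Once $F$ and $G$ are shown to be continuous on $\Delta(\mathcal{W}_0\times\mathcal{W}_1\times\mathcal{W}_2)^{|\mathcal{U}|}$ and to agree on a dense set, they must coincide everywhere, which is precisely the equality asserted by Lemma \ref{kklem5}.
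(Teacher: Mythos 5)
Your first direction (from the inclusion $\mathbb{EBNE}\subseteq\mathbb{BNE}$) is fine, but the reverse direction is where the lemma lives, and your argument for it does not close. The step in which you claim that upper hemicontinuity of $\mathcal{Q}_{W_0W_1W_2|U}\rightrightarrows\mathbb{BNE}(\mathcal{Q}_{W_0W_1W_2|U})$ ``produces approximants $(\mathcal{Q}^{\star(k)}_{V_1|W_0W_1},\mathcal{Q}^{\star(k)}_{V_2|W_0W_2})\in\mathbb{EBNE}(\mathcal{Q}^{(k)})$ converging to'' your fixed maximiser runs the implication backwards: upper hemicontinuity only says that limit points of equilibria of the perturbed games $\mathcal{Q}^{(k)}$ are equilibria of the limit game. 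To approximate a \emph{given} equilibrium of $\mathcal{Q}_{W_0W_1W_2|U}$ by equilibria of nearby games you need \emph{lower} hemicontinuity of the equilibrium correspondence at $\mathcal{Q}_{W_0W_1W_2|U}$, which Berge's theorem does not provide and which is exactly the non-trivial ingredient here. Your fallback via the value functions $F$ and $G$ has the same hole: you establish (at best) upper semicontinuity of both and agreement on a dense set, and you explicitly defer the lower semicontinuity of $G$ --- ``stability of essentiality under vanishing perturbations'' --- which is not proved and is in substance the statement of the lemma itself. Note also that Fort-type results give continuity of an usc compact-valued correspondence only on a dense (residual) subset, so the concluding step ``$F$ and $G$ continuous on the whole simplex and equal on a dense set, hence equal everywhere'' cannot be obtained along the route you sketch.

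The paper argues differently and more directly at the fixed distribution: the correspondence $\mathcal{Q}_{W_0W_1W_2|U}\rightrightarrows\mathbb{BNE}(\mathcal{Q}_{W_0W_1W_2|U})$ is upper semicontinuous (\cite[Theorem 3.3]{Yu1999EssentialEO}), lower semicontinuity is supplied by \cite[Theorem 2]{MKfort51}, and once the correspondence is continuous at $\mathcal{Q}_{W_0W_1W_2|U}$, Definition \ref{defessential} forces every Bayes--Nash equilibrium there to be essential, i.e.\ $\mathbb{EBNE}(\mathcal{Q}_{W_0W_1W_2|U})=\mathbb{BNE}(\mathcal{Q}_{W_0W_1W_2|U})$ as sets, so the two maxima coincide trivially (Berge's maximum theorem guaranteeing the maxima are well defined). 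In other words, the missing idea in your proposal is precisely the lower-semicontinuity input (Fort) applied at the point in question so that essentiality of \emph{all} equilibria can be concluded there, rather than a density-of-essential-games argument that transports equilibria from perturbed priors back to the original one.
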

\begin{proof}{[Lemma \ref{kklem5}]}
Consider the upper semi-continuous \cite[Theorem 3.3]{Yu1999EssentialEO}] correspondence $$\mathcal{Q}_{W_0W_1W_2|U} \rightrightarrows \mathbb{BNE}(\mathcal{Q}_{W_0W_1W_2|U}).$$ Using  \cite[Theorem 2]{MKfort51}, the correspondence is lower semi-continuous, therefore continuous. Using Def. \ref{defessential} and Berge's Maximum Theorem \cite{Berge}, the result follows.  
\end{proof}

\begin{lemma} \label{lemzxc}
Let $(R_0,R_1,R_2) \in \R^3_{+}$, the set $\tilde{\mathbb{Q}}_0(R_0,R_1,R_2)$ is a dense subset of  $\mathbb{Q}_0(R_0,R_1,R_2)$. 
\end{lemma}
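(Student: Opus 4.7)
Given any $\mathcal{Q}=\mathcal{Q}_{W_0|U}\mathcal{Q}_{W_1|W_0U}\mathcal{Q}_{W_2|W_0U}\in\mathbb{Q}_0(R_0,R_1,R_2)$, the plan is to construct a sequence of distributions in $\tilde{\mathbb{Q}}_0(R_0,R_1,R_2)$ converging to it in total variation, addressing the three additional conditions of $\tilde{\mathbb{Q}}_0$ (strict positivity, strict rate inequalities, and $\mathbb{BNE}=\mathbb{EBNE}$) one at a time.

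First, I would perturb each conditional by a uniform mixture: for $\alpha\in(0,1)$ set
\begin{align*}
\mathcal{Q}^{\alpha}_{W_0|U}(w_0|u)&=(1-\alpha)\,\mathcal{Q}_{W_0|U}(w_0|u)+\frac{\alpha}{|\mathcal{W}_0|},\\
\mathcal{Q}^{\alpha}_{W_i|W_0U}(w_i|w_0,u)&=(1-\alpha)\,\mathcal{Q}_{W_i|W_0U}(w_i|w_0,u)+\frac{\alpha}{|\mathcal{W}_i|},
\end{align*}
for $i\in\{1,2\}$. The product structure, and thus the Markov chain $W_1-\!\!\!\!\minuso\!\!\!\!-(U,W_0)-\!\!\!\!\minuso\!\!\!\!- W_2$, is preserved, and the joint law on $\mathcal{U}\times\mathcal{W}_0\times\mathcal{W}_1\times\mathcal{W}_2$ is strictly positive whenever $\alpha>0$. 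By convexity of mutual information in the channel with input fixed, and because the uniform perturbation is independent of $U$, each of $I(U;W_0^{\alpha})$ and $I(U;W_i^{\alpha}|W_0^{\alpha})$ is bounded above by $(1-\alpha)$ times its unperturbed value and varies continuously in $\alpha$. Hence, for all sufficiently small $\alpha>0$ the strict inequalities $R_0>I(U;W_0^{\alpha})$ and $R_i>I(U;W_i^{\alpha}|W_0^{\alpha})$ of $\tilde{\mathbb{Q}}_0$ are satisfied, after first, if necessary, nudging $\mathcal{Q}$ into the interior of $\mathbb{Q}_0$ in directions that slacken any binding rate constraint.

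Next, I would invoke \cite[Theorem 4.2]{Yu1999EssentialEO}, which asserts that essential Bayesian games form a dense residual subset of the space of Bayesian games parameterized by the underlying conditional distribution. This furnishes a second perturbation $\hat{\mathcal{Q}}^{\alpha}$, arbitrarily close to $\mathcal{Q}^{\alpha}$, such that $\mathbb{BNE}(\hat{\mathcal{Q}}^{\alpha})=\mathbb{EBNE}(\hat{\mathcal{Q}}^{\alpha})$. Strict positivity and strict rate inequalities being open conditions, $\hat{\mathcal{Q}}^{\alpha}$ remains in $\tilde{\mathbb{Q}}_0(R_0,R_1,R_2)$ whenever this second perturbation is taken small enough, and letting $\alpha\downarrow 0$ yields the claimed convergent sequence.

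The main technical obstacle will be verifying that Yu's density theorem survives restriction to the Markov-structured submanifold $\{\mathcal{Q}_{W_0|U}\mathcal{Q}_{W_1|W_0U}\mathcal{Q}_{W_2|W_0U}\}$ that parameterizes $\mathbb{Q}_0$, rather than the full simplex of laws on $\mathcal{W}_0\times\mathcal{W}_1\times\mathcal{W}_2$ conditioned on $U$. This should follow by combining the upper-semicontinuity of the equilibrium correspondence $\mathcal{Q}\rightrightarrows\mathbb{BNE}(\mathcal{Q})$ from \cite[Theorem 3.3]{Yu1999EssentialEO} with Fort's theorem \cite{MKfort51} on this finite-dimensional submanifold, in direct analogy with the argument already deployed in the proof of Lemma \ref{kklem5}.
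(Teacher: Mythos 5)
Your proposal is correct and follows essentially the same route as the paper: the decisive step in both is Yu's Theorem 4.2 on the density of essential Bayesian games (together with the non-emptiness, compactness and upper semi-continuity of the equilibrium correspondence), combined with the fact that the remaining conditions defining $\tilde{\mathbb{Q}}_0(R_0,R_1,R_2)$ — full support and strict rate inequalities — are open and reachable by arbitrarily small perturbations. Your explicit uniform-mixture construction and your remark about carrying the density statement over to the Markov-structured parameterization $\mathcal{Q}_{W_0|U}\mathcal{Q}_{W_1|W_0U}\mathcal{Q}_{W_2|W_0U}$ merely spell out details that the paper's brief proof (which cites Yu's Theorems 3.2, 3.3 and 4.2 and stops there) leaves implicit.
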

\begin{proof}{[Lemma \ref{lemzxc}]} Fix a distribution $\mathcal{Q}_{W_0W_1W_2|U} \in \Delta(\mathcal{W}_0\times\mathcal{W}_1\times\mathcal{W}_2)^{|\mathcal{U}|}$. 
The Bayes-Nash equilibrium correspondence $\mathcal{Q}_{W_0W_1W_2|U} \rightrightarrows \mathbb{BNE}(\mathcal{Q}_{W_0W_1W_2|U})$ is non-empty for finite games, compact valued and upper semi-continuous \cite[Theorems 3.2, 3.3]{Yu1999EssentialEO}. Therefore, every probability distribution in the set $\tilde{\mathbb{Q}}_0(R_0,R_1,R_2)$ induces an essential equilibria. Using \cite[Theorem 4.2]{Yu1999EssentialEO}, the result follows.
\end{proof}

\subsection{Codebook Generation}
Fix a conditional probability distribution $\mathcal{Q}_{W_0|U}\mathcal{Q}_{W_1|UW_0}\mathcal{Q}_{W_2|UW_0} \in\mathbb{Q}_0(R_0,R_1,R_2)$. There exists $\eta>0$ such that \begin{align}
    R_0 =& I(U;W_0) + \eta, \label{eqwwe}\\
    R_1 =& I(U;W_1|W_0) + \eta, \label{eqwwee}\\
    R_2 =& I(U;W_2|W_0) + \eta. \label{eqwweee}
\end{align} 

Randomly and independently generate $2^{\lfloor nR_{0}\rfloor}$ sequences $w_0^n(m_{0})$ for $m_{0} \in [1:2^{\lfloor{nR_{0}}\rfloor}]$, according to the i.i.d distribution $\Pi_{t=1}^n\mathcal{Q}_{W_0}(w_{0t})$. For each $(m_{1},m_{0}) \in [1:2^{\lfloor nR_{1} \rfloor}] \times [1:2^{\lfloor nR_{0} \rfloor}]$ generate a sequence 
$w_1^n(m_{1},m_{0})$ randomly and conditionally independently according to the i.i.d conditional distribution  $\Pi_{t=1}^n\mathcal{Q}_{W_1|W_0}(w_{1t}|w_{0t}(m_{0}))$. For each $(m_{2},m_{0}) \in [1:2^{\lfloor nR_{2} \rfloor}] \times [1:2^{\lfloor nR_{0} \rfloor}]$ generate a sequence 
$w_2^n(m_{2},m_{0})$ randomly and conditionally independently according to the i.i.d conditional distribution  $\Pi_{t=1}^n\mathcal{Q}_{W_2|W_0}(w_{2t}|w_{0t}(m_{0}))$. 
\\  
Coding algorithm: Encoder $\mathcal{E}$ observes $u^n$ and determines $m_0$ such that $(U^n, W^n_0(m_{0})) \in \mathcal{T}_{\delta}^n(\mathcal{P}_U\mathcal{Q}_{W_0|U})$,  $m_1$ such that $(U^n, W^n_0(m_{0}), W_1^n(m_{1},m_{0})) \in \mathcal{T}_{\delta}^n(\mathcal{P}_U\mathcal{Q}_{W_0W_1|U})$, and $m_2$ such that $(U^n, W^n_0(m_{0}), W_2^n(m_{2},m_{0})) \in \mathcal{T}_{\delta}^n(\mathcal{P}_U\mathcal{Q}_{W_0W_2|U})$. 
If such a jointly typical tuple doesn't exist, the source encoder sets $(m_0,m_{1},m_{2})$ to $(1,1,1)$.
Then, it sends  $(m_{0},m_1)$ to decoder $\mathcal{D}_{1}$, and $(m_{0},m_{2})$ to decoder $\mathcal{D}_{2}$ 
Decoder $\mathcal{D}_{1}$ declares $v_1^n$ and 
decoder $\mathcal{D}_{2}$ declares $v_2^n$ according to $\tau_1$ and $\tau_2$.
\subsection{Analysis of error probability}

We define the following error events
\begin{align}
    \mathcal{F}_{0} =& \{(U^n,W_0^n(m_{0})) \notin \mathcal{T}_{\delta}^n, \;  \forall m_{0}\}, \nonumber \\
    \forall m_0,\quad  \mathcal{F}_{1}(m_0) =& \{(U^n,W^n_0(m_{0}),W_1^n(m_{1},m_0)) \notin \mathcal{T}_{\delta}^n \  \forall m_{1}\},\nonumber \\
    \forall m_0,\quad  \mathcal{F}_{2}(m_0) =& \{(U^n,W^n_0(m_{0}),W_2^n(m_{2},m_0)) \notin \mathcal{T}_{\delta}^n \  \forall m_{2}\}.\nonumber
\end{align}

By the covering lemma \cite[Lemma 3.3]{elgamal},  $\mathcal{P}(\mathcal{F}_{0})$ tends to zero as $n \longrightarrow \infty$ if 
        $R_{0} \geq \  I(U;W_0) + \eta, \label{kkee1}$
    , $\mathcal{P}(\mathcal{F}_{1}(M_0)|\mathcal{F}_{0}^c)$ goes to zero by the covering lemma if 
        $   R_1  \geq \ I(U;W_1|W_0)  +\eta$, and $\mathcal{P}(\mathcal{F}_{2}(M_0)|\mathcal{F}_{0}^c)$ goes to zero by the covering lemma if 
        $   R_2  \geq \ I(U;W_2|W_0)  +\eta.$ 
 The expected probability of error over the codebook being small means that for all $\varepsilon_2 >0$, for all $\eta>0$, there exists $\Bar{\delta}>0$, for all $\delta\leq\Bar{\delta}$, there exists $\Bar{n} \in \mathbb{N}$ such that for all $n \geq \Bar{n}$ we have: \begin{align}
           \mathbb{E}\big[\mathcal{P}(\mathcal{F}_{0})\big] \leq& \varepsilon_2, \label{eqarr} \\ 
           \mathbb{E}\big[\mathcal{P}(\mathcal{F}_{1}(m_0)|\mathcal{F}_{0}^c)\big] \leq& \varepsilon_2,
           \label{eqarrr} \\
           \mathbb{E}\big[\mathcal{P}(\mathcal{F}_{2}(m_0)|\mathcal{F}_{0}^c)\big] \leq& \varepsilon_2,
           \label{eqarrrr} 
       \end{align}

\subsection{Analysis of Bayes-Nash Equilibria} 

In this section, we break down the Bayesian game $G^{\sigma}(M_0,M_1,M_2,V^n_{1},V^n_{2})$ played by blocks of $n$-sequences, into several games of stage $t$, for $t\in\{1,2,...,n\}$ that differ in the decoders' types and actions until we reach the single-letter game characterized with auxiliary random variables as illustrated in Fig.\ref{fig:chaingames}. We analyze the different Bayes-Nash equilibria of these Bayesian games. In order to do so, we need to control the beliefs of each decoder about the state and about the type of the other decoder.

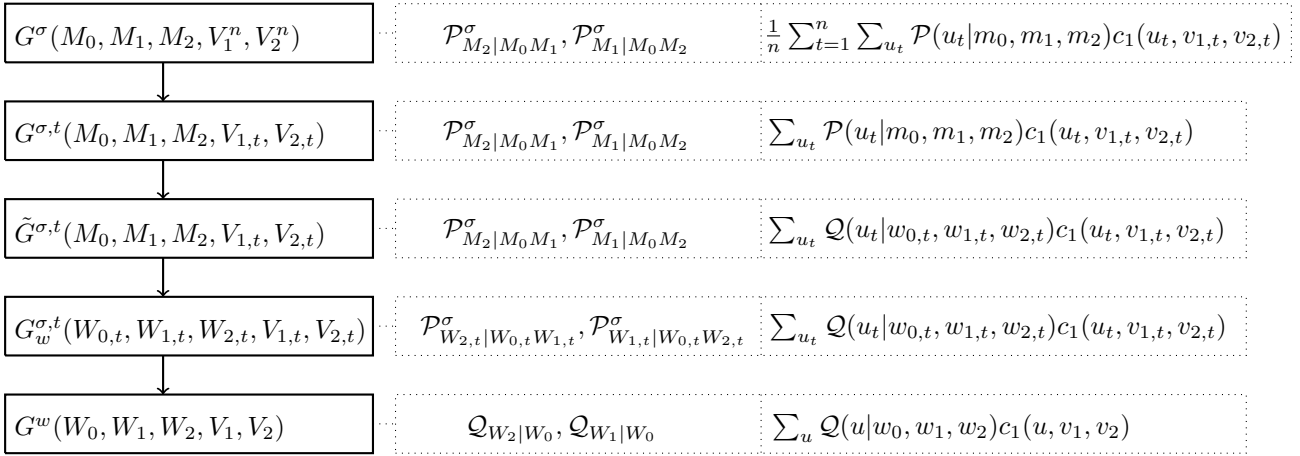
\begin{figure*}[ht]
    \centering
       \begin{tikzpicture}[xscale=3,yscale=1.3]
\draw[thick,-](0,0)--(1.63,0)--(1.63,0.6)--(0,0.6)--(0,0);
\node[right, black] at (0,0.25) {$G^{w}(W_{0},W_{1},W_{2},V_{1},V_{2})$};
\draw[thick,->](0.7,1)--(0.7,0.6);
\draw[dotted,-](1.63,0.3)--(1.73,0.3);
\draw[dotted,-](1.73,0)--(5.5,0)--(5.5,0.6)--(1.73,0.6)--(1.73,0);
\draw[dotted,-](3.35,0.6)--(3.35,0);
\node[right, black] at (2,0.25) {$\mathcal{Q}_{W_{2}|W_{0}},\mathcal{Q}_{W_{1}|W_{0}} \ \ \ \ \ \ \ \ \ \ \ \  \sum_{u}\mathcal{Q}(u|w_{0},w_{1},w_{2})c_1(u,v_{1},v_{2})$};
\draw[thick,-](0,1)--(1.63,1)--(1.63,1.6)--(0,1.6)--(0,1);
\node[right, black] at (0,1.25) {$G^{\sigma,t}_w(W_{0,t},W_{1,t},W_{2,t},V_{1,t},V_{2,t})$};
\draw[thick,->](0.7,2)--(0.7,1.6);
\draw[dotted,-](1.63,1.3)--(1.73,1.3);
\draw[dotted,-](1.73,1)--(5.5,1)--(5.5,1.6)--(1.73,1.6)--(1.73,1);
\draw[dotted,-](3.35,1.6)--(3.35,1);
\node[right, black] at (1.8,1.25) {$\mathcal{P}^{\sigma}_{W_{2,t}|W_{0,t}W_{1,t}},\mathcal{P}^{\sigma}_{W_{1,t}|W_{0,t}W_{2,t}} \ \ \sum_{u_t}\mathcal{Q}(u_t|w_{0,t},w_{1,t},w_{2,t})c_1(u_t,v_{1,t},v_{2,t})$};
\draw[thick,-](0,2)--(1.63,2)--(1.63,2.6)--(0,2.6)--(0,2);
\node[right, black] at (0,2.25) {$\Tilde{G}^{\sigma,t}(M_{0},M_{1},M_{2},V_{1,t},V_{2,t})$};
\draw[thick,->](0.7,3)--(0.7,2.6);
\draw[dotted,-](1.63,2.3)--(1.73,2.3);
\draw[dotted,-](1.73,2)--(5.5,2)--(5.5,2.6)--(1.73,2.6)--(1.73,2);
\draw[dotted,-](3.35,2.6)--(3.35,2);
\node[right, black] at (1.9,2.25) {$\mathcal{P}^{\sigma}_{M_{2}|M_{0}M_1},\mathcal{P}^{\sigma}_{M_{1}|M_{0}M_2} \ \ \ \ \ \ \ \ \sum_{u_t}\mathcal{Q}(u_t|w_{0,t},w_{1,t},w_{2,t})c_1(u_t,v_{1,t},v_{2,t})$};
\draw[thick,-](0,3)--(1.63,3)--(1.63,3.6)--(0,3.6)--(0,3);
\node[right, black] at (0,3.25) {$G^{\sigma,t}(M_{0},M_{1},M_{2},V_{1,t},V_{2,t})$};
\draw[thick,->](0.7,4)--(0.7,3.6);
\draw[dotted,-](1.63,3.3)--(1.73,3.3);
\draw[dotted,-](1.73,3)--(5.5,3)--(5.5,3.6)--(1.73,3.6)--(1.73,3);
\draw[dotted,-](3.35,3.6)--(3.35,3);
\node[right, black] at (1.9,3.25) {$\mathcal{P}^{\sigma}_{M_{2}|M_{0}M_1},\mathcal{P}^{\sigma}_{M_{1}|M_{0}M_2} \ \ \ \ \ \ \ \ \sum_{u_t}\mathcal{P}(u_t|m_0,m_1,m_2)c_1(u_t,v_{1,t},v_{2,t})$};
\draw[thick,-](0,4)--(1.63,4)--(1.63,4.6)--(0,4.6)--(0,4);
\node[right, black] at (0,4.25) {$G^{\sigma}(M_{0},M_{1},M_{2},V^n_{1},V^n_{2})$};
\draw[dotted,-](1.63,4.3)--(1.73,4.3);
\draw[dotted,-](1.73,4)--(5.7,4)--(5.7,4.6)--(1.73,4.6)--(1.73,4);
\draw[dotted,-](3.35,4.6)--(3.35,4);
\node[right, black] at (1.9,4.25) {$\mathcal{P}^{\sigma}_{M_{2}|M_{0}M_1},\mathcal{P}^{\sigma}_{M_{1}|M_{0}M_2} \ \ \ \ \ \ \ \ \frac{1}{n}\sum_{t=1}^n\sum_{u_t}\mathcal{P}(u_t|m_0,m_1,m_2)c_1(u_t,v_{1,t},v_{2,t})$};
    \end{tikzpicture}
    \caption{Chain of the Bayesian Games Played among the Decoders for Achievability.}
    \label{fig:chaingames}
\end{figure*}

\begin{definition}
 For each encoding $\sigma$, the finite Bayesian game  $G^{\sigma,t}(M_0,M_1,M_2,V_{1,t},V_{2,t})$ at stage $t$, for $t\in\{1,2,...n\}$
  \ consists of:
 \begin{itemize}
     \item the decoders $\mathcal{D}_i, i \in \{1,2\}$ as the players of the game,
     \item $\mathcal{V}_{i,t}$ is the set of action sequences of $\mathcal{D}_i$,
          \item  $(M_0,M_i)$ 
          is the type of decoder $\mathcal{D}_i$,
      \item $\tau_{i,t}: \ (\{1,2,..2^{\lfloor nR_0 \rfloor}\}\times\{1,2,..2^{\lfloor nR_i\rfloor}\}) 
\mapsto \Delta(\mathcal{V}_{i,t})$ 
is a behavior strategy of decoder $\mathcal{D}_i$,
 \item the belief of decoder $\mathcal{D}_1$ (resp. $\mathcal{D}_2$) over the type of decoder $\mathcal{D}_{2}$ (resp. $\mathcal{D}_1$) is given by $\mathcal{P}^{\sigma}_{M_2|M_0M_1}$ (resp. $\mathcal{P}^{\sigma}_{M_{1}|M_0M_2}$). 
     \item $C_i^{\sigma}: \{1,2,..2^{\lfloor nR_0 \rfloor}\}\times\{1,2,..2^{\lfloor nR_1\rfloor}\} \times\{1,2,..2^{\lfloor nR_2\rfloor}\}\times \mathcal{V}_{1,t} \times \mathcal{V}_{2,t} \mapsto \mathbb{R} $ is the $\sigma$-cost function of $\mathcal{D}_i$ at stage $t$ such that \begin{align} 
     &C_i^{\sigma}(m_0,m_1,m_2,v_{1,t},v_{2,t})=\sum_{u_t}\mathcal{P}^{\sigma}(u_t|m_0,m_1,m_2)\times \nonumber\\ &c_i(u_t,v_{1,t},v_{2,t}), \ \forall v_{1,t},v_{2,t},m_0,m_1,m_2. \nonumber
 \end{align}
  \item for a fixed strategy profile $(\tau_{1,t},\tau_{2,t})$, the expected $\sigma$-costs $\Psi^{\sigma}_1(\tau_{1,t},\tau_{2,t},m_0,m_1)$ of $\mathcal{D}_1$ at stage $t$ with type $(m_0,m_1)$ is given by
\begin{align}
&\Psi^{\sigma}_1(\tau_{1,t},\tau_{2,t},m_0,m_1)=\sum_{m_{2}}\mathcal{P}^{\sigma}(m_{2}|m_0,m_1)\times \nonumber \\ &\sum_{v_{1,t},v_{2,t}}\mathcal{P}^{\tau_{1,t}}(v_{1,t}|m_0,m_1)\mathcal{P}^{\tau_{2,t}}(v_{2,t}|m_0,m_{2})\times \nonumber \\&C_1^{\sigma}(v_{1,t},v_{2,t},m_0,m_1,m_{2}). \nonumber
\end{align}
Similarly, $\Psi^{\sigma}_2(\tau_{1,t},\tau_{2,t},m_0,m_2)$ can be defined.
 \end{itemize}
 \end{definition}

For each encoding strategy $\sigma$ and stage $t$, we define the set $BNE(\sigma,t)$ of Bayes-Nash equilibria $(\tau_{1,t},\tau_{2,t})$ of $G^{\sigma,t}(M_0,M_1,M_2,V_{1,t},V_{2,t})$ as follows
\begin{align}
    &BNE(\sigma,t)=\{(\tau_{1,t},\tau_{2,t}), \nonumber \\ &\Psi^{\sigma}_1(\tau_{1,t},\tau_{2,t},w_0,w_1)\leq  \Psi^{\sigma}_1(\tilde{\tau}_{1,t},\tau_{2,t},w_0,w_1) \forall \  \tilde{\tau}_{1,t}, w_0, w_1  \nonumber \\  \nonumber \quad &\Psi^{\sigma}_2(\tau_{1,t},\tau_{2,t},w_0,w_2)
\leq \Psi^{\sigma}_2(\tau_{1,t},\tilde{\tau}_{2,t},w_0,w_2) \forall \  \tilde{\tau}_{2,t}, w_0, w_2 
    \}. \nonumber
\end{align}
The following lemma shows that every Bayes-Nash equilibrium of the game $G^{\sigma}(M_0,M_1,M_2,V^n_{1},V^n_{2})$ played by blocks of $n$-sequences induces an equilibrium of the game  $G^{\sigma,t}(M_0,M_1,M_2,V_{1,t},V_{2,t})$ at stage $t$, for $t\in\{1,2,...n\}$. The converse is also true. 
\begin{lemma} \label{lemkjh}
1. If $(\tau_1,\tau_2) \in BNE(\sigma)$, then $(\tau_{1,t},\tau_{2,t}) \in BNE(\sigma,t)$ for all $t \in \{1,2,...,n\}$.\\
2. If $(\tau_{1,t},\tau_{2,t}) \in BNE(\sigma,t)$ for all $t \in \{1,2,...,n\}$, then $(\prod_{t=1}^n\tau_{1,t},\prod_{t=1}^n\tau_{2,t}) \in BNE(\sigma)$.
\end{lemma}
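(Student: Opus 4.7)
The plan is to reduce both directions to a single stage-wise decomposition of the block payoff. The key observation is that the per-letter cost $c_i(u_t,v_{1,t},v_{2,t})$ only involves the $t$-th symbols; once I average $C_i^{\sigma}$ against $\mathcal{P}^{\tau_1}(v_1^n|m_0,m_1)\mathcal{P}^{\tau_2}(v_2^n|m_0,m_2)$, only the stage-$t$ marginals $\tau_{i,t}(v_{i,t}|m_0,m_i)=\sum_{v_i^{n\setminus t}}\tau_i(v_i^n|m_0,m_i)$ of the block strategies remain. By linearity of expectation this yields the identity
\begin{align*}
\Psi_i^{\sigma}(\tau_1,\tau_2,m_0,m_i)=\frac{1}{n}\sum_{t=1}^{n}\Psi_{i,t}^{\sigma}(\tau_{1,t},\tau_{2,t},m_0,m_i),
\end{align*}
where each stage payoff $\Psi_{i,t}^{\sigma}$ uses exactly the same posterior belief $\mathcal{P}^{\sigma}_{M_{3-i}|M_0M_i}$ as the block payoff. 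This decomposition is the engine of both implications.

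For the direct implication, assume $(\tau_1,\tau_2)\in BNE(\sigma)$ and argue by contradiction: suppose that, at some stage $t^{\star}$, $(\tau_{1,t^{\star}},\tau_{2,t^{\star}})\notin BNE(\sigma,t^{\star})$; without loss of generality, some deviation $\tilde\tau_{1,t^{\star}}$ strictly lowers $\Psi_{1,t^{\star}}^{\sigma}$ at a type $(m_0^{\star},m_1^{\star})$. I lift this into a block deviation $\tilde\tau_1$ that, at the flagged type $(m_0^{\star},m_1^{\star})$, samples coordinate $t^{\star}$ from $\tilde\tau_{1,t^{\star}}$ and the remaining coordinates independently from the marginals $\tau_{1,s}$, $s\neq t^{\star}$, and that agrees with $\tau_1$ on every other type. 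Its stage-$s$ marginal equals $\tau_{1,s}$ for $s\neq t^{\star}$ and equals $\tilde\tau_{1,t^{\star}}$ at $s=t^{\star}$, so by the decomposition
\begin{align*}
\Psi_1^{\sigma}(\tilde\tau_1,\tau_2,m_0^{\star},m_1^{\star})-\Psi_1^{\sigma}(\tau_1,\tau_2,m_0^{\star},m_1^{\star})=\frac{1}{n}\bigl[\Psi_{1,t^{\star}}^{\sigma}(\tilde\tau_{1,t^{\star}},\tau_{2,t^{\star}},m_0^{\star},m_1^{\star})-\Psi_{1,t^{\star}}^{\sigma}(\tau_{1,t^{\star}},\tau_{2,t^{\star}},m_0^{\star},m_1^{\star})\bigr]<0,
\end{align*}
contradicting $\tau_1\in BR_1^{\sigma}(\tau_2)$. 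The argument for decoder $\mathcal{D}_2$ is symmetric.

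For the converse, set $\tau_i^{\star}=\prod_{t}\tau_{i,t}$, a product strategy whose stage-$t$ marginal is exactly $\tau_{i,t}$. For an arbitrary block deviation $\tilde\tau_1$ with marginals $\tilde\tau_{1,t}$, the decomposition and the termwise stage BNE hypothesis give
\begin{align*}
\Psi_1^{\sigma}(\tilde\tau_1,\tau_2^{\star},m_0,m_1)=\frac{1}{n}\sum_{t=1}^{n}\Psi_{1,t}^{\sigma}(\tilde\tau_{1,t},\tau_{2,t},m_0,m_1)\geq\frac{1}{n}\sum_{t=1}^{n}\Psi_{1,t}^{\sigma}(\tau_{1,t},\tau_{2,t},m_0,m_1)=\Psi_1^{\sigma}(\tau_1^{\star},\tau_2^{\star},m_0,m_1),
\end{align*}
so $\tau_1^{\star}\in BR_1^{\sigma}(\tau_2^{\star})$; the same computation for decoder $\mathcal{D}_2$ yields $(\tau_1^{\star},\tau_2^{\star})\in BNE(\sigma)$. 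The only subtlety, and it is mild, is recognizing that the block payoff is insensitive to temporal correlations in $\tau_i$ because the additive Cesàro cost couples only symbols sharing a time index; once this is noted, both directions are one-line consequences of the decomposition, with no information-theoretic machinery required.
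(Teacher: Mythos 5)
Your proposal is correct and takes essentially the same route as the paper: both arguments rest on the stage-wise decomposition of the block payoff (the cost is additive over time and only the stage-$t$ marginals of the block strategies matter), and both transfer deviations between the block game $G^{\sigma}$ and the stage games $G^{\sigma,t}$ by contradiction/termwise comparison. Your write-up simply makes explicit what the paper leaves implicit — the identity $\Psi_i^{\sigma}(\tau_1,\tau_2,m_0,m_i)=\frac{1}{n}\sum_{t=1}^{n}\Psi_i^{\sigma}(\tau_{1,t},\tau_{2,t},m_0,m_i)$ and the concrete lifting of a profitable stage deviation into a block deviation — so it is, if anything, a tighter version of the paper's sketch.
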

\begin{proof}
Given $\sigma$, let $(\tau_1,\tau_2) \in BNE(\sigma)$. Assume that there exists $t$ such that $(\tau_{1,t},\tau_{2,t}) \notin BNE(\sigma,t)$. This means that at $t$, at least one of the decoders is better off if it deviates from its strategy. Without loss of generality, assume $\mathcal{D}_1$ deviates to $\Tilde{\tau}_{1,t}$ and selects $\Tilde{v}_{1,t}$ accordingly. This shifts the action sequence $v_1^n$ that corresponds to $\tau_1$, to $\Tilde{v}_1^n=(v_{1,1},v_{1,2},...,\Tilde{v}_{1,t},...,v_{1,n})$. Thus $\tau_1 \notin BR^{\sigma}_1(\tau_2)$, and $(\tau_1,\tau_2) \notin BNE(\sigma)$.\\ 
    Conversely, if  $(\tau_{1,t},\tau_{2,t}) \in BNE(\sigma,t)$ for all $t \in \{1,2,...,n\}$, we define $(\tau_1,\tau_2)$ such that 
\begin{align}
    \mathcal{P}^{\tau_1}(v_1^n|m_0,m_1)=&\prod_{t=1}^n\mathcal{P}^{\tau_{1,t}}(v_{1,t}|m_{0},m_{1}), \quad \forall v_1^n,m_0,m_1 \\
     \mathcal{P}^{\tau_2}(v_2^n|m_0,m_2)=&\prod_{t=1}^n\mathcal{P}^{\tau_{2,t}}(v_{2,t}|m_{0},m_2), \quad \forall v_2^n,m_0,m_2.
\end{align}
Suppose that $(\tau_1,\tau_2) \notin BNE(\sigma)$. 
Without loss of generality, assume  $\tau_1 \notin BR_1^{\sigma}(\tau_2)$, i.e there exists $\Tilde{\tau}_1 \in BR_1(\sigma)$ such that $\Psi^{\sigma}_1(\tau_1,\tau_2,m_0,m_1) \geq \Psi^{\sigma}_1(\tilde{\tau_1},\tau_2,m_0,m_1)\ \forall m_0,m_1$. Therefore, there exists $t \in \{1,...,n\}$ such that $\Psi^{\sigma}_1(\tau_{1,t},\tau_{2,t},m_0,m_1) \geq \Psi^{\sigma}_1(\Tilde{\tau}_{1,t},\tau_{2,t},m_0,m_1) \ \forall m_0,m_1$. Thus, $(\tau_{1,t},\tau_{2,t}) \notin BNE(\sigma,t)$ which leads to the desired contradiction.  
\end{proof}

\begin{lemma} \label{corollary}
For all $\sigma$, we have
\begin{align}
  &\underset{(\tau_{1},\tau_{2}) \atop \in BNE(\sigma)}{\max}\mathbb{E}[\sum_{t=1}^n\frac{1}{n}c_e(U_t,V_{1,t},V_{2,t})] = \nonumber \\  &\frac{1}{n}\sum_{t=1}^n\underset{(\tau_{1,t},\tau_{2,t}) \atop \in BNE(\sigma,t)}{\max}\mathbb{E}[c_e(U_t,V_{1,t},V_{2,t})].
\end{align}
\end{lemma}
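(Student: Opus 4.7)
The plan is to decompose the stated equality into the two opposing inequalities and exploit the equivalence between block and stagewise Bayes-Nash equilibria established in Lemma \ref{lemkjh}. Fix an encoding strategy $\sigma$. By linearity of expectation, for any pair of behavior strategies $(\tau_{1},\tau_{2})$,
\[
\mathbb{E}\Bigl[\frac{1}{n}\sum_{t=1}^n c_e(U_t,V_{1,t},V_{2,t})\Bigr] = \frac{1}{n}\sum_{t=1}^n \mathbb{E}\bigl[c_e(U_t,V_{1,t},V_{2,t})\bigr],
\]
and each summand depends on $(\tau_{1},\tau_{2})$ only through the stage-$t$ marginal kernels $(\tau_{1,t},\tau_{2,t})$.

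For the direction $\leq$, I would take an arbitrary $(\tau_{1},\tau_{2})\in BNE(\sigma)$. Lemma \ref{lemkjh}.1 implies $(\tau_{1,t},\tau_{2,t})\in BNE(\sigma,t)$ for every $t\in\{1,\dots,n\}$, so the $t$-th expectation is upper bounded by the stagewise maximum over $BNE(\sigma,t)$. Averaging in $t$ and then taking the supremum over $(\tau_{1},\tau_{2})\in BNE(\sigma)$ yields the desired inequality.

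For the reverse direction, for each $t$ I would fix a maximizer $(\tau^{\star}_{1,t},\tau^{\star}_{2,t})\in BNE(\sigma,t)$; existence follows because $BNE(\sigma,t)$ is the nonempty compact set of Bayes-Nash equilibria of a finite Bayesian game and the stagewise expected cost is continuous in the behavior kernels. I would then define product block strategies by
\[
\mathcal{P}^{\tau^{\star}_1}(v_1^n|m_0,m_1) = \prod_{t=1}^n \mathcal{P}^{\tau^{\star}_{1,t}}(v_{1,t}|m_0,m_1),
\]
and analogously for $\tau^{\star}_2$. Lemma \ref{lemkjh}.2 guarantees $(\tau^{\star}_{1},\tau^{\star}_{2})\in BNE(\sigma)$. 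Since the $t$-th marginal of a product kernel coincides with its $t$-th factor, the joint distribution of $(U_t,V_{1,t},V_{2,t})$ under $(\tau^{\star}_{1},\tau^{\star}_{2})$ equals the one induced by $(\tau^{\star}_{1,t},\tau^{\star}_{2,t})$, so the block expected cost equals the right-hand side by construction, which in turn lower bounds the left-hand side.

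The main conceptual point is that the interchange of maximum and average is tight here because the stagewise best-response conditions defining $BNE(\sigma,t)$ decouple across $t$: the beliefs $\mathcal{P}^{\sigma}_{M_{2}|M_0M_1}$ and $\mathcal{P}^{\sigma}_{M_{1}|M_0M_2}$ entering the single-stage expected costs do not depend on $t$, so products of stagewise equilibria remain equilibria at the block level. The only mildly technical verification is the marginal-matching identity that lets the product construction achieve all $n$ stagewise maxima simultaneously, and this is a direct computation once Lemma \ref{lemkjh}.2 is applied.
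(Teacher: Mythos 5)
Your proof is correct and takes essentially the same route as the paper's: both inequalities are obtained from Lemma \ref{lemkjh}, with the block cost decomposed by linearity of expectation for the $\leq$ direction and a product of stagewise equilibrium maximizers assembled into a block equilibrium for the $\geq$ direction. Your explicit remark on marginal matching of the product kernels is a slightly more careful statement of what the paper uses implicitly, but the argument is the same.
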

\begin{proof}
Let $\sigma$ be given. We will show the equality by showing double inequalities. Let $(\tau_1,\tau_2) \in BNE(\sigma)$ be arbitrarily chosen, and for all $t \in \{1,2,...,n\}$, let $(\tau_{1,t},\tau_{2,t}) \in BNE(\sigma,t)$ the corresponding equilibrium strategy pair for the game of stage $t$. Therefore, 
\begin{align}
  &\mathbb{E}_{\sigma,\tau_1,\tau_2}[\sum_{t=1}^n\frac{1}{n}c_e(U_t,V_{1,t},V_{2,t})] = \nonumber \\ 
   &\frac{1}{n}\sum_{t=1}^n\mathbb{E}_{\sigma,\tau_1,\tau_2}[c_e(U_t,V_{1,t},V_{2,t})] \leq \nonumber \\
   &\frac{1}{n}\sum_{t=1}^n\mathbb{E}_{\sigma,\tau_{1,t},\tau_{2,t}}[c_e(U_t,V_{1,t},V_{2,t})] \leq \label{www1} \\&\frac{1}{n}\sum_{t=1}^n\underset{(\tau_{1,t},\tau_{2,t}) \atop \in BNE(\sigma,t)}{\max}\mathbb{E}[c_e(U_t,V_{1,t},V_{2,t})].\nonumber
\end{align}
where \eqref{www1} follows since $(\tau_1,\tau_2) \in BNE(\sigma)$. Since this is true for all $(\tau_1,\tau_2) \in BNE(\sigma)$, then \begin{align}&\underset{(\tau_{1},\tau_{2}) \atop \in BNE(\sigma)}{\max}\mathbb{E}[\sum_{t=1}^n\frac{1}{n}c_e(U_t,V_{1,t},V_{2,t})]\nonumber \\ &\leq \frac{1}{n}\sum_{t=1}^n\underset{(\tau_{1,t},\tau_{2,t}) \atop \in BNE(\sigma,t)}{\max}\mathbb{E}[c_e(U_t,V_{1,t},V_{2,t})]. 
\end{align} Similarly, let $(\tau_{1,t},\tau_{2,t}) \in BNE(\sigma,t)$ be arbitrarily chosen for each $t \in \{1,2,...,n\}$. We have,
\begin{align}
    &\frac{1}{n}\cdot\sum_{t=1}^n\mathbb{E}_{\sigma,\tau_{1,t},\tau_{2,t}}[c_e(U_t,V_{1,t},V_{2,t})] \nonumber =\\&\frac{1}{n} \cdot\mathbb{E}_{\sigma,\tau_{1,t},\tau_{2,t}}[\sum_{t=1}^n c_e(U_t,V_{1,t},V_{2,t})] \nonumber \leq \\ 
    &\frac{1}{n}\mathbb{E}_{\sigma,\tau_1,\tau_2}[\sum_{t=1}^n c_e(U_t,V_{1,t},V_{2,t})] \nonumber \leq \label{www2}\\
    &\underset{(\tau_{1},\tau_{2}) \atop \in BNE(\sigma)}{\max}\mathbb{E}[\sum_{t=1}^n\frac{1}{n}c_e(U_t,V_{1,t},V_{2,t})].
\end{align}
where \eqref{www2} follows since $(\tau_{1,t},\tau_{2,t}) \in BNE(\sigma,t)$ for all $t \in \{1,2,...,n\}$. Since this is true for all  $(\tau_{1,t},\tau_{2,t}) \in BNE(\sigma,t)$, then $\underset{(\tau_{1},\tau_{2}) \atop \in BNE(\sigma)}{\max}\mathbb{E}[\sum_{t=1}^n\frac{1}{n}c_e(U_t,V_{1,t},V_{2,t})] \geq \frac{1}{n}\sum_{t=1}^n\underset{(\tau_{1,t},\tau_{2,t}) \atop \in BNE(\sigma,t)}{\max}\mathbb{E}[c_e(U_t,V_{1,t},V_{2,t})].$ This concludes the proof.
\end{proof}

We introduce the indicator of error events $E_{\delta} \in \{0,1\}$ defined as follows
\begin{align}\label{yyttyytt}
  E_{\delta}=&\begin{cases}
    1, & \text{if $(u^n,w_1^n,w_2^n,w_0^n) \notin \mathcal{T}_{\delta}^n
    $}.\\
    0, & \text{otherwise}. 
  \end{cases} 
\end{align}
We control the Bayesian belief of decoder $\mathcal{D}_1$ (resp. $\mathcal{D}_2$) about the type of $\mathcal{D}_2$ (resp. $\mathcal{D}_1$). Let $\mathcal{P}_{W_{2,t}}^{w_0^n,w_1^n} \in \Delta(\mathcal{W}_2)$ denote $\mathcal{P}_{W_{2,t}|W_0^n,W_1^n}(.|w_0^n,w_1^n)$ and $\mathcal{P}_{W_{1,t}}^{w_0^n,w_2^n} \in \Delta(\mathcal{W}_1)$ denote $\mathcal{P}_{W_{1,t}|W_0^n,W_2^n}(.|w_0^n,w_2^n)$. In a similar fashion, we denote by $\mathcal{Q}_{W_{2}}^{w_{0},w_{1}}$ and $\mathcal{Q}_{W_{1}}^{w_{0},w_{2}}$ the distributions $\mathcal{Q}_{W_{2,t}|W_{0,t},W_{1,t}}(.|w_{0},w_{1})$ and $\mathcal{Q}_{W_{1,t}|W_{0,t},W_{2,t}}(.|w_{0},w_{2})$ respectively.
\begin{lemma}\label{lemm134} For all $w^n_0,w^n_1,w_2^n,w_0,w_1,w_2$, we have
\begin{align}
&\lim_{n\mapsto \infty}\mathbb{E} \Big[\frac{1}{n}\sum_{t=1}^n D(\mathcal{P}_{W_{2,t}}^{w_0^n,w_1^n} || \mathcal{Q}_{W_{2}}^{w_{0},w_{1}}) \Big|E_{\delta}=0\Big] =0,\label{lkjlkj}\\
&\lim_{n\mapsto \infty}\mathbb{E} \Big[\frac{1}{n}\sum_{t=1}^n D(\mathcal{P}_{W_{1,t}}^{w^n_0,w_2^n}|| \mathcal{Q}_{W_{1}}^{w_{0},w_{2}}) \Big|E_{\delta}=0\Big] =0.
\end{align}
\end{lemma}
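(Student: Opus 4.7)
The plan is to reduce the time-averaged single-letter KL divergence to a joint conditional KL divergence and then estimate the latter using properties of the random codebook and jointly typical sequences. Since the two statements of the lemma are symmetric under exchanging the roles of decoders~$1$ and~$2$, I would treat only the first equation and obtain the second one by swapping indices.

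First, for every fixed $(w_0^n, w_1^n)$, the chain rule of KL divergence combined with the convexity of $D(\,\cdot\,\|\,Q)$ in its first argument gives
\begin{equation*}
\sum_{t=1}^n D\bigl(\mathcal{P}_{W_{2,t}}^{w_0^n, w_1^n} \,\big\|\, \mathcal{Q}_{W_2}^{w_{0,t}, w_{1,t}}\bigr) \leq D\!\left(\mathcal{P}_{W_2^n | w_0^n, w_1^n} \,\bigg\|\, \prod_{t=1}^n \mathcal{Q}_{W_2 | W_{0,t}, W_{1,t}}\right).
\end{equation*}
Taking the conditional expectation given $E_\delta = 0$ and dividing by $n$ reduces the problem to showing that $\frac{1}{n}\,\mathbb{E}\bigl[D(\mathcal{P}_{W_2^n | W_0^n, W_1^n} \,\|\, \prod_t \mathcal{Q}_{W_2|W_{0,t},W_{1,t}}) \mid E_\delta = 0\bigr] \to 0$ as $n \to \infty$.

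Second, I would expand this joint KL divergence via the identity $D(P\|Q) = -H(P) - \mathbb{E}_P[\log Q]$ into a cross-entropy term and a conditional entropy term. On the typical event $\{E_\delta = 0\}$, the empirical joint distribution of $(U_t, W_{0,t}, W_{1,t}, W_{2,t})$ is within $\delta$ of $\mathcal{P}_U \mathcal{Q}_{W_0 W_1 W_2 | U}$ by the typical sequence lemma, so the cross-entropy $-\frac{1}{n}\sum_t \mathbb{E}[\log \mathcal{Q}(W_{2,t} | W_{0,t}, W_{1,t}) \mid E_\delta = 0]$ converges to the single-letter target $H_{\mathcal{Q}}(W_2 \mid W_0, W_1)$ up to a vanishing error in $\delta$. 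In parallel, the conditional entropy $\frac{1}{n} H(W_2^n \mid W_0^n, W_1^n, E_\delta = 0)$ tends to the same limit: given typical $(W_0^n, W_1^n)$, the posterior over the source $U^n$ concentrates on sequences with empirical joint distribution close to $\mathcal{P}_U \mathcal{Q}_{W_0 W_1 | U}$, and since $W_2^n$ is the codeword selected by joint typicality with $(U^n, W_0^n)$ while the Markov chain $W_1 - (U, W_0) - W_2$ holds at the single-letter target, the per-letter conditional entropy matches the target up to $O(\delta)$. The rate conditions $R_0 \geq I(U; W_0) + \eta$ and $R_0 + R_2 \geq I(U; W_0, W_2) + \eta$ from \eqref{eqwwe} and \eqref{eqwweee} ensure the covering argument succeeds. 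Subtracting the two estimates yields an $O(\varepsilon(\delta))$ bound on the joint KL divergence per symbol, where $\varepsilon(\delta) \to 0$ as $\delta \to 0$.

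The main obstacle is the conditional entropy estimate: standard soft-covering/resolvability arguments give control of unconditional output distributions, but here the conditioning involves the correlated codeword $W_1^n$ and one must carefully use the Markov structure $W_1 - (U, W_0) - W_2$ built into the codebook to show that marginalizing out $U^n$ over its posterior given $(W_0^n, W_1^n)$ preserves typicality and reduces the per-letter conditional entropy to the single-letter target $H_{\mathcal{Q}}(W_2 \mid W_0, W_1)$. A subsidiary technical point is ensuring that when the divergence is evaluated against the target $\mathcal{Q}_{W_2 \mid W_0, W_1}$, the support condition is satisfied; this is where restricting to distributions in $\tilde{\mathbb{Q}}_0(R_0, R_1, R_2)$ with strictly positive kernels (as in the definition of essential equilibria) becomes crucial, ruling out the $+\infty$ case of the KL divergence.
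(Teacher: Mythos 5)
Your reduction in the first step is where the argument breaks. The inequality $\sum_{t=1}^n D(\mathcal{P}_{W_{2,t}}^{w_0^n,w_1^n}\,\|\,\mathcal{Q}_{W_2}^{w_{0,t},w_{1,t}})\le D(\mathcal{P}_{W_2^n|w_0^n,w_1^n}\,\|\,\prod_t\mathcal{Q}_{W_2|W_{0,t},W_{1,t}})$ is valid, but the right-hand side does \emph{not} vanish per symbol in general, so it cannot prove the lemma. The culprit is your claim that $\tfrac1n H(W_2^n\mid W_0^n,W_1^n,E_\delta=0)\to H_{\mathcal{Q}}(W_2\mid W_0,W_1)$: given $M_0$, the sequence $W_2^n(M_0,M_2)$ ranges over an inner codebook of at most $2^{\lfloor nR_2\rfloor}$ codewords, so this conditional entropy is capped by roughly $R_2+O(\eta)=I(U;W_2|W_0)+O(\eta)$, and $I(U;W_2|W_0)<H_{\mathcal{Q}}(W_2|W_0,W_1)$ whenever $H(W_2|U,W_0)>I(W_1;W_2|W_0)$ (e.g.\ a noisy test channel $\mathcal{Q}_{W_2|UW_0}$ with $W_1$ carrying little information about $W_2$; take $W_0,W_1$ nearly constant and $\mathcal{Q}_{W_2|U}$ noisy, so $I(U;W_2)$ is small while $H(W_2)$ is large). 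In that regime the joint divergence per symbol is bounded below by about $H_{\mathcal{Q}}(W_2|W_0,W_1)-I(U;W_2|W_0)-O(\eta+\delta)>0$. This is not a fixable technicality of your soft-covering step: the $n$-letter posterior genuinely is \emph{not} close to the memoryless target, because it lives on exponentially fewer sequences. The per-letter statement of the lemma is exactly the kind of assertion that survives this, since the marginal at each $t$ averages over exponentially many candidate codewords; it does not follow from, and is strictly weaker than, joint closeness to the product.

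The paper's proof therefore works directly at the per-letter level rather than through a joint divergence. It shows that, given $w_0^n$ and on the no-error event, the posterior over the inner codebook is essentially uniform, $\mathcal{P}(w_2^n|w_0^n)\approx 2^{-nI(U;W_2|W_0)}=2^{-n(R_2-\eta)}$, so that $\mathcal{P}_{W_{2,t}|W_0^n}(w_2|w_0^n)\approx |\mathcal{A}_2^{t}(w_2|m_0)|/|\mathcal{C}(m_0)|$, the fraction of inner codewords whose $t$-th symbol is $w_2$; by the law of large numbers over the random codebook (the sets $\mathcal{A}_2^t(\cdot|m_0)$ and $\mathcal{J}(m_0)$) this fraction is within $\delta$ of $\mathcal{Q}_{W_2|W_0}(w_2|w_{0,t})$ for all but a vanishing fraction of indices $t$, which yields the vanishing per-letter KL divergence. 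If you want to repair your write-up, you must replace the joint-KL bound with an argument of this counting/equidistribution type (or some other argument that exploits averaging over the codebook at each fixed $t$); your observations about the symmetric second claim and about full support of the target conditional (restriction to $\tilde{\mathbb{Q}}_0$) are fine but peripheral to this gap.
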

\begin{proof}
The proof of Lemma \ref{lemm134} is stated in App. \ref{battod1}. 
\end{proof}

We denote the Bayesian posterior beliefs $\mathcal{P}^{\sigma}_{U_t|M_1M_2M_0}(\cdot|m_1,m_2,m_0)\in\Delta(\mathcal{U})$ 
by $\mathcal{P}^{m_1,m_2,m_0}_{U_t}$ 
, and by $\mathcal{Q}_{U}^{w_{1}w_2w_{0}}$ 
the single-letter belief $\mathcal{Q}_{U|W_{1}W_2W_{0}}(\cdot|w_1,w_2,w_0)$. 
\begin{lemma}\label{lemm133}
For all $m_0,m_1,m_2,w_0,w_1,w_2$ 
, we have
\begin{align}
   &\lim_{n\mapsto \infty}\mathbb{E} \Big[\frac{1}{n}\sum_{t=1}^n D(\mathcal{P}_{U_t}^{m_0m_1m_2} ||\mathcal{Q}_{U}^{w_{1}w_2w_{0}}) \Big|E_{\delta}=0\Big] 
=0.
\end{align}
\end{lemma}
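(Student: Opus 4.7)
The plan is to follow the strategy of Lemma~\ref{lemm134} given in Appendix~\ref{battod1}. Once the codebook is fixed, the triple $(m_0,m_1,m_2)$ deterministically selects the codeword sequences, so the posterior $\mathcal{P}_{U_t}^{m_0 m_1 m_2}$ coincides with $\mathcal{P}_{U_t|W_0^n W_1^n W_2^n}(\cdot|w_0^n,w_1^n,w_2^n)$. Writing the KL divergence as cross-entropy minus entropy and taking expectation over $(U^n,M_0,M_1,M_2)$ conditionally on $\{E_{\delta}=0\}$ gives
\begin{align*}
&\mathbb{E}\Bigl[\tfrac{1}{n}\sum_{t=1}^n D(\mathcal{P}_{U_t}^{m_0 m_1 m_2}\|\mathcal{Q}_{U}^{w_1 w_2 w_0})\,\Big|\,E_{\delta}=0\Bigr] \\
&=\tfrac{1}{n}\sum_{t=1}^n\mathbb{E}\Bigl[\log\tfrac{1}{\mathcal{Q}_{U|W_0 W_1 W_2}(U_t|W_{0,t},W_{1,t},W_{2,t})}\,\Big|\,E_{\delta}=0\Bigr] \\
&\quad -\tfrac{1}{n}\sum_{t=1}^n H(U_t|M_0,M_1,M_2,E_{\delta}=0).
\end{align*}

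First, I would lower-bound the entropy term. Chain rule and conditioning-reduces-entropy give $\sum_t H(U_t|M_0,M_1,M_2)\geq H(U^n|M_0,M_1,M_2)$. The rate conditions \eqref{eqwwe}--\eqref{eqwweee} together with the covering-lemma analysis of the preceding subsection ensure that, conditionally on $\{E_{\delta}=0\}$, the joint law of $(U^n,W_0^n,W_1^n,W_2^n)$ is close in total variation to the memoryless product $\prod_t \mathcal{P}_U\mathcal{Q}_{W_0 W_1 W_2|U}$; hence $\tfrac{1}{n}H(U^n|M_0,M_1,M_2,E_{\delta}=0)\geq H(U|W_0 W_1 W_2)-\varepsilon_n$ with $\varepsilon_n\to 0$. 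Second, for the cross-entropy term, joint $\delta$-typicality of $(u^n,w_0^n,w_1^n,w_2^n)$ forces the empirical joint distribution to lie within $\delta$ of $\mathcal{P}_U\mathcal{Q}_{W_0 W_1 W_2|U}$, and the typical-average property then yields $\tfrac{1}{n}\sum_t\log\tfrac{1}{\mathcal{Q}_{U|W_0 W_1 W_2}(u_t|w_{0,t},w_{1,t},w_{2,t})}\leq H(U|W_0 W_1 W_2)+\varepsilon'_n$ with $\varepsilon'_n\to 0$. Subtracting the two bounds and invoking the non-negativity of the KL divergence sandwiches the left-hand side between $0$ and $\varepsilon_n+\varepsilon'_n$, which proves the claim.

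The main obstacle is keeping the logarithms integrable, i.e.\ guaranteeing that $\mathcal{Q}_{U|W_0 W_1 W_2}(u|w_0,w_1,w_2)$ does not vanish on the jointly-typical support. This is precisely why the optimization is first restricted in Lemma~\ref{kklemmmaa5} to the set $\tilde{\mathbb{Q}}_0(R_0,R_1,R_2)$: its full-support constraint $\min \mathcal{Q}(w_0|u)\mathcal{Q}(w_1|w_0,u)\mathcal{Q}(w_2|w_0,u)>0$ forces $\mathcal{Q}_{U|W_0 W_1 W_2}$ to be bounded away from zero on the typical set, controls the tails of $\log(1/\mathcal{Q}_{U|W_0 W_1 W_2})$, and enables the passage through the conditional expectation given $\{E_{\delta}=0\}$ exactly as in Appendix~\ref{battod1}.
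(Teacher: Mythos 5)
Your overall skeleton matches the paper's: identify the posterior given $(m_0,m_1,m_2)$ with the one given the codewords, split the KL divergence into a cross-entropy term and an entropy term, bound the cross-entropy term by $H(U|W_0,W_1,W_2)+\delta$ via joint typicality under $\{E_\delta=0\}$ (the paper's Lemma \ref{apa.20jet}), and note that full support of $\mathcal{Q}_{U|W_0W_1W_2}$ keeps the logarithms finite. Up to that point you are reproducing the paper's steps \eqref{apb2}--\eqref{apb5}.

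The gap is in your lower bound on the entropy term. You assert that ``the covering-lemma analysis of the preceding subsection'' ensures that, conditionally on $\{E_\delta=0\}$, the joint law of $(U^n,W_0^n,W_1^n,W_2^n)$ is close in total variation to the memoryless product $\prod_t \mathcal{P}_U\mathcal{Q}_{W_0W_1W_2|U}$, and from this you deduce $\tfrac1n H(U^n|M_0,M_1,M_2,E_\delta=0)\geq H(U|W_0,W_1,W_2)-\varepsilon_n$. The covering lemma gives no such thing: it only guarantees that the probability of failing to find jointly typical codewords vanishes. Total-variation closeness of the \emph{induced} joint distribution to the i.i.d.\ product is a soft-covering/resolvability-type statement; it is plausible here because the rates exceed the corresponding mutual informations, but it is a strictly stronger claim that neither the covering lemma nor anything in the paper establishes, and even granting it you would still need a continuity-of-(conditional-)entropy argument with the $n\log_2|\mathcal{U}|$ factor and a quantified, vanishing TV rate to pass to the per-letter entropy bound. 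The paper avoids all of this with an elementary counting argument: $\tfrac1n H(U^n|M_0,M_1,M_2,E_\delta=0)\geq \tfrac1n H(U^n|E_\delta=0)-\tfrac1n H(M_0,M_1,M_2)\geq H(U)-\varepsilon-(R_0+R_1+R_2)$, where the first step uses Lemma \ref{apa.2222jet} to handle the conditioning on $\{E_\delta=0\}$ and the second uses only the cardinality of the message sets; combining with the rate choices \eqref{eqwwe}--\eqref{eqwweee} yields the bound $\eta+\delta+\tfrac1n+\log_2|\mathcal{U}|\cdot\mathcal{P}(E_\delta=1)$ of \eqref{apb6}--\eqref{apb7}. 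To make your argument complete you would either have to substitute this counting bound for your TV claim, or import and prove a genuine soft-covering lemma, which is an ingredient the paper does not use.
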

\begin{proof}
The proof of Lemma \ref{lemm133} is stated in App. \ref{cobatsu}. 
\end{proof}
\begin{lemma}\label{lemmtre}
For all $w^n_{0},w^n_{1},w^n_2$
, we have
\begin{align}
   &\lim_{n\mapsto \infty}\mathbb{E} \Big[\frac{1}{n}\sum_{t=1}^n D(\mathcal{P}^{\sigma}_{W_{1,t}|W_{0,t}W_{2,t}} ||\mathcal{Q}_{W_{1,t}|W_{2,t}W_{0,t}}) \Big|E_{\delta}=0\Big] \nonumber \\
&=0, \nonumber \\
   &\lim_{n\mapsto \infty}\mathbb{E} \Big[\frac{1}{n}\sum_{t=1}^n D(\mathcal{P}^{\sigma}_{W_{2,t}|W_{0,t}W_{1,t}} ||\mathcal{Q}_{W_{2,t}|W_{1,t}W_{0,t}}) \Big|E_{\delta}=0\Big] \nonumber \\
&=0. \nonumber
\end{align}
\end{lemma}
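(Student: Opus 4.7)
The plan is to deduce Lemma \ref{lemmtre} from the previously established Lemma \ref{lemm134} by writing the single-letter beliefs as averages of the sequence-level beliefs and applying the convexity of the Kullback--Leibler divergence. I treat the first limit in detail; the second follows by the symmetric argument with $W_1$ and $W_2$ interchanged.

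For fixed values $(w_0, w_2) \in \mathcal{W}_0 \times \mathcal{W}_2$, I would first use the law of total probability to express the single-letter belief as a mixture of the sequence-level beliefs:
\[
\mathcal{P}^{\sigma}_{W_{1,t}|W_{0,t}W_{2,t}}(\cdot \mid w_0, w_2) = \sum_{w_0^n, w_2^n} \mathcal{P}^{\sigma}(w_0^n, w_2^n \mid W_{0,t}=w_0, W_{2,t}=w_2)\, \mathcal{P}^{\sigma}_{W_{1,t}|W_0^n, W_2^n}(\cdot \mid w_0^n, w_2^n),
\]
where the sum ranges over $n$-sequences whose $t$-th coordinates are $(w_0, w_2)$. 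Since the target kernel $\mathcal{Q}^{w_0, w_2}_{W_1}$ depends only on the $t$-th coordinates, the convexity of $D(\cdot \,\|\, Q)$ in its first argument yields
\[
D\bigl(\mathcal{P}^{\sigma}_{W_{1,t}|W_{0,t}W_{2,t}} \,\|\, \mathcal{Q}^{w_0, w_2}_{W_1}\bigr) \leq \mathbb{E}\bigl[ D\bigl(\mathcal{P}^{\sigma}_{W_{1,t}|W_0^n, W_2^n} \,\|\, \mathcal{Q}^{w_0, w_2}_{W_1}\bigr) \,\big|\, W_{0,t}=w_0, W_{2,t}=w_2 \bigr].
\]

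Next, I would average over $(W_{0,t}, W_{2,t})$ and over the time index $t$, conditional on $E_\delta = 0$. Using the tower property of conditional expectation, this gives
\[
\mathbb{E}\Bigl[ \frac{1}{n}\sum_{t=1}^n D(\mathcal{P}^{\sigma}_{W_{1,t}|W_{0,t}W_{2,t}} \,\|\, \mathcal{Q}_{W_{1,t}|W_{2,t}W_{0,t}}) \,\Big|\, E_\delta=0\Bigr] \leq \mathbb{E}\Bigl[ \frac{1}{n}\sum_{t=1}^n D(\mathcal{P}^{\sigma}_{W_{1,t}|W_0^n, W_2^n} \,\|\, \mathcal{Q}^{W_{0,t}, W_{2,t}}_{W_1}) \,\Big|\, E_\delta=0\Bigr].
\]
The right-hand side is precisely the $W_1$-analogue of the quantity controlled by Lemma \ref{lemm134}, obtained by swapping the roles of the two private descriptions. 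Its proof in Appendix \ref{battod1} transfers verbatim because the codebook generates $W_1^n$ and $W_2^n$ symmetrically conditionally on $W_0^n$, so the right-hand side vanishes as $n \to \infty$.

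The main technical hurdle, inherited from the proofs of Lemmas \ref{lemm134} and \ref{lemm133}, is to verify that the sequence-level conditional belief $\mathcal{P}^{\sigma}_{W_{1,t}|W_0^n, W_2^n}$ induced by the joint typicality encoding is close to the target $\mathcal{Q}_{W_1|W_0, W_2}$ on the typical set; this step rests on the conditional typicality lemma, the asymptotic equipartition property, and the observation that conditioning on $E_\delta = 0$ does not inflate the Kullback--Leibler divergence asymptotically. The second limit follows by running the identical argument with $W_1$ and $W_2$ interchanged, which is legitimate in view of the symmetric codebook construction.
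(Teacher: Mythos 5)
Your argument is correct and is essentially the paper's own proof: the paper likewise writes $\mathcal{P}^{\sigma}(w_{1,t}|w_{0,t},w_{2,t})$ (resp.\ $\mathcal{P}^{\sigma}(w_{2,t}|w_{0,t},w_{1,t})$) as a mixture over the remaining coordinates of the sequence-level beliefs $\mathcal{P}^{\sigma}(w_{1,t}|w_0^n,w_2^n)$ and then invokes Lemma~\ref{lemm134}, with the convexity-of-KL step you spell out left implicit. Note only that both divergence bounds you need are already stated in Lemma~\ref{lemm134} itself (its second display is exactly your right-hand side), so no separate ``swapped'' version of the appendix argument is required beyond the symmetry remark the paper already makes.
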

\begin{proof}
The proof of lemma \ref{lemmtre} is stated in App. \ref{cobatru}.
\end{proof}

In the following we define the game $\Tilde{G}^{\sigma,t}(M_0,M_1,M_2,V_{1,t},V_{2,t})$ of stage $t$, for $t\in\{1,2,...n\}$, in which the costs are computed using the single-letter beliefs that correspond to essential equilibria.
\begin{definition}
 For each encoding $\sigma$, the Bayesian game  $\Tilde{G}^{\sigma,t}(M_0,M_1,M_2,V_{1,t},V_{2,t})$ of stage $t$, for $t\in\{1,2,...n\}$
  \ consists of:
 \begin{itemize}
     \item the decoders $\mathcal{D}_i, i \in \{1,2\}$ as the players of the game,
     \item $\mathcal{V}_{i,t}$ is the set of action sequences of $\mathcal{D}_i$,
          \item  $(M_0,M_i)$ 
          is the type of decoder $\mathcal{D}_i$,
      \item $\tau_{i,t}: \ (\{1,2,..2^{\lfloor nR_0 \rfloor}\}\times\{1,2,..2^{\lfloor nR_i\rfloor}\}) 
\mapsto \Delta(\mathcal{V}_{i,t})$ 
is a behavior strategy of decoder $\mathcal{D}_i$,
 \item the belief of decoder $\mathcal{D}_1$ (resp. $\mathcal{D}_2$) over the type of decoder $\mathcal{D}_{2}$ (resp. $\mathcal{D}_1$) is given by $\mathcal{P}^{\sigma}_{M_2|M_0M_1}$ (resp. $\mathcal{P}^{\sigma}_{M_{1}|M_0M_2}$). 
     \item $\Tilde{C}_i^{\sigma}: \{1,2,..2^{\lfloor nR_0 \rfloor}\}\times\{1,2,..2^{\lfloor nR_1\rfloor}\} \times\{1,2,..2^{\lfloor nR_2\rfloor}\}\times \mathcal{V}_{1,t} \times \mathcal{V}_{2,t} \mapsto \mathbb{R} $ is the $\sigma$-cost function of $\mathcal{D}_i$ at stage $t$ such that \begin{align} 
     &\Tilde{C}_i^{\sigma}(m_0,m_1,m_2,v_{1,t},v_{2,t}) \nonumber \\ &=\sum_{u_t}\mathcal{Q}(u_t|w_{0,t}(m_0),w_{1,t}(m_0,m_1),w_{2,t}(m_0,m_2))\times \nonumber\\ &c_i(u_t,v_{1,t},v_{2,t}), \ \forall v_{1,t},v_{2,t},m_0,m_1,m_2. \nonumber
 \end{align}
  \item for a fixed strategy profile $(\tau_{1,t},\tau_{2,t})$, the expected $\sigma$-costs $\Psi^{\sigma}_1(\tau_{1,t},\tau_{2,t},m_0,m_1)$ of $\mathcal{D}_1$ at stage $t$ with type $(m_0,m_1)$ is given by
\begin{align}
&\Tilde{\Psi}^{\sigma}_1(\tau_{1,t},\tau_{2,t},m_0,m_1)=\sum_{m_{2}}\mathcal{P}^{\sigma}(m_{2}|m_0,m_1)\times \nonumber \\ &\sum_{v_{1,t},v_{2,t}}\mathcal{P}^{\tau_{1,t}}(v_{1,t}|m_0,m_1)\mathcal{P}^{\tau_{2,t}}(v_{2,t}|m_0,m_{2})\times \nonumber \\&\Tilde{C}_1^{\sigma}(v_{1,t},v_{2,t},m_0,m_1,m_{2}). \nonumber
\end{align}
Similarly, $\Tilde{\Psi}^{\sigma}_2(\tau_{1,t},\tau_{2,t},m_0,m_2)$ can be defined.
 \end{itemize}
 \end{definition}
For each distribution $\mathcal{P}^{\sigma}_{W_{0,t}W_{1,t}W_{2,t}|U}\in \Delta(\mathcal{W}_{0,t}\times\mathcal{W}_{1,t}\times \mathcal{W}_{2,t})^{|\mathcal{U}_t|} $, the set of essential Bayes-Nash equilibria of stage $t$ is given by

\begin{align}
&\hspace{-0.5cm}\Tilde{BNE}(\sigma,t)=\Big\{(\tau_{1,t},\tau_{2,t}),    \nonumber \\ \quad  &\hspace{-0.5cm}\Tilde{\Psi}^{\sigma}_1(\tau_{1,t},\tau_{2,t},m_0,m_1)\leq  \Tilde{\Psi}^{\sigma}_1(\tilde{\tau}_{1,t},\tau_{2,t},m_0,m_1) \forall \tilde{\tau}_{1,t}, m_0, m_1  \nonumber \\  \nonumber \quad &\hspace{-0.5cm}\Tilde{\Psi}^{\sigma}_2(\tau_{1,t},\tau_{2,t},m_0,m_2)
\leq \Tilde{\Psi}^{\sigma}_2(\tau_{1,t},\tilde{\tau}_{2,t},m_0,m_2)\forall \tilde{\tau}_{2,t}, m_0, m_2 \Big\}. \nonumber
\end{align} 

\begin{lemma}\label{ineqttzz00}
For all $\sigma$, and $\varepsilon>0$, we have \begin{align}
   &\sum_{t=1}^n\frac{1}{n} \bigg|\underset{(\tau_{1,t},\tau_{2,t}) \atop \in BNE(\sigma,t)}{\max}\mathbb{E}[c_e(U_t,V_{1,t},V_{2,t})] - \nonumber \\ &\underset{(\tau_{1,t},\tau_{2,t}) \atop \in \Tilde{BNE}(\sigma,t)}{\max}\mathbb{E}[c_e(U_t,V_{1,t},V_{2,t})] \bigg| \leq \varepsilon. \label{ineqttzz}
\end{align} 
\end{lemma}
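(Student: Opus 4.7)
The plan is to compare the games $G^{\sigma,t}$ and $\Tilde{G}^{\sigma,t}$ stage by stage. These two games share the same players, action sets and type space; they differ only in the posterior beliefs used in the cost functions ($\mathcal{P}^{\sigma}_{U_t|M_0M_1M_2}$ vs.\ $\mathcal{Q}_{U|W_0W_1W_2}$) and in the interim beliefs about the other decoder's type ($\mathcal{P}^{\sigma}_{W_{i,t}|W_{0,t}W_{j,t}}$ vs.\ $\mathcal{Q}_{W_i|W_0W_j}$). The key idea is to combine the belief-closeness lemmas (Lemmas \ref{lemm134}, \ref{lemm133}, \ref{lemmtre}) with the restriction to essential equilibria (Lemma \ref{kklemmmaa5}) to transfer the difference of beliefs into a difference of equilibrium values.

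First, I would apply Pinsker's inequality to Lemmas \ref{lemm133} and \ref{lemmtre} conditional on $\{E_\delta=0\}$ to upgrade the KL-divergence statements to total-variation statements: on average over $t\in\{1,\dots,n\}$ the beliefs $\mathcal{P}^{m_0m_1m_2}_{U_t}$, $\mathcal{P}^{\sigma}_{W_{2,t}|W_{0,t}W_{1,t}}$ and $\mathcal{P}^{\sigma}_{W_{1,t}|W_{0,t}W_{2,t}}$ are close in $\|\cdot\|_1$ to their single-letter counterparts $\mathcal{Q}_{U}^{w_0w_1w_2}$, $\mathcal{Q}_{W_2}^{w_0w_1}$ and $\mathcal{Q}_{W_1}^{w_0w_2}$.

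Second, since Lemma \ref{kklemmmaa5} allows us to choose $\mathcal{Q}_{W_0|U}\mathcal{Q}_{W_1|W_0U}\mathcal{Q}_{W_2|W_0U}\in\tilde{\mathbb{Q}}_0(R_0,R_1,R_2)$, the induced single-letter game admits only essential Bayes--Nash equilibria in the sense of Definition \ref{defessential}. By the proof of Lemma \ref{kklem5} (upper and lower semicontinuity of the BNE correspondence together with Berge's Maximum Theorem), the mapping
\begin{equation*}
\mathcal{Q}_{W_0W_1W_2|U}\;\longmapsto\;\max_{(\mathcal{Q}_{V_1|W_0W_1},\mathcal{Q}_{V_2|W_0W_2})\in\mathbb{BNE}(\cdot)}\mathbb{E}[c_e(U,V_1,V_2)]
\end{equation*}
is continuous at every such essential point. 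Hence, for every $\varepsilon>0$ there exists a total-variation radius $\eta(\varepsilon)>0$ such that whenever all the beliefs describing a stage-$t$ game are within $\eta$ of the corresponding target beliefs, the stage-$t$ value of $G^{\sigma,t}$ and the stage-$t$ value of $\Tilde{G}^{\sigma,t}$ differ by at most $\varepsilon/2$.

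Third, I would combine the two previous steps via Markov's inequality. Let $\mathcal{B}_n\subset\{1,\dots,n\}$ be the set of stages at which at least one of the relevant total-variation distances exceeds $\eta$. By Pinsker and Lemmas \ref{lemm133}--\ref{lemmtre}, the expected cardinality $\mathbb{E}[|\mathcal{B}_n|/n\mid E_\delta=0]$ vanishes as $n\to\infty$. On stages $t\notin\mathcal{B}_n$ the pointwise gap is at most $\varepsilon/2$ by the essentialness argument, while on stages $t\in\mathcal{B}_n$ it is bounded by $2\|c_e\|_\infty$. Adding the contribution of the error event $\{E_\delta=1\}$, whose probability is arbitrarily small by \eqref{eqarr}--\eqref{eqarrrr}, the averaged gap in \eqref{ineqttzz} is controlled by $\varepsilon/2+2\|c_e\|_\infty\bigl(\mathbb{E}[|\mathcal{B}_n|/n\mid E_\delta=0]+\mathbb{P}(E_\delta=1)\bigr)$, which is at most $\varepsilon$ for $n$ sufficiently large.

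The main obstacle will be reconciling a pointwise continuity statement (essentialness is an $\varepsilon$-$\delta$ property at a fixed target distribution) with the merely averaged-in-$t$ belief convergence that Lemmas \ref{lemm133}, \ref{lemmtre} provide; the Markov-inequality splitting into good and bad stages, together with the boundedness of $c_e$, is the crucial device that bridges this gap. A secondary technical point is that the induced beliefs $\mathcal{P}^{\sigma}$ depend on the random codebook, so the continuity/Markov argument must be applied pathwise and then averaged over the codebook realizations before the bound can be asserted deterministically for some encoder $\sigma$.
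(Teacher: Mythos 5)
Your proposal follows essentially the same route as the paper's own proof: the belief-closeness of Lemma \ref{lemm133} makes the stage-$t$ cost functions $C_i^{\sigma}$ and $\Tilde{C}_i^{\sigma}$ (hence $\Psi^{\sigma}_i$ and $\Tilde{\Psi}^{\sigma}_i$) close on Cesàro average, and the restriction to essential equilibria together with continuity of the Bayes--Nash correspondence (Fort's theorem) and Berge's Maximum Theorem converts this into closeness of the two max-values. Your explicit Pinsker-plus-Markov splitting into good and bad stages, using the boundedness of $c_e$, is a refinement that makes precise the passage from averaged-in-$t$ divergence bounds to the pointwise continuity argument, a step the paper's proof leaves implicit rather than a genuinely different approach.
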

\begin{proof} Consider the following correspondence \begin{align}
    &\mathcal{P}^{\sigma}_{U_t|M_0M_1M_2} \rightrightarrows \bigg\{(\mathcal{P}^{\tau_{1,t}}_{V_{1,t}|M_0M_1},\mathcal{P}^{\tau_{2,t}}_{V_{2,t}|M_0M_2}),  \nonumber \\ \quad  &\Tilde{\Psi}^{\sigma}_1(\tau_{1,t},\tau_{2,t},m_0,m_1)\leq  \Tilde{\Psi}^{\sigma}_1(\tilde{\tau}_{1,t},\tau_{2,t},m_0,m_1) \forall \tilde{\tau}_{1,t}, m_0, m_1  \nonumber \\   \quad &\Tilde{\Psi}^{\sigma}_2(\tau_{1,t},\tau_{2,t},m_0,m_2)
\leq \Tilde{\Psi}^{\sigma}_2(\tau_{1,t},\tilde{\tau}_{2,t},m_0,m_2)\forall \tilde{\tau}_{2,t}, m_0, m_2  \bigg\}.\label{edfedf}\end{align} Denote by $N(\mathcal{P}^{\sigma}_{U_t|M_0M_1M_2})$ the RHS of equation \eqref{edfedf}.
It follows from lemma \ref{lemm133} that for a given $\varepsilon>0$, we have, \begin{align} 
     & \sum_{t=1}^n\frac{1}{n}\bigg|C_i^{\sigma}(m_0,m_1,m_2,v_{1,t},v_{2,t}) - \Tilde{C}_i^{\sigma}(m_0,m_1,m_2,v_{1,t},v_{2,t})\bigg|= \nonumber \\
     & \sum_{t=1}^n\frac{1}{n}\bigg|\sum_{u_t}\mathcal{P}^{\sigma}(u_t|m_0,m_1,m_2)c_i(u_t,v_{1,t},v_{2,t})- \nonumber \\ &\sum_{u_t}\mathcal{Q}(u_t|w_{0,t}(m_0),w_{1,t}(m_0,m_1),w_{2,t}(m_0,m_2))\times \nonumber\\ &c_i(u_t,v_{1,t},v_{2,t})\bigg| \leq \varepsilon, \quad
     \forall m_0,m_1,m_2,v_{1,t},v_{2,t} . 
 \end{align}
 Consequently, for all $m_0,m_1,m_2,v_{1,t},v_{2,t}$,\begin{align}
   & \sum_{t=1}^n\frac{1}{n}\bigg|\Psi^{\sigma}_1(\tau_{1,t},\tau_{2,t},m_0,m_1) - \Tilde{\Psi}^{\sigma}_1(\tau_{1,t},\tau_{2,t},m_0,m_1)\bigg| \nonumber \\  &= \sum_{t=1}^n\frac{1}{n}\bigg|\sum_{m_{2}}\mathcal{P}^{\sigma}(m_{2}|m_0,m_1) \sum_{v_{1,t},v_{2,t}}\mathcal{P}^{\tau_{1,t}}(v_{1,t}|m_0,m_1)\times \nonumber\\&\mathcal{P}^{\tau_{2,t}}(v_{2,t}|m_0,m_{2})C_1^{\sigma}(v_{1,t},v_{2,t},m_0,m_1,m_{2}) - \nonumber\\ &\sum_{m_{2}}\mathcal{P}^{\sigma}(m_{2}|m_0,m_1)\sum_{v_{1,t},v_{2,t}}\mathcal{P}^{\tau_{1,t}}(v_{1,t}|m_0,m_1)\times\nonumber \\ &\mathcal{P}^{\tau_{2,t}}(v_{2,t}|m_0,m_{2})\Tilde{C}_1^{\sigma}(v_{1,t},v_{2,t},m_0,m_1,m_{2})\bigg| \leq \varepsilon.
 \end{align} 
Therefore, using \cite[Theorem 2]{MKfort51},  the correspondence in \eqref{edfedf} is continuous. 
 Using Berge's Maximum Theorem \cite{Berge}, the function \begin{align}
     \mathcal{P}^{\sigma}_{U_t|M_0M_1M_2} \mapsto \underset{(\tau_{1,t},\tau_{2,t}) \atop \in N(\mathcal{P}^{\sigma}_{U_t|M_0M_1M_2})}{\max}\mathbb{E}[c_e(U_t,V_{1,t},V_{2,t})] \label{maxvaluefunction}
 \end{align}
 is well-defined and continuous. Hence, $(\tau_{1,t},\tau_{2,t}) \in \Tilde{BNE}(\sigma,t)$. Therefore, varying $\sigma$ in a small neighborhood, slightly perturbs the expected cost functions resulting in a slightly perturbed set of Bayes-Nash equilibria. By the continuity of the max-value function in \eqref{maxvaluefunction}, we get the desired inequality. 
\end{proof}

Similarly, denote by $G_w^{\sigma,t}(W_{0,t},W_{1,t},W_{2,t},V_{1,t},V_{2,t})$ the Bayesian game restricted to types $W_{0,t},W_{1,t},W_{2,t}$ at stage $t$, for $t\in\{1,2,...n\}$ defined as follows.

\begin{definition}
 For each encoding $\sigma$, the finite Bayesian game  $G_w^{\sigma,t}(W_{0,t},W_{1,t},W_{2,t},V_{1,t},V_{2,t})$ at stage $t$, for $t\in\{1,2,...n\}$
  \ consists of:
 \begin{itemize}
     \item the decoders $\mathcal{D}_i, i \in \{1,2\}$ as the players of the game,
     \item $\mathcal{V}_{i,t}$ is the set of action sequences of $\mathcal{D}_i$,
          \item  $(W_{0,t},W_{i,t})$ 
          is the type of decoder $\mathcal{D}_i$,
      \item $\tau_{i,t}: \ (\{1,2,..2^{\lfloor nR_0 \rfloor}\}\times\{1,2,..2^{\lfloor nR_i\rfloor}\}) 
\mapsto \Delta(\mathcal{V}_{i,t})$ 
is a behavior strategy of decoder $\mathcal{D}_i$,
 \item the belief of decoder $\mathcal{D}_1$ (resp. $\mathcal{D}_2$) over the type of decoder $\mathcal{D}_{2}$ (resp. $\mathcal{D}_1$) is given by $\mathcal{P}^{\sigma}_{W_{2,t}|W_{0,t}W_{1,t}}$ (resp. $\mathcal{P}^{\sigma}_{W_{1,t}|W_{0,t}W_{2,t}}$). 
     \item $C_i^{\sigma,w}: \mathcal{W}_0 \times\mathcal{W}_1\times\mathcal{W}_2\times \mathcal{V}_{1,t} \times \mathcal{V}_{2,t} \mapsto \mathbb{R} $ is the $\sigma$-cost function of $\mathcal{D}_i$ at stage $t$ such that \begin{align} 
     &C_i^{\sigma,w}(w_{0,t},w_{1,t},w_{2,t},v_{1,t},v_{2,t})=\nonumber\\ &\sum_{u_t}\mathcal{Q}(u_t|w_{0,t},w_{1,t},w_{2,t})c_i(u_t,v_{1,t},v_{2,t}), \nonumber\\ &\qquad \forall v_{1,t},v_{2,t},w_{0,t},w_{1,t},w_{2,t}. \nonumber
 \end{align}
  \item for a fixed strategy profile $(\tau_{1,t},\tau_{2,t})$, the expected $\sigma$-costs $\Psi^{\sigma,w}_1(\tau_{1,t},\tau_{2,t},w_{0,t},w_{1,t})$ of $\mathcal{D}_1$ at stage $t$ with type $(w_{0,t},w_{1,t})$ is given by
\begin{align}
&\Psi^{\sigma,w}_1(\tau_{1,t},\tau_{2,t},w_{0,t},w_{1,t})=\sum_{w_{2,t}}\mathcal{P}^{\sigma}(w_{2,t}|w_{0,t},w_{1,t})\times \nonumber \\ &\sum_{v_{1,t},v_{2,t}}\mathcal{P}^{\tau_{1,t}}(v_{1,t}|w_{0,t},w_{1,t})\mathcal{P}^{\tau_{2,t}}(v_{2,t}|w_{0,t},w_{2,t})\times \nonumber \\&C_1^{\sigma,w}(v_{1,t},v_{2,t},w_{0,t},w_{1,t},w_{2,t}). \nonumber
\end{align}
Similarly, $\Psi^{\sigma,w}_2(\tau_{1,t},\tau_{2,t},w_{0,t},w_{2,t})$ can be defined.
 \end{itemize}
 \end{definition}

For each encoding strategy $\sigma$ and stage $t$, we define the set $BNE(\sigma,t)$ of Bayes-Nash equilibria $(\tau_{1,t},\tau_{2,t})$ of $G_w^{\sigma,t}(W_{0,t},W_{1,t},W_{2,t},V_{1,t},V_{2,t})$ as follows
\begin{align}
    &BNE^w(\sigma,t)=\Big\{(\tau_{1,t},\tau_{2,t}), \nonumber \\ &\Psi^{\sigma,w}_1(\tau_{1,t},\tau_{2,t},w_0,w_1)\leq  \Psi^{\sigma,w}_1(\tilde{\tau}_{1,t},\tau_{2,t},w_0,w_1) \nonumber \\ &\qquad \forall  \tilde{\tau}_{1,t}, w_0, w_1  \nonumber \\  \nonumber \quad &\Psi^{\sigma,w}_2(\tau_{1,t},\tau_{2,t},w_0,w_2)
\leq \Psi^{\sigma,w}_2(\tau_{1,t},\tilde{\tau}_{2,t},w_0,w_2) \nonumber \\ &\qquad\forall \tilde{\tau}_{2,t}, w_0, w_2 \Big\}
    . \nonumber
\end{align}

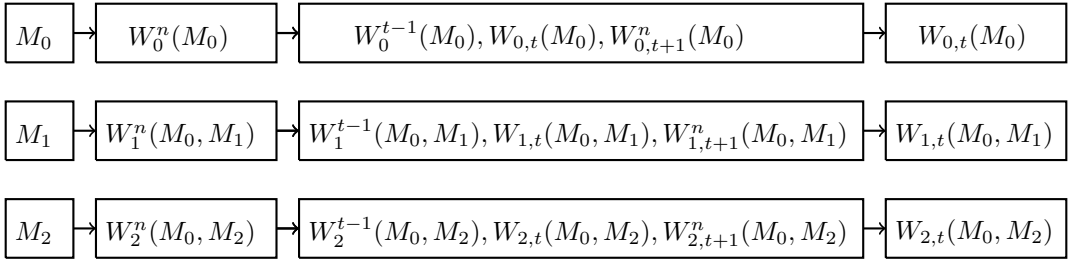
\begin{figure*}[ht]
    \centering
       \begin{tikzpicture}[xscale=3,yscale=1.3]
\draw[thick,-](0,0)--(0.3,0)--(0.3,0.6)--(0,0.6)--(0,0);
\node[right, black] at (0,0.25) {$M_2$};
\draw[thick,->](0.3,0.3)--(0.4,0.3);
\draw[thick,-](0.4,0)--(1.2,0)--(1.2,0.6)--(0.4,0.6)--(0.4,0);
\node[right, black] at (0.4,0.25) {$W_2^n(M_0,M_2)$};
\draw[thick,->](1.2,0.3)--(1.3,0.3);
\draw[thick,-](1.3,0)--(3.8,0)--(3.8,0.6)--(1.3,0.6)--(1.3,0);
\node[right, black] at (1.3,0.25) {$W_{2}^{t-1}(M_0,M_2),W_{2,t}(M_0,M_2),W_{2,t+1}^n(M_0,M_2)$};
\draw[thick,->](1.2,0.3)--(1.3,0.3);
\draw[thick,->](3.8,0.3)--(3.9,0.3);
\draw[thick,-](3.9,0)--(4.7,0)--(4.7,0.6)--(3.9,0.6)--(3.9,0);
\node[right, black] at (3.9,0.25) {$W_{2,t}(M_0,M_2)$};
\draw[thick,-](0,1)--(0.3,1)--(0.3,1.6)--(0,1.6)--(0,1);
\node[right, black] at (0,1.25) {$M_1$};
\draw[thick,->](0.3,1.3)--(0.4,1.3);
\draw[thick,-](0.4,1)--(1.2,1)--(1.2,1.6)--(0.4,1.6)--(0.4,1);
\node[right, black] at (0.4,1.25) {$W_1^n(M_0,M_1)$};
\draw[thick,->](1.2,1.3)--(1.3,1.3);
\draw[thick,-](1.3,1)--(3.8,1)--(3.8,1.6)--(1.3,1.6)--(1.3,1);
\node[right, black] at (1.3,1.25) {$W_{1}^{t-1}(M_0,M_1),W_{1,t}(M_0,M_1),W_{1,t+1}^n(M_0,M_1)$};
\draw[thick,->](1.2,1.3)--(1.3,1.3);
\draw[thick,->](3.8,1.3)--(3.9,1.3);
\draw[thick,-](3.9,1)--(4.7,1)--(4.7,1.6)--(3.9,1.6)--(3.9,1);
\node[right, black] at (3.9,1.25) {$W_{1,t}(M_0,M_1)$};
\draw[thick,-](0,2)--(0.3,2)--(0.3,2.6)--(0,2.6)--(0,2);
\node[right, black] at (0,2.25) {$M_0$};
\draw[thick,->](0.3,2.3)--(0.4,2.3);
\draw[thick,-](0.4,2)--(1.2,2)--(1.2,2.6)--(0.4,2.6)--(0.4,2);
\node[right, black] at (0.5,2.25) {$W_0^n(M_0)$};
\draw[thick,->](1.2,2.3)--(1.3,2.3);
\draw[thick,-](1.3,2)--(3.8,2)--(3.8,2.6)--(1.3,2.6)--(1.3,2);
\node[right, black] at (1.5,2.25) {$W_{0}^{t-1}(M_0),W_{0,t}(M_0),W_{0,t+1}^n(M_0)$};
\draw[thick,->](1.2,2.3)--(1.3,2.3);
\draw[thick,->](3.8,2.3)--(3.9,2.3);
\draw[thick,-](3.9,2)--(4.7,2)--(4.7,2.6)--(3.9,2.6)--(3.9,2);
\node[right, black] at (4,2.25) {$W_{0,t}(M_0)$};
    \end{tikzpicture}
    \caption{Marginalization Scheme of the Decoders' Types}
    \label{fig:marginals}
\end{figure*}

The game $G_w^{\sigma,t}(W_{0,t},W_{1,t},W_{2,t},V_{1,t},V_{2,t})$ 
of stage $t$ directly derives from $\Tilde{G}^{\sigma,t}(M_0,M_1,M_2,V_{1,t},V_{2,t})$ by marginalizing with respect to components $W_{0,t}$, $W_{1,t}$ and $W_{2,t}$ of $W_0^n(M_0), \ W_1^n(M_0,M_1)$ and $W_2^n(M_0,M_2)$ respectively according to the injections in Fig. \ref{fig:marginals}.
\begin{lemma}\label{eqttzz00}
For all $\sigma$, we have
\begin{align}
    \underset{(\tau_{1,t},\tau_{2,t}) \atop \in \Tilde{BNE}(\sigma,t)}{\max}\mathbb{E}[c_e(U_t,V_{1,t},V_{2,t})] = \underset{(\tau_{1,t},\tau_{2,t}) \atop \in  BNE^w(\sigma,t)}{\max}\mathbb{E}[c_e(U_t,V_{1,t},V_{2,t})] \label{eqttzz}
\end{align} 
\end{lemma}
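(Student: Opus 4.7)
The plan is to establish the equality by exhibiting a value-preserving correspondence between the equilibria of $\tilde{G}^{\sigma,t}$ and $G_w^{\sigma,t}$. The key observation is that the cost functions coincide after the deterministic codebook mapping:
$$\tilde{C}_i^\sigma(m_0,m_1,m_2,v_{1,t},v_{2,t}) = C_i^{\sigma,w}\bigl(w_{0,t}(m_0),\,w_{1,t}(m_0,m_1),\,w_{2,t}(m_0,m_2),\,v_{1,t},v_{2,t}\bigr),$$
so $\tilde{G}^{\sigma,t}$ depends on the message types $(M_0,M_i)$ only through their codebook images $(W_{0,t}(M_0),W_{i,t}(M_0,M_i))$. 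Accordingly, $G_w^{\sigma,t}$ can be viewed as the coarsening of $\tilde{G}^{\sigma,t}$ obtained by identifying message types lying in the same codebook fiber, and the goal is to show that this coarsening preserves both the Bayes-Nash equilibrium structure and the encoder's expected cost.

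For the inequality $\max_{BNE^w(\sigma,t)} \leq \max_{\tilde{BNE}(\sigma,t)}$, I would lift any pair $(\tau_{1,t},\tau_{2,t})\in BNE^w(\sigma,t)$ to strategies on the finer message-type space via $\tau'_{i,t}(v_{i,t}|m_0,m_i):=\tau_{i,t}(v_{i,t}|w_{0,t}(m_0),w_{i,t}(m_0,m_i))$. Any beneficial deviation in $\tilde{G}^{\sigma,t}$ at a type $(m_0,m_i)$ would, after aggregation over the fiber $\{(m'_0,m'_i):w_{0,t}(m'_0)=w_{0,t}(m_0),\,w_{i,t}(m'_0,m'_i)=w_{i,t}(m_0,m_i)\}$, induce a beneficial deviation in $G_w^{\sigma,t}$, contradicting the BNE hypothesis there. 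The lifted profile induces the same joint law on $(U_t,V_{1,t},V_{2,t})$, so the encoder's cost is unchanged. For the reverse inequality, I would project any $(\tau_{1,t},\tau_{2,t})\in\tilde{BNE}(\sigma,t)$ down via
$$\bar{\tau}_{i,t}(v_{i,t}|w_{0,t},w_{i,t}) := \Pr^\sigma\bigl[V_{i,t}=v_{i,t}\,\big|\,W_{0,t}(M_0)=w_{0,t},\,W_{i,t}(M_0,M_i)=w_{i,t}\bigr],$$
and verify that $(\bar\tau_{1,t},\bar\tau_{2,t})\in BNE^w(\sigma,t)$ with the same encoder cost, by reformulating the decoders' expected payoffs in $\tilde{G}^{\sigma,t}$ as conditional expectations given $(W_{0,t},W_{i,t})$, using that the cost is measurable with respect to $(W_{0,t},W_{1,t},W_{2,t})$.

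The main obstacle is the projection direction: one must verify that $\bar{\tau}_{1,t}\otimes\bar{\tau}_{2,t}$ still factorizes as a product of independent decoder strategies on the coarsened type space. This uses the fact that $\tau_{i,t}$ depends only on $(M_0,M_i)$ and that the codebook map $(M_0,M_i)\mapsto(W_{0,t},W_{i,t})$ is deterministic, so conditioning on $(W_{0,t}=w_{0,t},W_{i,t}=w_{i,t})$ amounts to averaging $\tau_{i,t}$ over its own fiber independently of decoder $-i$, preserving the conditional-independence structure $V_{1,t}\perp V_{2,t}\mid(W_{0,t},W_{1,t},W_{2,t})$ required for the best-response comparison. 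Once this is in place, the equality of the two max-values follows directly, completing the transition from $\tilde{G}^{\sigma,t}$ to $G_w^{\sigma,t}$ in the chain of Fig.~\ref{fig:chaingames}.
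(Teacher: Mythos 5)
Your overall plan — exploit the fact that $\tilde{C}_i^{\sigma}$ coincides with $C_i^{\sigma,w}$ composed with the codebook maps, and relate the two games by aggregating message types over the codebook fibers — is the same mechanism the paper uses (its proof writes $\Psi^{\sigma,w}_1$ as a marginalization over the fiber variables and deduces the inclusion of $\tilde{BNE}(\sigma,t)$ into $BNE^w(\sigma,t)$, i.e.\ only your projection direction). However, your argument has a genuine gap: the cost functions match through the codebook maps, but the \emph{type beliefs do not}. In $\tilde{G}^{\sigma,t}$ decoder $\mathcal{D}_1$'s belief is $\mathcal{P}^{\sigma}_{M_2|M_0M_1}$, whose pushforward is $\mathcal{P}^{\sigma}(W_{2,t}=\cdot\,|m_0,m_1)$, whereas in $G_w^{\sigma,t}$ it is $\mathcal{P}^{\sigma}_{W_{2,t}|W_{0,t}W_{1,t}}$; these coincide only if $W_{2,t}$ is conditionally independent of $(M_0,M_1)$ given $(W_{0,t},W_{1,t})$, which is exactly the property the random coding scheme guarantees only approximately (Lemmas \ref{lemm134} and \ref{lemmtre}), not exactly. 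Consequently your lifting step fails as stated: if the lifted profile admitted a profitable deviation at some message type $(m_0,m_1)$, that deviation is profitable against $\mathcal{P}^{\sigma}(W_{2,t}|m_0,m_1)$, and optimality of $\tau_{1,t}$ against the fiber-averaged belief $\mathcal{P}^{\sigma}(W_{2,t}|w_{0,t},w_{1,t})$ (which is all that $BNE^w(\sigma,t)$ gives) does not contradict it — a strategy optimal against an average of beliefs need not be optimal against each component belief, so no contradiction is obtained.

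The projection direction has the mirror-image problem. The fiber-aggregated $\tilde{G}$-cost is an average over $(m_0,m_i)$ of products of type-dependent beliefs and type-dependent action distributions, while the $G_w$-cost of your projected strategies $\bar{\tau}_{i,t}$ is the product of the corresponding averages; these agree only under the same decoupling across the fiber. Relatedly, the conditional independence $V_{1,t}\perp V_{2,t}\mid(W_{0,t},W_{1,t},W_{2,t})$ you invoke to resolve your ``main obstacle'' is not exact: $(M_0,M_1)$ and $(M_0,M_2)$ share $M_0$ and are further coupled through $U^n$ by the joint-typicality encoder, so conditioning on the time-$t$ symbols alone does not render the two decoders' actions independent, and neither value preservation nor the best-response property of $(\bar{\tau}_{1,t},\bar{\tau}_{2,t})$ follows as claimed. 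In the paper these belief discrepancies are not absorbed inside this lemma at all: the exact statement is proved only in the projection/inclusion direction via the explicit marginalization identity over the fiber variables, while the approximation of beliefs is quarantined in the neighboring $\varepsilon$-lemmas (Lemmas \ref{ineqttzz00} and \ref{lazlez}). To repair your proof you would either have to restrict attention to strategies measurable with respect to the codebook images and prove the missing inequality by a different route, or carry an explicit $\varepsilon$-perturbation argument of the kind used in those lemmas rather than asserting an exact two-way, value-preserving correspondence.
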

\begin{proof}We will proceed by showing that every equilibrium $(\tau_{1,t},\tau_{2,t})\in \Tilde{BNE}(\sigma,t)$ induces an equilibrium in $BNE^w(\sigma,t)$. For all $\sigma,\tau_{1,t},\tau_{2,t},w_{0,t},w_{1,t}$, we have,
\begin{align}
&\Psi^{\sigma,w}_1(\tau_{1,t},\tau_{2,t},w_{0,t},w_{1,t})=\sum_{w_{2,t}}\mathcal{P}^{\sigma}(w_{2,t}|w_{0,t},w_{1,t})\times \nonumber \\ &\sum_{v_{1,t},v_{2,t}}\mathcal{P}^{\tau_{1,t}}(v_{1,t}|w_{0,t},w_{1,t})\mathcal{P}^{\tau_{2,t}}(v_{2,t}|w_{0,t},w_{2,t})\times \nonumber \\&\sum_{u_t}\mathcal{Q}(u_t|w_{0,t},w_{1,t},w_{2,t})c_i(u_t,v_{1,t},v_{2,t}) =  
\nonumber \\ &\sum_{w^n_{2},w_{0,t+1}^n,w^{t-1}_{0}\atop w^{t-1}_1,w_{1,t+1}^n }\mathcal{P}^{\sigma}(w^n_{2}|w^n_{0},w^n_{1})\times \nonumber \\&\mathcal{P}^{\sigma}(w_{0,t+1}^n,w^{t-1}_{0},w^{t-1}_1,w_{1,t+1}^n|w_{0,t},w_{1,t})\times \nonumber \\ &\sum_{v_{1,t},v_{2,t}}\sum_{w_{0,t+1}^n,w^{t-1}_{0}\atop w^{t-1}_1,w_{1,t+1}^n}\mathcal{P}^{\tau_{1,t}}(v_{1,t}|w_0^n,w_1^n)\times \nonumber \\ &\mathcal{P}(w_{0,t+1}^n,w^{t-1}_{0},w^{t-1}_1,w_{1,t+1}^n|w_{0,t},w_{1,t})\times \nonumber \\&\sum_{w_{0,t+1}^n,w^{t-1}_{0}\atop w^{t-1}_2,w_{2,t+1}^n}\mathcal{P}^{\tau_{2,t}}(v_{2,t}|w_0^n,w_2^n)\times\nonumber \\&\mathcal{P}(w_{0,t+1}^n,w^{t-1}_{0}, w^{t-1}_2,w_{2,t+1}^n|w_{0,t},w_{2,t})\times \nonumber \\&\sum_{u_t,w_{0,t+1}^n,w^{t-1}_{0}\atop w^{t-1}_1,w_{1,t+1}^n, w^{t-1}_2,w_{2,t+1}^n}\mathcal{Q}(u_t|w^n_{0},w^n_{1},w^n_{2})\times\nonumber\\&\mathcal{Q}(w_{0,t+1}^n,w^{t-1}_{0},w^{t-1}_1,w_{1,t+1}^n, w^{t-1}_2,w_{2,t+1}^n|w_{0,t},w_{1,t},w_{2,t})\times \nonumber \\ &c_1(u_t,v_{1,t},v_{2,t}).
\end{align} Thus, if $(\tau_{1,t},\tau_{2,t})\in BNE(\sigma,t)$, then $(\tau_{1,t},\tau_{2,t})\in BNE^w(\sigma,t)$. 
\end{proof}

Finally, we show that the encoder's cost in the essential game $G^{w}(W_{0},W_{1},W_{2},V_{1},V_{2})$ is close to its cost in the game $G_{w}^{\sigma,t}(W_{0,t},W_{1,t},W_{2,t},V_{1,t},V_{2,t})$ of stage $t$, for $t\in\{1,2,...n\}$.
\begin{lemma} \label{lazlez}
For all $\sigma$, $\mathcal{Q}_{W_{0,t}|U}\mathcal{Q}_{W_{1,t}|W_{0,t}U}\mathcal{Q}_{W_{2,t}|W_{0,t}U}\in\mathbb{Q}_0(R_0,R_1,R_2)$ and $\varepsilon>0$, we have 
\begin{align}
     \sum_{t=1}^n\frac{1}{n}\Bigg|&\underset{(\tau_{1,t},\tau_{2,t})\atop \in BNE^w(\sigma,t)}{\max}\mathbb{E}[c_e(U_t,V_{1,t},V_{2,t})] \ - \nonumber \\ &\underset{(\mathcal{Q}_{V_{1,t}|W_{0,t}W_{1,t}},\mathcal{Q}_{V_{2,t}|W_{0,t}W_{2,t}})\in \atop  \mathbb{BNE}(\mathcal{Q}_{W_{0,t}|U_t}\mathcal{Q}_{W_{1,t}|W_{0,t}U_t}\mathcal{Q}_{W_{2,t}|W_{0,t}U_t})}{\max}\mathbb{E}[c_e(U_t,V_{1,t},V_{2,t})] \Bigg| \nonumber \\ &\leq \varepsilon. \nonumber
\end{align}
\end{lemma}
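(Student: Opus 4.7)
The plan is to mirror the Berge-maximum-theorem strategy used for Lemma \ref{ineqttzz00} and Lemma \ref{eqttzz00}, but now closing the remaining gap between the stage-$t$ block game $G_w^{\sigma,t}(W_{0,t},W_{1,t},W_{2,t},V_{1,t},V_{2,t})$---whose beliefs and costs are induced by the random codebook---and the purely single-letter game $G^w(W_0,W_1,W_2,V_1,V_2)$ whose beliefs and costs are defined directly from the target distribution $\mathcal{Q}_{W_0|U}\mathcal{Q}_{W_1|W_0U}\mathcal{Q}_{W_2|W_0U}$.

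First I would establish that the relevant conditional distributions of $G_w^{\sigma,t}$ converge, on Cesàro average over $t$, to the single-letter ones. Concretely, Lemmas \ref{lemm134} and \ref{lemmtre} give such convergence of $\mathcal{P}^{\sigma}_{W_{2,t}|W_{0,t}W_{1,t}}$ to $\mathcal{Q}_{W_{2}|W_{0}W_{1}}$ and of $\mathcal{P}^{\sigma}_{W_{1,t}|W_{0,t}W_{2,t}}$ to $\mathcal{Q}_{W_{1}|W_{0}W_{2}}$ in KL-divergence, and Lemma \ref{lemm133} does the same for the state posterior (after the marginalization dictated by the injections of Fig.~\ref{fig:marginals}). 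Pinsker's inequality then turns these into total-variation convergences. Because all single-letter costs are bounded on the finite alphabets, this transfers to uniform-in-strategy-profile closeness of the stage-$t$ decoder costs $\Psi^{\sigma,w}_{i}(\tau_{1,t},\tau_{2,t},w_{0},w_{i})$ to their single-letter analogues $\Psi^{\star}_{i}(\mathcal{Q}_{V_{1}|W_{0}W_{1}},\mathcal{Q}_{V_{2}|W_{0}W_{2}},w_{0},w_{i})$, and likewise for the encoder's stage-$t$ expected cost toward $\mathbb{E}[c_{e}(U,V_{1},V_{2})]$ computed under $\mathcal{Q}$.

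Second, I would invoke the continuity apparatus already assembled in Lemma \ref{kklem5} and Lemma \ref{ineqttzz00}: since the chosen target distribution lies in $\tilde{\mathbb{Q}}_{0}(R_0,R_1,R_2)$, its Bayes-Nash equilibria are essential, hence by \cite[Theorem 4.2]{Yu1999EssentialEO} together with \cite[Theorem 2]{MKfort51} the BNE correspondence is continuous at that point, and Berge's Maximum Theorem then gives continuity of the max-value function from beliefs to $\max\mathbb{E}[c_{e}]$. Composing with the uniform payoff closeness of the first step, the max over $BNE^{w}(\sigma,t)$ is $\varepsilon$-close to the max over $\mathbb{BNE}(\mathcal{Q}_{W_{0}|U}\mathcal{Q}_{W_{1}|W_{0}U}\mathcal{Q}_{W_{2}|W_{0}U})$ at each stage $t$ lying in the typical regime; averaging over $t$ preserves this bound.

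The main obstacle will be the rigorous handling of the atypical event $\{E_{\delta}=1\}$, since the belief-convergence lemmas are stated conditional on $\{E_{\delta}=0\}$. The contribution to the Cesàro-averaged cost from $\{E_{\delta}=1\}$ has to be bounded separately using the boundedness of $c_{e}$ on the finite alphabets together with the vanishing error-probabilities \eqref{eqarr}--\eqref{eqarrrr}, which make $\mathbb{P}(E_{\delta}=1)$ arbitrarily small for $n$ large. Combining this $o(1)$ bound with the continuity argument on the typical side, and choosing $n$ sufficiently large relative to $\varepsilon$, yields the claimed inequality.
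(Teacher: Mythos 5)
Your proposal is correct and follows essentially the same route as the paper's own proof: it uses the Ces\`aro-averaged belief-convergence lemmas (Lemmas \ref{lemm134}, \ref{lemm133}, \ref{lemmtre}) to get closeness of the stage-$t$ costs $\Psi^{\sigma,w}_i$ to the single-letter costs $\Psi^{\star}_i$, then continuity of the Bayes--Nash correspondence at essential equilibria via \cite[Theorem 2]{MKfort51} together with Berge's Maximum Theorem to transfer this to the max-value of the encoder's cost. Your explicit treatment of the atypical event $\{E_{\delta}=1\}$ and the Pinsker step is a slightly more careful rendering of what the paper leaves implicit, but it is the same argument.
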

\begin{proof}
Consider the following correspondence \begin{align}
    &\mathcal{P}^{\sigma}_{U_t|W_{0,t}W_{1,t}W_{2,t}} \rightrightarrows \bigg\{(\mathcal{P}^{\tau_{1,t}}_{V_{1,t}|W_{0,t}W_{1,t}},\mathcal{P}^{\tau_{2,t}}_{V_{2,t}|W_{0,t}W_{2,t}}),  \nonumber \\ \quad  &\Tilde{\Psi}^{w,t}_1(\tau_{1,t},\tau_{2,t},w_{0,t},w_{1,t})\leq  \Tilde{\Psi}^{w,t}_1(\tilde{\tau}_{1,t},\tau_{2,t},w_{0,t},w_{1,t}) \nonumber \\ &\forall \tilde{\tau}_{1,t}, w_{0,t},w_{1,t},  \nonumber \\   \quad &\Tilde{\Psi}^{w,t}_2(\tau_{1,t},\tau_{2,t},w_{0,t},w_{2,t})
\leq \Tilde{\Psi}^{w,t}_2(\tau_{1,t},\tilde{\tau}_{2,t},w_{0,t},w_{2,t}) \nonumber \\&\forall \tilde{\tau}_{2,t}, w_{0,t},w_{2,t} \bigg\}.\label{edfedf2}\end{align} Denote by $N^w(\mathcal{P}^{\sigma}_{U_t|W_{0,t}W_{1,t}W_{2,t}})$ the RHS of equation \eqref{edfedf}.
It follows from lemma \ref{lemmtre} that for a given $\varepsilon>0$, and for all $w_{0,t},w_{1,t},w_{2,t},v_{1,t},v_{2,t}$,\begin{align}
   & \sum_{t=1}^n\frac{1}{n}\bigg|\Psi^{w,t}_1(\tau_{1,t},\tau_{2,t},w_{0,t},w_{1,t}) - \Tilde{\Psi}^{w,t}_1(\tau_{1,t},\tau_{2,t},w_{0,t},w_{1,t})\bigg| \nonumber \\  &= \sum_{t=1}^n\frac{1}{n}\bigg|\sum_{w_{2,t}}\mathcal{P}^{\sigma}(w_{2,t}|w_{0,t},w_{1,t}) \sum_{v_{1,t},v_{2,t}}\mathcal{P}^{\tau_{1,t}}(v_{1,t}|w_{0,t},w_{1,t})\times \nonumber\\&\mathcal{P}^{\tau_{2,t}}(v_{2,t}|w_{0,t},w_{2,t})C_1^{w,t}(v_{1,t},v_{2,t},w_{0,t},w_{1,t},w_{2,t}) - \nonumber\\ &\sum_{w_{2,t}}\mathcal{Q}(w_{2,t}|w_{0,t})\sum_{v_{1,t},v_{2,t}}\mathcal{Q}^{\tau_{1,t}}(v_{1,t}|w_{0,t},w_{1,t})\times\nonumber \\ &\mathcal{Q}^{\tau_{2,t}}(v_{2,t}|w_{0,t},w_{2,t})\Tilde{C}_1^{w,t}(v_{1,t},v_{2,t},w_{0,t},w_{1,t},w_{2,t})\bigg| \leq \varepsilon.
 \end{align} 
Therefore, using \cite[Theorem 2]{MKfort51},  the correspondence in \eqref{edfedf2} is continuous. 
 Using Berge's Maximum Theorem \cite{Berge}, the function \begin{align}
     \mathcal{P}^{\sigma}_{U_t|W_{0,t}W_{1,t}W_{2,t}} \mapsto \underset{(\tau_{1,t},\tau_{2,t}) \atop \in N^w(\mathcal{P}^{\sigma}_{U_t|W_{0,t}W_{1,t}W_{2,t}})}{\max}\mathbb{E}[c_e(U_t,V_{1,t},V_{2,t})] \label{maxvaluefunction2}
 \end{align}
 is well-defined and continuous. Hence, $(\tau_{1,t},\tau_{2,t}) \in \Tilde{BNE}(\sigma,t)$. Therefore, varying $\sigma$ in a small neighborhood, slightly perturbs the expected cost functions resulting in a slightly perturbed set of Bayes-Nash equilibria. By the continuity of the max-value function in \eqref{maxvaluefunction2}, we get the desired inequality. 
\end{proof} \\
 It follows from lemmas \ref{corollary}, \ref{ineqttzz00}, \ref{eqttzz00}, and \ref{lazlez}
that for all $\varepsilon>0$, there exists $\Hat{n}$ such that for all $n\geq\Hat{n}$, 
 \begin{align}
       &\underset{(\tau_{1},\tau_{2}) \atop \in BNE(\sigma)}{\max}\mathbb{E}[\sum_{t=1}^n\frac{1}{n}c_e(U_t,V_{1,t},V_{2,t})] = \nonumber \\  &\frac{1}{n}\sum_{t=1}^n\underset{(\tau_{1,t},\tau_{2,t}) \atop \in BNE(\sigma,t)}{\max}\mathbb{E}[c_e(U_t,V_{1,t},V_{2,t})] \leq\nonumber \\ 
      &\frac{1}{n}\sum_{t=1}^n\underset{(\tau_{1,t},\tau_{2,t}) \atop \in \Tilde{BNE}(\sigma,t)}{\max}\mathbb{E}[c_e(U_t,V_{1,t},V_{2,t})] \nonumber \leq \\
        &\frac{1}{n}\sum_{t=1}^n\underset{(\tau_{1,t},\tau_{2,t}) \atop \in \Tilde{BNE}(\sigma,t)}{\max}\mathbb{E}[c_e(U_t,V_{1,t},V_{2,t})] \nonumber = \\ &\frac{1}{n}\sum_{t=1}^n\underset{(\tau_{1,t},\tau_{2,t}) \atop \in  BNE^w(\sigma,t)}{\max}\mathbb{E}[c_e(U_t,V_{1,t},V_{2,t})] \nonumber \leq \\
        &\underset{(\mathcal{Q}_{V_{1,t}|W_{0,t}W_{1,t}},\mathcal{Q}_{V_{2,t}|W_{0,t}W_{2,t}})\in \atop  \mathbb{BNE}(\mathcal{Q}_{W_{0,t}|U_t}\mathcal{Q}_{W_{1,t}|W_{0,t}U_t}\mathcal{Q}_{W_{2,t}|W_{0,t}U_t})}{\max}\mathbb{E}[c_e(U_t,V_{1,t},V_{2,t})] + \varepsilon.
 \end{align}

This concludes the achievability proof of Theorem \ref{main result}. 
\endproof{}


\section{Converse proof of Theorem \ref{main result}}
Let $(R_0,R_1,R_2) \in \mathbb{R}^3_{+}$ and $n\in\N^{\star}$. Fix $(\sigma,\tau_1,\tau_2)
$, and consider a random variable $T$ uniformly distributed over $\{1,2,...,n\}$ and independent of $(U^n,M_0,M_1,M_2,V_1^n,V_2^n)$. We introduce the auxiliary random variables $W_0 =(M_0,T)$, $W_1 =M_1$, $W_2 =M_2$, $(U,V_1,V_2)=(U_T,V_{1,T},V_{2,T})$, 
 distributed according to $\mathcal{P}_{UW_0W_1W_2V_1V_2}^{\sigma\tau_1\tau_2}$ defined for all $(u,w_0,w_1,w_2,v_1,v_2) = (u_t,m_0,m_1,m_2,t,v_{1,t},v_{2,t})$ by
\begin{align*}
  &\mathcal{P}_{UW_0W_1W_2V_1V_2}^{\sigma\tau_1\tau_2} (u,w_0,w_1,w_2,v_1,v_2)
  = \nonumber \\ &\mathcal{P}_{U_TM_0M_1M_2TV_{1T}V_{2T}}^{\sigma\tau_1\tau_2}  (u_t,m_0,m_1,m_2,t,v_{1,t},v_{2,t})\\
 =&\frac1n\sum_{u^{t-1}\atop u_{t+1}^n}\sum_{v_1^{t-1},v_{1,t+1}^n\atop v_2^{t-1},v_{2,t+1}^n}\!\!\!\!\!\!
\bigg(\prod_{t=1}^n\mathcal{P}_{U}(u_t)\bigg)\mathcal{P}^{\sigma}_{M_0M_1M_2|U^n}(m_0,m_1,m_2|u^n) \nonumber \\ \times &\mathcal{P}^{\tau_1}_{V_1^n|M_0M_1}(v_1^n|m_0,m_1)\mathcal{P}^{\tau_2}_{V_2^n|M_0M_2}(v_2^n|m_0,m_2). 
\end{align*}
\begin{lemma}\label{lemma:decomposition}
The distribution $\mathcal{P}_{UW_0W_1W_2V_1V_2}^{\sigma\tau_1\tau_2}$ has marginal on $\Delta(\mathcal{U})$ given by $\mathcal{P}_U$ and satisfies the Markov chain properties
\begin{align*}
     (U,V_2)   -\!\!\!\!\minuso\!\!\!\!-  (W_0,W_1)  -\!\!\!\!\minuso\!\!\!\!- V_1; \\
  (U,W_1,V_1)  -\!\!\!\!\minuso\!\!\!\!-  (W_0,W_2)  -\!\!\!\!\minuso\!\!\!\!- V_2.
 \end{align*}
\end{lemma}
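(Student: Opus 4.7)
The plan is to derive all three conclusions directly from the explicit factorization
\begin{align*}
\mathcal{P}^{\sigma\tau_1\tau_2}_{U^nM_0M_1M_2V_1^nV_2^n} = \Bigl(\prod_{s=1}^n \mathcal{P}_U(u_s)\Bigr) \mathcal{P}^{\sigma}_{M_0M_1M_2|U^n} \mathcal{P}^{\tau_1}_{V_1^n|M_0M_1} \mathcal{P}^{\tau_2}_{V_2^n|M_0M_2}
\end{align*}
stated in the system model, combined with the fact that the time-sharing index $T$ is uniform on $\{1,\ldots,n\}$ and independent of $(U^n,M_0,M_1,M_2,V_1^n,V_2^n)$. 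No information-theoretic inequality is required; the entire argument is a careful marginalization of a product measure.

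For the marginal on $\mathcal{U}$, I would sum $\mathcal{P}^{\sigma\tau_1\tau_2}_{UW_0W_1W_2V_1V_2}$ over $(w_0,w_1,w_2,v_1,v_2)$. Summing $v_1^n$ against $\mathcal{P}^{\tau_1}_{V_1^n|M_0M_1}$ contributes a factor of one, likewise for $v_2^n$ against $\mathcal{P}^{\tau_2}_{V_2^n|M_0M_2}$ and for $(m_0,m_1,m_2)$ against $\mathcal{P}^{\sigma}_{M_0M_1M_2|U^n}$. Only the source factor $\prod_s \mathcal{P}_U(u_s)$ survives; marginalizing the coordinates $s\neq t$ yields $\mathcal{P}_U(u_t)$, and uniformly averaging over $t\in\{1,\ldots,n\}$ returns $\mathcal{P}_U(u)$.

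For the first Markov chain $(U,V_2)-\!\!\!\!\minuso\!\!\!\!-(W_0,W_1)-\!\!\!\!\minuso\!\!\!\!- V_1$, I would compute the conditional $\mathcal{P}(v_1\mid u,w_0,w_1,w_2,v_2)=\mathcal{P}(v_{1,t}\mid u_t,m_0,m_1,m_2,t,v_{2,t})$ as a ratio of joint to marginal, substituting the factorization above. The only factor involving $V_1^n$ is $\mathcal{P}^{\tau_1}_{V_1^n|M_0M_1}$; marginalizing out the coordinates $V_1^{t-1}$ and $V_{1,t+1}^n$ produces $\mathcal{P}^{\tau_{1,t}}_{V_{1,t}|M_0M_1}(v_{1,t}|m_0,m_1)$, a quantity depending only on $(m_0,m_1,t)=(w_0,w_1)$ and not on $(u_t,m_2,v_{2,t})$. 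All other factors appearing in the numerator also appear in the denominator and cancel, leaving $\mathcal{P}(v_1\mid u,w_0,w_1,w_2,v_2)=\mathcal{P}(v_1\mid w_0,w_1)$, as required. The second Markov chain $(U,W_1,V_1)-\!\!\!\!\minuso\!\!\!\!-(W_0,W_2)-\!\!\!\!\minuso\!\!\!\!- V_2$ follows by the symmetric argument, since $\mathcal{P}^{\tau_2}_{V_2^n|M_0M_2}$ is the unique factor containing $V_2^n$ and depends only on $(M_0,M_2)$.

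The only place where one has to be careful is the handling of $T$: the auxiliary $W_0=(M_0,T)$ is constructed precisely so that any conditional depending on $(M_0,t)$ becomes a function of $W_0$, which is essential for expressing the Markov-chain conclusions in terms of $W_0$ rather than the pair $(M_0,T)$. Beyond that, the proof is pure bookkeeping on the product structure of the joint.
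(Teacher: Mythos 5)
Your proposal is correct and follows essentially the same route as the paper: the paper likewise obtains the marginal $\mathcal{P}_U$ from the i.i.d.\ source and reads the two Markov chains directly off the factorization $\big(\prod_t\mathcal{P}_{U_t}\big)\mathcal{P}^{\sigma}_{M_0M_1M_2|U^n}\mathcal{P}^{\tau_1}_{V_1^n|M_0M_1}\mathcal{P}^{\tau_2}_{V_2^n|M_0M_2}$, stated first in terms of $(M_0,M_1,M_2,T)$ and then identified with $(W_0,W_1,W_2)$. Your write-up merely spells out the marginalization and cancellation steps that the paper leaves implicit, including the observation that the conditional of $V_{1,t}$ given everything else reduces to $\mathcal{P}^{\tau_{1,t}}_{V_{1,t}|M_0M_1}$, which depends only on $(w_0,w_1)$.
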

\begin{proof}[Lemma \ref{lemma:decomposition}]
The i.i.d. property of the source ensures that the marginal distribution is $\mathcal{P}_U$. By the definition of the decoding functions $\tau_1$ and $\tau_2$ we have 
\begin{align*}
     &(U_T,V_{2,T})   -\!\!\!\!\minuso\!\!\!\!-  (M_1,M_0,T)  -\!\!\!\!\minuso\!\!\!\!- V_{1,T},\\
     &(U_T,M_1,V_{1,T})  -\!\!\!\!\minuso\!\!\!\!-  (M_2,M_0,T)  -\!\!\!\!\minuso\!\!\!\!- V_{2,T}.
 \end{align*}
\end{proof}
Therefore $\mathcal{P}_{UW_0W_1W_2V_1V_2}^{\sigma\tau_1\tau_2} = \mathcal{P}_U\mathcal{P}_{W_0|U}^{\sigma} \mathcal{P}_{W_1|W_0U}^{\sigma}\mathcal{P}_{W_2|W_0U}^{\sigma}\mathcal{P}_{V_1|W_0W_1}^{\tau_1}\mathcal{P}_{V_2|W_0W_2}^{\tau_2}$.

\begin{lemma}\label{lemma:belongtoQ0}
For all $\sigma$, the distribution $\mathcal{P}_{W_0W_1W_2|U}^{\sigma}\in \mathbb{Q}_0(R_0,R_1,R_2)$.
\end{lemma}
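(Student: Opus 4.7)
The plan is to verify the three rate inequalities defining $\mathbb{Q}_0(R_0,R_1,R_2)$; the required product decomposition $\mathcal{P}^\sigma_{W_0|U}\mathcal{P}^\sigma_{W_1|W_0U}\mathcal{P}^\sigma_{W_2|W_0U}$ is already provided by Lemma \ref{lemma:decomposition}, so only the mutual-information bounds remain.

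First, I would use the cardinalities of the message sets: since $M_0\in\{1,\dots,2^{\lfloor nR_0\rfloor}\}$ and similarly for $M_1,M_2$, one has
\begin{align*}
nR_0\geq H(M_0)\geq I(U^n;M_0),\quad n(R_0+R_k)\geq H(M_0,M_k)\geq I(U^n;M_0,M_k),
\end{align*}
for $k\in\{1,2\}$. This reduces the problem to showing each right-hand side is at least $n$ times the single-letter quantity.

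Second, I would expand each mutual information with the chain rule and the i.i.d.\ property of the source. For the $R_0$ bound,
\begin{align*}
I(U^n;M_0)=\sum_{t=1}^n\bigl[H(U_t)-H(U_t\mid M_0,U^{t-1})\bigr]\geq \sum_{t=1}^n\bigl[H(U_t)-H(U_t\mid M_0)\bigr]=\sum_{t=1}^n I(U_t;M_0),
\end{align*}
since conditioning reduces entropy and $H(U_t\mid U^{t-1})=H(U_t)$. The identical argument applied to $(M_0,M_k)$ yields $I(U^n;M_0,M_k)\geq\sum_t I(U_t;M_0,M_k)$.

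Third, I would convert the Cesàro averages into single-letter form using the time-sharing trick. Because $T$ is uniform on $\{1,\dots,n\}$ and independent of $U_T$ under the i.i.d.\ source, $I(U_T;T)=0$, so for any message $M$,
\begin{align*}
\frac{1}{n}\sum_{t=1}^n I(U_t;M)=I(U_T;M\mid T)=I(U_T;M,T).
\end{align*}
Identifying $U=U_T$, $W_0=(M_0,T)$, $W_1=M_1$, $W_2=M_2$ then gives $\tfrac{1}{n}I(U^n;M_0)\geq I(U;W_0)$, and analogously $\tfrac{1}{n}I(U^n;M_0,M_k)\geq I(U;W_0,W_k)$ for $k\in\{1,2\}$. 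Combining the three steps yields $R_0\geq I(U;W_0)$, $R_0+R_1\geq I(U;W_0,W_1)$, and $R_0+R_2\geq I(U;W_0,W_2)$, which together with Lemma \ref{lemma:decomposition} establishes $\mathcal{P}^\sigma_{W_0W_1W_2|U}\in\mathbb{Q}_0(R_0,R_1,R_2)$. There is no genuine obstacle here; the only point demanding care is the consistent absorption of $T$ into $W_0$, so that each chain of inequalities terminates in the single-letter form dictated by Definition \ref{def:characterization}.
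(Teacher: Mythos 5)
Your proof is correct and follows essentially the same route as the paper: bound the message entropies by the rates, expand $I(U^n;\cdot)$ with the chain rule and the i.i.d.\ property of the source, and absorb the uniform time-sharing variable $T$ into $W_0=(M_0,T)$ so that each bound single-letterizes exactly as in Definition \ref{def:characterization}. The only cosmetic difference is that you bound $H(M_0,M_k)$ jointly to obtain the sum-rate inequalities $R_0+R_k\geq I(U;W_0,W_k)$ directly, whereas the paper bounds $I(U^n;M_k\mid M_0)$ to get the slightly stronger conditional forms $R_k\geq I(U;W_k\mid W_0)$, which yield the same constraints via the chain rule.
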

\begin{proof}[Lemma \ref{lemma:belongtoQ0}]
We consider an encoding strategy $\sigma$, then 
\begin{align}
n R_0 \geq& H(M_0) \geq I(M_0;U^n)  \label{e:Ide1} \\
=& \sum_{t=1}^n I(U_t;M_0|U^{t-1}) \label{memorylesssourceyyy} \\
=& n I(U_T;M_0|U^{T-1},T)\label{e:Ide2}\\
=& n I(U_T;M_0,U^{T-1},T)\label{e:Ide3}\\
\geq& n I(U_T;M_0,T)\label{e:Ide4}\\
=& n I(U;W_0).\label{e:Ide5}
\end{align}
In fact, \eqref{e:Ide2} follows from the introduction of the uniform random variable $T\in\{1,\ldots,n\}$, \eqref{e:Ide3} comes from the i.i.d. property of the source and \eqref{e:Ide5} follows from the identification of the auxiliary random variables $(U,W_2)$. Similarly,
\begin{align} 
nR_1 \geq&  H(M_1)  \geq I(U^n;M_1|M_0) \label{1lossy source} \\ 
\geq& nI(U_T;M_1|M_0,T) \\ 
=&nI(U;W_1|W_0).
\end{align}
Similarly, 
$nR_2 \geq  
nI(U;W_2|W_0).$
\end{proof}

\begin{lemma}\label{lemma:singlelettercost}
For all $(\sigma,\tau_1,\tau_2)$ and $i\in\{1,2\}$, 
we have
\begin{align}
   c_e^n(\sigma,\tau_1,\tau_2) =& \E   \big[c_e(U,V_1,V_2)\big],\\
   c_i^n(\sigma,\tau_1,\tau_{2}) =& \E
   \big[c_i(U,V_1,V_{2})\big].
\end{align}
evaluated with respect to $\mathcal{P}_U\mathcal{P}_{W_0W_1W_2|U}^{\sigma}\mathcal{P}_{V_1|W_0W_1}^{\tau_1}\mathcal{P}_{V_2|W_0W_2}^{\tau_2}$.

Moreover, for each $(m_0,m_1,m_2,v_1^n,v_2^n)$, we have 
\begin{align}
&C_i^{\sigma}(m_0,m_1,m_2,v_1^n,v_2^n)=C^{\star}_i(w_0,w_1,w_2,v_1,v_2), \\
  &\Psi^{\sigma}_i(\tau_1,\tau_{2},m_0,m_i) =  \E_{\mathcal{P}_{U}}
  \big[\Psi^{\star}_i(\mathcal{P}^{\tau_1}_{V_1|W_0W_1},\mathcal{P}^{\tau_2}_{V_2|W_0W_2},w_0,w_i)\big].\nonumber
\end{align}

\end{lemma}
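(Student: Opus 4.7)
The plan is to unfold the left-hand side of each equality using the definitions in Section II and match the result to the corresponding single-letter quantity through the identifications $W_0=(M_0,T)$, $W_1=M_1$, $W_2=M_2$, $(U,V_1,V_2)=(U_T,V_{1,T},V_{2,T})$, relying on the joint law computed in Lemma \ref{lemma:decomposition}.

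First I would handle the long-run costs. By Definition \ref{def:longrun}, $c_e^n(\sigma,\tau_1,\tau_2)=\frac{1}{n}\sum_{t=1}^n \mathbb{E}_{\sigma,\tau_1,\tau_2}[c_e(U_t,V_{1,t},V_{2,t})]$. Because $T$ is uniform on $\{1,\ldots,n\}$ and independent of $(U^n,M_0,M_1,M_2,V_1^n,V_2^n)$, this empirical average rewrites as $\mathbb{E}[c_e(U_T,V_{1,T},V_{2,T})]=\mathbb{E}[c_e(U,V_1,V_2)]$; Lemma \ref{lemma:decomposition} and its subsequent factorization then identify the joint law of $(U,W_0,W_1,W_2,V_1,V_2)$ with $\mathcal{P}_U\mathcal{P}_{W_0W_1W_2|U}^{\sigma}\mathcal{P}_{V_1|W_0W_1}^{\tau_1}\mathcal{P}_{V_2|W_0W_2}^{\tau_2}$. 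The same argument handles $c_i^n$.

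For the cost identity $C_i^{\sigma}=C_i^{\star}$, I expand $C_i^{\sigma}(m_0,m_1,m_2,v_1^n,v_2^n)=\frac{1}{n}\sum_{t=1}^n \sum_{u_t}\mathcal{P}^{\sigma}(u_t|m_0,m_1,m_2)\, c_i(u_t,v_{1,t},v_{2,t})$. Since $T$ is independent of $(U^n,M_0,M_1,M_2)$, we have $\mathcal{P}^{\sigma}(u_t|m_0,m_1,m_2)=\mathcal{P}^{\sigma,\tau_1,\tau_2}(U=u \mid W_0=(m_0,t),W_1=m_1,W_2=m_2)$, so the inner sum coincides with $C_i^{\star}(v_{1,t},v_{2,t},(m_0,t),m_1,m_2)$; averaging over $t$ then yields $\mathbb{E}_T[C_i^{\star}(V_1,V_2,W_0,W_1,W_2) \mid M_0,M_1,M_2,V_1^n,V_2^n]$, which is the stated identity under the realization-level identification.

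For the last identity, I would plug the previous step into the definition of $\Psi^{\sigma}_1$ and use (i) $\mathcal{P}^{\sigma}(m_2|m_0,m_1)=\mathcal{Q}(w_2|w_0,w_1)$, which holds by $T$-independence, and (ii) the decoder marginals satisfy $\mathcal{P}^{\tau_i}(v_{i,t}|m_0,m_i)=\mathcal{Q}(v_i|w_0,w_i)$, again by $T$-independence together with the definition of $\tau_i$. These substitutions reassemble the inner sums into $\Psi^{\star}_1(\mathcal{P}^{\tau_1}_{V_1|W_0W_1},\mathcal{P}^{\tau_2}_{V_2|W_0W_2},w_0,w_1)$, and averaging over $t$, that is, over $T$, recovers the displayed expectation. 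The most delicate point is item (ii): one must verify that the time-$t$ marginal of the stochastic decoder $\tau_i$ coincides with the single-letter conditional $\mathcal{Q}(v_i|w_0,w_i)$ implied by the identifications, which follows from the Markov chains of Lemma \ref{lemma:decomposition} and from the fact that $T$ is drawn independently of the decoder's output. Once this matching of conditional laws is in place, the rest is a bookkeeping exercise of relabeling sums as expectations.
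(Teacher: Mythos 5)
Your proposal is correct and follows essentially the same route as the paper: unfold the definitions of $c_e^n$, $C_i^{\sigma}$ and $\Psi_i^{\sigma}$, use the uniform and independent time-sharing variable $T$ together with the identifications $W_0=(M_0,T)$, $W_1=M_1$, $W_2=M_2$, $(U,V_1,V_2)=(U_T,V_{1,T},V_{2,T})$, and invoke the factorization from Lemma \ref{lemma:decomposition} to match the resulting averages with the single-letter quantities. The point you flag as delicate (matching the time-$t$ decoder marginal with $\mathcal{P}^{\tau_i}_{V_i|W_0W_i}$) is in fact immediate from the construction of the auxiliary variables, exactly as in the paper's bookkeeping.
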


\begin{proof}[Lemma \ref{lemma:singlelettercost}] By Definition \ref{def:longrun} we have
\begin{align}
&c_e^n(\sigma,\tau_1,\tau_2) = \sum_{u^n, m_0,m_1,m_2,\atop v_1^n,v_2^n}\bigg(\prod_{t=1}^n\mathcal{P}_{U}(u_t)\bigg) \nonumber \\
&\times\mathcal{P}^{\sigma}_{M_0M_1M_2|U^n}(m_0,m_1,m_2|u^n)\mathcal{P}^{\tau_1}_{V_1^n|M_0M_1}(v_1^n|m_0,m_1) \nonumber \\
&\times\mathcal{P}^{\tau_2}_{V_2^n|M_0M_2}(v_2^n|m_0,m_2)\cdot\Bigg[\frac{1}{n}\sum_{t=1}^n c_e(u_t,v_{1,t},v_{2,t})\Bigg] \nonumber\\
=& \sum_{t=1}^n \sum_{u_t,m_0,m_1,m_2,\atop v_{1,t}, v_{2,t}} \mathcal{P}^{\sigma,\tau_1,\tau_2}(u_t,m_0,m_1,m_2,t, v_{1,t}, v_{2,t}) \nonumber \\
&\times c_e(u_t,v_{1,t},v_{2,t}) = \E   \big[c_e(U,V_1,V_2)\big] .\nonumber
\end{align} For all $(m_0,m_1,m_2,v_1^n,v_2^n)$ and $i\in\{1,2\}$ we have
\begin{align}
&C_i^{\sigma}(m_0,m_1,m_2,v_1^n,v_2^n)= \nonumber \\ &\sum_{u^n}\mathcal{P}^{\sigma}_{U^n|M_0M_1M_2|}(u^n|m_0,m_1,m_2)\Bigg[\frac{1}{n}\sum_{t=1}^n c_i(u_t,v_{1,t},v_{2,t})\Bigg] \nonumber \\ 
&=\sum_{t=1}^n\sum_{u_t}\mathcal{P}^{\sigma}(u_t|m_0,m_1,m_2,t)c_i(u_t,v_{1,t},v_{2,t}) \nonumber \\
&=\sum_{u}\mathcal{P}^{\sigma}(u|w_0,w_1,w_2)c_i(u,v_{1},v_{2}) \nonumber \\
&=C^{\star}_i(v_1,v_2,w_0,w_1,w_2).\nonumber
\end{align} Moreover,
\begin{align}
&\Psi^{\sigma}_1(\tau_1,\tau_{2},m_0,m_1)=\sum_{u^n}\bigg(\prod_{t=1}^n\mathcal{P}_{U}(u_t)\bigg) \nonumber \\
&\times \sum_{m_{2}}\mathcal{P}^{\sigma}(m_{2}|m_0,m_1)\times \nonumber \\ &\sum_{v_1^n,v_{2}^n}\mathcal{P}^{\tau_1}(v_1^n|m_0,m_1)\mathcal{P}^{\tau_{2}}(v_{2}^n|m_0,m_{2})\times \nonumber \\&\Bigg[\frac{1}{n}\sum_{t=1}^n c_1(u_t,v_{1,t},v_{2,t})\Bigg] \nonumber\\
&=\sum_{t=1}^n\sum_{m_{2},u_t}\mathcal{P}_{U}(u_t)\mathcal{P}^{\sigma}(m_0,m_{2},u_t,t|m_0,m_1,t)\times \nonumber \\ &\sum_{v_{1,t},v_{2,t}}\mathcal{P}^{\tau_{1,t}}(v_{1,t}|m_0,m_1,t)\mathcal{P}^{\tau_{2,t}}(v_{2,t}|m_0,m_{2},t)\times \nonumber \\&\big[c_1(u_t,v_{1,t},v_{2,t})\big]. \nonumber\\
&=\sum_{m_{2},u_t,t}\mathcal{P}_{U}(u_t)\mathcal{P}^{\sigma}(m_{2},t|m_0,m_1,t)\times \nonumber \\ &\sum_{v_{1,t},v_{2,t}}\mathcal{P}^{\tau_{1,t}}(v_{1,t}|m_0,m_1,t)\mathcal{P}^{\tau_{2,t}}(v_{2,t}|m_0,m_2,t)\times \nonumber \\ &C^{\star}_1(v_{1,t},v_{2,t},m_0,m_1,m_2,t) \nonumber \\
&=\sum_{w_{2},u}\mathcal{P}_{U}(u)\mathcal{P}^{\sigma}(w_0,w_{2}|w_0,w_1) \sum_{v_1,v_2}\mathcal{P}^{\tau_{1,t}}(v_1|w_0,w_1)\times \nonumber \\ &\mathcal{P}^{\tau_{2,t}}(v_2|w_0,w_2)C^{\star}_1(v_1,v_2,w_0,w_1,w_2) \nonumber \\
&=\mathbb{E}_{\mathcal{P}_{U}}[\Psi^{\star}_1(\mathcal{P}^{\tau_{1,t}}_{V_1| W_0W_1},\mathcal{P}^{\tau_{2,t}}_{V_2|W_0W_2},w_0,w_1)].\nonumber
\end{align}
\end{proof}

\begin{lemma} \label{lemma:4}
For all $\sigma$, we have
\begin{align}
&\mathbb{BNE}(\mathcal{P}^{\sigma}_{W_0W_1W_2|U}) = \Big\{(\mathcal{Q}_{V_1|W_0W_1},\mathcal{Q}_{V_2|W_0W_2}), \ \nonumber\\
&
\exists (\tau_1,\tau_2), \; \tau_1 \in BR_1^{\sigma}(\tau_2), \tau_2 \in BR_2^{\sigma}(\tau_1), \nonumber\\ &
\; \; \mathcal{Q}_{V_1|W_0W_1} = \mathcal{P}^{\tau_1}_{V_1|W_0W_1}, \mathcal{Q}_{V_2|W_0W_2} = \mathcal{P}^{\tau_2}_{V_2|W_0W_2}\Big\}.  \label{eq:lemmaBRset}
\end{align}
\end{lemma}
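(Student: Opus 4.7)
The plan is to establish the set equality by double inclusion, using Lemma \ref{lemma:singlelettercost} as the central tool for translating between long-run and single-letter expected costs.

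For the inclusion $\subseteq$, given $(\mathcal{Q}_{V_1|W_0W_1}, \mathcal{Q}_{V_2|W_0W_2}) \in \mathbb{BNE}(\mathcal{P}^\sigma_{W_0W_1W_2|U})$, I would build decoder strategies as products across time of the single-letter behavioral strategies: for $i \in \{1,2\}$, set $\mathcal{P}^{\tau_i}(v_i^n|m_0, m_i) = \prod_{t=1}^n \mathcal{Q}_{V_i|W_0W_i}(v_{i,t}|(m_0, t), m_i)$. Because $W_0 = (M_0, T)$ with $T$ uniform on $\{1,\ldots,n\}$, a direct computation then gives $\mathcal{P}^{\tau_i}_{V_i|W_0W_i} = \mathcal{Q}_{V_i|W_0W_i}$. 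Lemma \ref{lemma:singlelettercost} rewrites the expected cost of any deviation $\tilde{\tau}_1$ of $\mathcal{D}_1$ as an average over $T$ of single-letter costs $\Psi^\star_1(\mathcal{P}^{\tilde{\tau}_1}_{V_1|W_0W_1}, \mathcal{Q}_{V_2|W_0W_2}, (m_0, T), m_1)$. Since the single-letter BNE property gives a pointwise inequality at every $(w_0, w_1)$, integrating over $T$ yields $\tau_1 \in BR_1^\sigma(\tau_2)$, and the symmetric argument gives $\tau_2 \in BR_2^\sigma(\tau_1)$.

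For the inclusion $\supseteq$, I would take $(\tau_1, \tau_2)$ with $\tau_1 \in BR_1^\sigma(\tau_2)$ and $\tau_2 \in BR_2^\sigma(\tau_1)$, set $\mathcal{Q}_i := \mathcal{P}^{\tau_i}_{V_i|W_0W_i}$, and argue by contradiction. If the pair violates the single-letter BNE condition, then without loss of generality there exist $\tilde{\mathcal{Q}}_1$ and $(w_0^*, w_1^*) = ((m_0^*, t^*), m_1^*)$ with $\Psi^\star_1(\tilde{\mathcal{Q}}_1, \mathcal{Q}_2, w_0^*, w_1^*) < \Psi^\star_1(\mathcal{Q}_1, \mathcal{Q}_2, w_0^*, w_1^*)$. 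I construct $\tilde{\tau}_1$ that agrees with $\tau_1$ on every type $(m_0, m_1) \neq (m_0^*, m_1^*)$ and, on type $(m_0^*, m_1^*)$, replaces only the marginal of $V_{1, t^*}$ by an independent draw from $\tilde{\mathcal{Q}}_1(\cdot|w_0^*, w_1^*)$, leaving the remaining coordinates coupled as under $\tau_1$. Because $V_2^n$ is generated by $\tau_2$ from $(M_0, M_2)$ alone and is conditionally independent of $V_1^n$ given $(M_0, M_1, M_2)$, this surgical change affects the induced single-letter distribution only at time $t^*$ for the type $(m_0^*, m_1^*)$. Applying Lemma \ref{lemma:singlelettercost} once more gives $\Psi^\sigma_1(\tilde{\tau}_1, \tau_2, m_0^*, m_1^*) < \Psi^\sigma_1(\tau_1, \tau_2, m_0^*, m_1^*)$, contradicting $\tau_1 \in BR_1^\sigma(\tau_2)$; the argument for $\mathcal{D}_2$ is symmetric.

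The main obstacle is the surgical deviation in the backward direction: $\tilde{\tau}_1$ must perturb exactly one slot of one type so that the long-run cost differential collapses onto the single offending single-letter inequality. Two facts make this work cleanly: $V_2^n$ does not depend on $V_1^n$ given $(M_0, M_2)$, and the single-letter payoff $\Psi^\star_1(\mathcal{Q}_1, \mathcal{Q}_2, w_0, w_1)$ depends on $\mathcal{Q}_1$ only through its conditional at $(w_0, w_1)$. A minor technical point is to restrict the BNE quantifier to pairs $(w_0, w_1)$ in the support of $\mathcal{P}^\sigma_{W_0W_1}$ so that the conditionals $\mathcal{Q}(w_2|w_0, w_1)$ appearing in $\Psi^\star_1$ are well-defined; outside this support the choice of $\mathcal{Q}_{V_1|W_0W_1}$ is immaterial.
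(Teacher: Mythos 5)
Your proposal is correct and follows essentially the same route as the paper: the forward inclusion is established by forming the i.i.d.\ product strategies $\mathcal{P}^{\tau_i}(v_i^n|m_0,m_i)=\prod_{t=1}^n\mathcal{Q}_{V_i|W_0W_i}(v_{i,t}|m_0,m_i,t)$, checking that the induced conditionals coincide with $\mathcal{Q}_{V_i|W_0W_i}$, and transporting any deviation through the cost identity of Lemma \ref{lemma:singlelettercost}, exactly as in the paper. The reverse inclusion, which the paper dismisses as direct, is what you make explicit via the additive-in-$t$ decomposition and the single-slot surgical deviation; that argument, together with your remark that the single-letter inequalities need only be imposed on the support of $\mathcal{P}^{\sigma}_{W_0W_1}$ (resp.\ $\mathcal{P}^{\sigma}_{W_0W_2}$), is consistent with the paper's proof.
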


\begin{proof}[Lemma \ref{lemma:4}] Fix $\sigma$ and let
$(\mathcal{Q}_{V_1| W_0W_1}, \mathcal{Q}_{V_2|W_0W_2}) \in \mathbb{BNE}(\mathcal{P}^{\sigma}_{W_0W_1W_2|U})$. We consider $(\tau_1,\tau_2)$ such that 
\begin{align}
    \mathcal{P}^{\tau_1}_{V_1^n|M_0M_1}(v_1^n|m_0,m_1) =& \prod_{t=1}^n \mathcal{Q}_{V_1| W_0W_1}(v_{1,t}|m_0,m_1,t), \nonumber\\
    \mathcal{P}^{\tau_2}_{V_2^n|M_0M_2}(v_2^n|m_0,m_2) =& \prod_{t=1}^n \mathcal{Q}_{V_2| W_0W_2}(v_{2,t}|m_0,m_2,t).\nonumber
\end{align} 

Then $\forall (w_0,w_1,v_1)=(m_0,m_1,t,v_{1,t})$,
\begin{align}
&\mathcal{P}^{\tau_1}_{V_1|W_0W_1}(v_1|w_0,w_1) = \mathcal{P}^{\tau_1}_{V_1|W_0W_1}(v_{1,t}|m_0,m_1,t) \nonumber\\
=& \sum_{v_1^{t-1},v_{1,t+1}^n} \prod_{s=1}^n \mathcal{Q}_{V_1| W_0W_1}(v_{1,s}|m_0,m_1,s)\nonumber\\
=&  \mathcal{Q}_{V_1| W_0W_1}(v_{1,t}|m_0,m_1,t) \nonumber \\ 
&\times 
\sum_{v_1^{t-1},v_{1,t+1}^n} \prod_{s\neq t} \mathcal{Q}_{V_1| W_0W_1}(v_{1,s}|m_0,m_1,s)\nonumber\\
=&  \mathcal{Q}_{V_1| W_0W_1}(v_{1,t}|m_0,m_1,t)= \mathcal{Q}_{V_1| W_0W_1}(v_{1}|w_0,w_1).
\label{eq:T2}
\end{align} 
Assume that $\forall \tau_2, \tau_1 \notin BR_1^{\sigma}(\tau_2)$, then there exists $\bar{\tau}_1 \neq \tau_1$ such that 
\begin{align}
&\E_{\mathcal{P}^{\sigma}_{U}}\big[\Psi^{\star}_1(\mathcal{P}^{\bar{\tau}_1}_{V_1|W_0W_1},\mathcal{P}^{\tau_2}_{V_2|W_0W_2},w_0,w_1)\big] =  \nonumber \\ &\Psi^{\sigma}_1(\bar{\tau}_1,\tau_{2},m_0,m_1)<\Psi^{\sigma}_1(\tau_1,\tau_{2},m_0,m_1)=  \nonumber\\
&\E_{\mathcal{P}^{\sigma}_{U|W_0W_1W_2}}\big[\Psi^{\star}_1(\mathcal{P}^{\tau_1}_{V_1|W_0W_1},\mathcal{P}^{\tau_2}_{V_2|W_0W_2},w_0,w_1)\big], \forall \tau_2, \nonumber
\end{align} 
which contradicts  $(\mathcal{Q}_{V_1| W_0W_1},\mathcal{Q}_{V_2| W_0W_2})\in \mathbb{BNE}(\mathcal{P}^{\sigma}_{W_0W_1W_2|U})$. Therefore, $\tau_1\in BR_1^{\sigma}(\tau_2)$ and thus $\mathcal{Q}_{V_1|W_0W_1}$ belongs to the right-hand side of \eqref{eq:lemmaBRset}. 

Therefore for all $m_0,m_1$,
\begin{align}
&\Psi^{\sigma}_1(\tau_1,\tau_{2},m_0,m_1) \nonumber \\ =& \E_{\mathcal{P}^{\sigma}_{U}}\big[\Psi^{\star}_1(\mathcal{P}^{\tau_1}_{V_1|W_0W_1},\mathcal{P}^{\tau_2}_{V_2|W_0W_2},w_0,w_1)\big] \nonumber \\
=& \min_{\mathcal{Q}_{V_1| W_0W_1}} \E_{\mathcal{P}^{\sigma}_{U}}\big[\Psi^{\star}_1(\mathcal{Q}_{V_1|W_0W_1},\mathcal{Q}_{V_2|W_0W_2},w_0,w_1)\big]\nonumber\\
\leq& \min_{\tilde{\tau}_1} \E_{\mathcal{P}^{\sigma}_{U} }\big[\Psi^{\star}_1(\mathcal{P}^{\tilde{\tau}_1}_{V_1| W_0W_1},\mathcal{P}^{\tau_2}_{V_2|W_0W_2},w_0,w_1)\big] \nonumber \\ =& \min_{\tilde{\tau}_1} \Psi^{\sigma}_1(\tilde{\tau}_1,\tau_{2},m_0,m_1).
\nonumber
\end{align}
Hence $\tau_1\in BR_1^{\sigma}(\tau_2)$.
Similarly,
$\tau_2\in BR_2^{\sigma}(\tau_1)$. The other inclusion is direct. 
\end{proof}

For any strategy $\sigma$, we have
\begin{align}
&\underset{\tau_1,\tau_2}{\max} \;  c_e^n(\sigma,\tau_1,\tau_2) 
=\underset{\tau_1,\tau_2}{\max} \; \mathbb{E}_{\mathcal{P}^{\sigma}_{W_0W_1W_2|U}\atop\mathcal{P}^{\tau_1}_{V_{1}|W_0W_1}  \mathcal{P}^{\tau_2}_{V_{2}|W_0W_2}}\Big[c_e(U,V_{1},V_{2})\Big] \label{zachi1}\\ 
=&\underset{(\mathcal{Q}_{V_{1}|W_0W_1},\mathcal{Q}_{V_{2}|W_0W_2}) \in \atop \mathbb{BNE}(\mathcal{P}^{\sigma}_{W_0W_1W_2|U})}{\max}\mathbb{E}_{\mathcal{P}^{\sigma}_{W_0W_1W_2|U} \atop \mathcal{Q}_{V_{1}|W_0W_1} \ \mathcal{Q}_{V_{2}|W_0W_2}}\Big[c_e(U,V_{1},V_{2})\Big] \label{zachi2}\\ 
\geq&\underset{\mathcal{Q}_{W_0W_1W_2|U}\atop\in\Hat{\mathbb{Q}}_0(R_0,R_1,R_2)}{\inf}\underset{(\mathcal{Q}_{V_{1}|W_0W_1},\mathcal{Q}_{V_{2}|W_0W_2}) \in \atop \mathbb{BNE}(\mathcal{Q}_{W_0W_1W_2|U})}{\max}\mathbb{E}
\Big[c_e(U,V_{1},V_{2})\Big] \label{zachi3}\\ 
=&\Hat{\Gamma}_e(R_0,R_1,R_2). \label{optdistoooo}
\end{align} 
Equations \eqref{zachi1} and \eqref{zachi2} follow from Lemma \ref{lemma:singlelettercost}, whereas \eqref{zachi3} comes from Lemma \ref{lemma:belongtoQ0}. This concludes the converse proof of Theorem \ref{main result}.

\appendices

\section{Proof of Lemma \ref{lemm134} }
\label{battod1}

In the following, we prove Lemma \ref{lemm134}. We control the beliefs of decoder $\mathcal{D}_2$ about the type of decoder $\mathcal{D}_1$. The analogous case can be proven following similar arguments. 
    
We denote the codebook by $\mathcal{C}$, and for each $m_0$, we denote the inner codebook by $\mathcal{C}(m_0)=\{w_1^n(m_0,m_1),w_2^n(m_0,m_2) \in \mathcal{C}, m_1\in\{1,..2^{\lfloor nR_1\rfloor}\}, m_2\in\{1,..2^{\lfloor nR_2\rfloor}\}\}
$. For each $m_0\in\{1,..2^{\lfloor nR_0\rfloor}\}$, $w_1\in\mathcal{W}_1$,  $w_2\in\mathcal{W}_2$, $t\in\{1,\ldots,n\}$ and $\delta>0$, we introduce the following sets
\begin{align}
\mathcal{A}_1^{t}(w_1|m_0) =&\big\{w_1^n \in \mathcal{C}(m_0), w_{1,t}=w_1\big\},\label{qwased}\\
\mathcal{A}_2^{t}(w_2|m_0) =&\big\{w_2^n \in \mathcal{C}(m_0), w_{2,t}=w_2\big\},\label{qwased2}\\
\mathcal{J}(m_0)=&\Bigg\{t \in \{1,..,n\}, \nonumber \\ &\bigg|\bigg|\mathcal{Q}_{W_1|W_0}(\cdot|w_{0t})-\frac{|\mathcal{A}_1^{t}(\cdot|m_0)|}{|\mathcal{C}(m_0)|} \bigg|\bigg| > \delta, \nonumber \\ &\bigg|\bigg|\mathcal{Q}_{W_2|W_0}(\cdot|w_{0t})-\frac{|\mathcal{A}_2^{t}(\cdot|m_0)|}{|\mathcal{C}(m_0)|} \bigg|\bigg| > \delta \Bigg\}, \label{asd123}
\end{align}
where $||\cdot||$ denote the $L_1$-norm.
 We have by the law of large numbers,   $\mathcal{P}(|\mathcal{J}(m_0)|>\delta) \underset{n\mapsto\infty} {\longrightarrow}0 \label{asde123}$.
For any $w_1^n$, $w_0^n$,  and sufficiently small $\delta>0$, and sufficiently large $n$ we have
\begin{align}
\mathcal{P}(w_1^n|w_0^n) =&\sum_{u^n \in \mathcal{T}_{\delta}^n(\mathcal{P}_{U|W_0W_1},w_0^n,w_1^n)}\mathcal{P}(w_1^n,u^n|w_0^n) \\=&\sum_{u^n \in \mathcal{T}_{\delta}^n(\mathcal{P}_{U|W_0W_1},w_0^n,w_1^n)} \mathcal{P}(u^n|w_0^n)\mathcal{P}(w_1^n|u^n,w_0^n)  \\ 
=& \sum_{u^n \in \mathcal{T}_{\delta}^n(\mathcal{P}_{U|W_0W_1},w_0^n,w_1^n)}2^{-nH(U|W_0)} \mathbbm{1}(E_{\delta}=0)  \\ 
=&|\mathcal{T}_{\delta}^n(\mathcal{P}_{U|W_0W_1},w_0^n,w_1^n)|2^{-nH(U|W_0)}   \\=&
2^{nH(U|W_0W_1)}2^{-nH(U|W_0)} \\ =&
2^{-nI(U;W_1|W_0)} \\ 
=& 
2^{-n(R_1-\eta)}. \label{82} 
\end{align}

For any $ t, w_{1t}, w_0^n$, and sufficiently small $\delta>0$, we have
\begin{align}
 \mathcal{P}_{W_{1t}|W_0^n}(w_{1t}|w_0^n) &= 
\sum_{w_1^n \in \mathcal{T}^{n}_{\delta}(P_{W_1|W_0},w^{n}_0)}\mathcal{P}(w_1^n|w_0^n) \\
&=\sum_{w_1^n \in \mathcal{A}_1^{t}(w_{1t}|w^{n}_0)}2^{-n(R_1-\eta)}
\label{eqzxc84}\\
  &=
|\mathcal{A}_1^{t}(w_{1t}|w_0^{n})|2^{-n(R_1-\eta)}
 \\ 
  &=
\label{86}
\frac{|\mathcal{A}_1^{t}(w_{1t}|w_0^{n})|}{|\mathcal{C}(w_0^n)|}
 \\ 
 &=
 \mathcal{Q}_{W_{1t}|W_{0t}}(w_{1t}|w_{0t}) \label{aa105}
\end{align}
where \eqref{eqzxc84} follows from \eqref{82} 
, and \eqref{aa105} follows from \eqref{asd123}. Therefore, we get
\begin{align}
     &\lim_{n\mapsto\infty}\mathbb{E} \Big[\frac{1}{n}\sum_{t=1}^n D(\mathcal{P}_{W_{1t}|W^n_0}(.|W_0^n) || \mathcal{Q}_{W_{1t}|W_{0t}}(.|W_{0t})) \Big|E_{\delta}=0\Big] \\
     =&\lim_{n\mapsto\infty}\sum_{w_1^n,w_0^n}\mathcal{P}^{\sigma}(w_1^n,w_0^n\Big| E_{\delta}=0) \frac{1}{n}\sum_{t=1}^n\sum_{w_{1}} \mathcal{P}(w_{1}|w^n_0)\times \nonumber \\ &\log_2\frac{\mathcal{P}(w_{1}|w^n_0)}{\mathcal{Q}(w_{1t}|w_{0t})} \\
     =& 0 \label{89sd}.
\end{align}
where \eqref{89sd} follows since $\underset{n\mapsto\infty}{\lim}\log\frac{\mathcal{P}_{W_{1t|W_0^n}}(w_{1t}|w_0^n)}{\mathcal{Q}_{W_{1t}|W_{0t}}(w_{1t}|w_{0t})} =0$.

\section{Control of Beliefs about the state $U$: Proof of Lemma \ref{lemm133}} 
 \label{cobatsu}

We denote the Bayesian posterior belief about the state $\mathcal{P}^{\sigma}_{U_t|M_1M_2M_0}(\cdot|m_1,m_2,m_0)\in\Delta(\mathcal{U})$ 
by $\mathcal{P}^{m_1,m_2,m_0}_{U_t}$ 
. We show that on average, the Bayesian belief is close in KL distance to the target belief $\mathcal{Q}_{U|W_{0}W_{1}W_{2}}$ 
induced by the single-letter distribution $\mathcal{Q}_{W_{0}W_{1}W_{2}|U}$. 
The indicator of error event $E_{\delta} \in \{0,1\}$ is as given in \eqref{yyttyytt}. 
Assuming the distribution $\mathcal{Q}_{U|W_1W_2W_0}$ is fully supported, the beliefs about the state are controlled as follows
  \begin{align}
       &\mathbb{E} \Big[\frac{1}{n}\sum_{t=1}^n D(\mathcal{P}^{m_1,m_2,\atop m_0}_{U_t} ||\mathcal{Q}_{U}(\cdot | W_{1t},W_{2t},W_{0t})) \Big|E_{\delta}=0\Big] \\
    =& \sum_{m_1,m_2,m_0,\atop w^n_2 
, w_1^n,w_0^n}\mathcal{P}^{\sigma,\tau_1,\tau_2}(m_1,m_2,m_0,w_1^n, w^n_2 ,w_0^n\Big| E_{\delta}=0)\nonumber \times\\ &\frac{1}{n}\sum_{t=1}^n D(\mathcal{P}^{m_1,m_2,\atop m_0}_{U_t} ||\mathcal{Q}_{U|W_{0}W_{1}W_{2}}(\cdot | W_{1t},W_{2t},W_{0t})) \label{apb2}\\ 
     =& \sum_{m_1,m_2,m_0,\atop w^n_2 
, w_1^n,w_0^n}\mathcal{P}^{\sigma,\tau_1,\tau_2}(m_1,m_2,m_0,w_1^n, w^n_2 ,w_0^n\Big| E_{\delta}=0) \nonumber \times \\ &\frac{1}{n}\sum_{t=1}^n\sum_{u}\mathcal{P}_{U_t}^{m_1,m_2,\atop m_0}(u)\log_2\frac{\mathcal{P}_{U_t}^{m_1,m_2,\atop m_0}(u)}{\mathcal{Q}_{U|W_{0}W_{1}W_{2}}(u| w_{1t},w_{2t},w_{0t})} \label{apzbb2}\\
     =& \sum_{m_1,m_2,m_0,\atop w^n_2 
, w_1^n,w_0^n}\mathcal{P}^{\sigma,\tau_1,\tau_2}(m_1,m_2,m_0,w_1^n,w^n_2 ,w_0^n\Big| E_{\delta}=0)\times \nonumber \\ &\frac{1}{n}\sum_{t=1}^n\sum_{u}\mathcal{P}_{U_t}^{m_1,m_2,\atop m_0}(u)\log_2 \frac{1}{\mathcal{Q}_{U|W_{0}W_{1}W_{2}}(u| w_{1t},w_{2t},w_{0t})} \nonumber \\
     &- \sum_{m_1,m_2,m_0,\atop w^n_2 
, w_1^n,w_0^n}\mathcal{P}^{\sigma,\tau_1,\tau_2}(m_1,m_2,m_0,w_1^n,w^n_2 ,w_0^n\Big| E_{\delta}=0)\nonumber \times \\ &\frac{1}{n}\sum_{t=1}^n\sum_{u}\mathcal{P}_{U_t}^{m_1m_2,\atop m_0}(u)\log_2\frac{1}{\mathcal{P}_{U_t}^{m_1,m_2,\atop m_0}(u)} \label{apzbbb2}\\
    =& \frac{1}{n}  \sum_{m_1,m_2,m_0,\atop w^n_2 
, w_1^n,w_0^n}\mathcal{P}^{\sigma,\tau_1,\tau_2}(m_1,m_2,m_0,w_1^n,w^n_2 ,w_0^n\Big| E_{\delta}=0)\nonumber \times \\ &\sum_{t=1}^n\sum_{u}\mathcal{P}_{U_t}^{m_1m_2m_0}(u)\log_2\frac{1}{\mathcal{Q}_{U|W_{0}W_{1}W_{2}}(u| w_{1t},w_{2t},w_{0t})} \nonumber \\ & -\frac{1}{n}\sum_{t=1}^n H(U_t|M_1,M_2,M_0, E_{\delta}=0) \label{apb3} \\
    =& \frac{1}{n}\sum_{u^n,w_1^n,w^n_2 ,w_0^n}\mathcal{P}^{\sigma,\tau_1,\tau_2}(u^n,w_1^n, w^n_2 ,w_0^n\Big| E_{\delta}=0)\nonumber \times \\ &\log_2\frac{1}{\Pi_{t=1}^n\mathcal{Q}_{U|W_{0}W_{1}W_{2}}(u_t| w_{1t},w_{2t},w_{0t})} -\nonumber \\ &\frac{1}{n}\sum_{t=1}^n H(U_t|M_1,M_2,M_0, E_{\delta}=0) \label{apb4}\\
    =& \frac{1}{n}\sum_{u^n,w_1^n,w^n_2,w_0^n \in \mathcal{T}_{\delta}^n}\mathcal{P}^{\sigma,\tau_1,\tau_2}(u^n,w_1^n, w^n_2,w_0^n\Big| E_{\delta}=0)\nonumber \times \\ &\log_2\frac{1}{\Pi_{t=1}^n\mathcal{Q}_{U|W_{0}W_{1}W_{2}}(u_t| w_{1t},w_{2t},w_{0t})} -\nonumber \\ &\frac{1}{n}\sum_{t=1}^n H(U_t|M_1,M_2,M_0, E_{\delta}=0) \label{apb4444}\\
      \leq& \frac{1}{n}\sum_{u^n,w_1^n,\atop  w^n_2 ,w_0^n\in \mathcal{T}_{\delta}^n}\mathcal{P}^{\sigma,\tau_1,\tau_2}(u^n,w_1^n, w^n_2 ,w_0^n\Big| E_{\delta}=0)\cdot n \times \nonumber \\ &\big(H(U|W_1,W_2,W_0) + \delta \big) \nonumber \\ -&\frac{1}{n}H(U^n|M_1,M_2,M_0, E_{\delta}=0)  \label{apb5}\\
    \leq& \frac{1}{n}I(U^n;M_1,M_2,M_0\Big| E_{\delta}=0)-I(U;W_1,W_2,W_0)+\delta \nonumber \\ &+\frac{1}{n}+ \log_2|\mathcal{U}|\cdot\mathcal{P}^{\sigma,\tau_1,\tau_2}(E_{\delta}=1) \label{apb6}\\
   \leq& \eta + \delta +\frac{1}{n}+\log_2|\mathcal{U}|\cdot\mathcal{P}^{\sigma,\tau_1,\tau_2}(E_{\delta}=1)\label{apb7}. 
\end{align}
\begin{itemize}
    \item Equation \eqref{apb2} comes from the definition of expected K-L divergence. 
     \item Equation \eqref{apzbb2} comes from the definition of K-L divergence. 
     \item Equation \eqref{apzbbb2} comes from splitting the logarithm.  
    \item Equation \eqref{apb3} follows since: 
    \begin{align}
     &\sum_{m_1,m_2,m_0,\atop w^n_2 
, w_1^n,w_0^n}\mathcal{P}^{\sigma,\tau_1,\tau_2}(m_1,m_2,m_0,w_1^n,w_2^n,w_0^n\Big| E_{\delta}=0)\times \nonumber \\ &\frac{1}{n}\sum_{t=1}^n\sum_{u}\mathcal{P}_{U_t}^{m_1m_2m_0}(u)\log_2\frac{1}{\mathcal{P}_{U_t}^{m_1m_2m_0}(u)} \label{apbb2} \\
     =& \sum_{m_1,m_2,m_0,\atop w^n_2 
, w_1^n,w_0^n}\mathcal{P}^{\sigma,\tau_1,\tau_2}(m_1,m_2,m_0,w_1^n,w_2^n,w_0^n\Big| E_{\delta}=0)\times \nonumber \\ &\frac{1}{n}\sum_{t=1}^n H(U_t|M_1=m_1,M_2=m_2,M_0=m_0) \\ 
     =&\frac{1}{n}\sum_{t=1}^n\sum_{m_1,m_2,m_0,\atop w_2^n 
, w_1^n,w_0^n}\mathcal{P}(m_1,m_2,m_0,w_1^n,w_2^n,w_0^n\Big| E_{\delta}=0)\nonumber \\ &\times H(U_t|M_1=m_1,M_2=m_2,M_0=m_0) \\ 
     =&\frac{1}{n}\sum_{t=1}^n\sum_{m_1,m_2,m_0}\mathcal{P}^{\sigma,\tau_1,\tau_2}(m_1,m_2,m_0\Big| E_{\delta}=0)\times \nonumber \\&H(U_t|M_1=m_1,M_2=m_2,M_0=m_0) \\ 
     =& \frac{1}{n}\sum_{t=1}^n H(U_t|M_1,M_2,M_0,E_{\delta}=0).
\end{align}

    \item Equation \eqref{apb4} follows since: \begin{align}
    &\sum_{m_1,m_2,m_0,\atop w_2 
, w_1^n,w_0^n}\mathcal{P}^{\sigma,\tau_1,\tau_2}(m_1,m_2,m_0,w_1^n,w_0^n\Big|E_{\delta}=0)\times \nonumber \\ &\frac{1}{n}\sum_{t=1}^n\sum_{u}\mathcal{P}_{U_t}^{m_1m_2m_0}(u)\log_2\frac{1}{\mathcal{Q}_{U|W_{0}W_{1}W_{2}}(u| w_{1t},w_{2t},w_{0t})} \\
    =&\frac{1}{n}\sum_{t=1}^n\sum_{u_t,m_1,m_2,\atop m_0,w_1^n,w_0^n}\mathcal{P}^{\sigma,\tau_1,\tau_2}(u_t,m_1,m_2, m_0,w_1^n,w_0^n\Big|E_{\delta}=0)\nonumber \\ &\times \log_2\frac{1}{\mathcal{Q}_{U|W_{0}W_{1}W_{2}}(u_t| w_{1t},w_{2t},w_{0t})} \\
        =&\frac{1}{n}\sum_{t=1}^n\sum_{u^n,m_1,m_2,\atop m_0,w_1^n,w_0^n}\mathcal{P}^{\sigma,\tau_1,\atop \tau_2}(u^n,m_1,m_2,m_0, w_1^n,w_0^n\Big|E_{\delta}=0) \nonumber \\ &\times\log_2\frac{1}{\mathcal{Q}_{U|W_{0}W_{1}W_{2}}(u_t| w_{1t},w_{2t},w_{0t})} \\
        =&\frac{1}{n}\sum_{u^n,m_1,m_2,\atop m_0, w_1^n,w_0^n}\mathcal{P}^{\sigma,\tau_1,\tau_2}(u^n,m_1,m_2,m_0,w_1^n,w_2^n,w_0^n\Big|E_{\delta}=0)\nonumber \\ &\times \log_2\frac{1}{\Pi_{t=1}^n\mathcal{Q}_{U|W_{0}W_{1}W_{2}}(u_t| w_{1t},w_{2t},w_{0t})} \\
        =&\frac{1}{n}\sum_{u^n,w_1^n,w_2^n,w_0^n}\mathcal{P}^{\sigma,\tau_1,\tau_2}(u^n,w_1^n,w_2^n,w_0^n\Big|E_{\delta}=0)\nonumber \\ &\times \log_2\frac{1}{\Pi_{t=1}^n\mathcal{Q}_{U|W_{0}W_{1}W_{2}}(u_t| w_{1t},w_{2t},w_{0t})}.
\end{align}

\item Equation \eqref{apb4444} follows since the support of $\mathcal{P}^{\sigma,\tau_1,\tau_2}(u^n,w_1^n,w_2^n,w_0^n|E_{\delta})=\mathbb{P}\{(u^n,w_1^n,w_2^n,w_0^n) \in \mathcal{T}_{\delta}^n \}$ is included in $\mathcal{T}_{\delta}^n$. 
    \item Equation \eqref{apb5} follows from the typical average lemma property (Property 1 pp.26 in \cite{elgamal}) given in lemma \ref{apa.20jet}, and the chain rule of entropy: $H(U^n|M_1,M_2,M_0,W^n_1,W^n_2,W_0^n) \leq \sum_{t=1}^n H(U_t|M_1,M_2,M_0,W_1,W_2,W_0)$. 
   \item Equation \eqref{apb6} comes from the conditional entropy property and the fact that $H(U^n)=nH(U)$ for an i.i.d random variable $U$ and lemma \ref{apa.2222jet}.
    \item Equation \eqref{apb7} follows since $I(U^n;M_1,M_2,M_0) \leq H(M_1,M_2,M_0) \leq \log_2 |J| = n \cdot (R_1+R_2+R_0) = n \cdot (I(U;W_1,W_2,W_0)+\eta)$ and lemma \ref{apa.2222jet}.
\end{itemize}

If the expected probability of error is small over the codebooks, then it has to be small over at least one codebook. Therefore, equations  \eqref{eqarr} and \eqref{eqarrr} imply that: \begin{align}
    &\forall \varepsilon_2>0, \forall \eta>0, \exists \Bar{\delta}>0,\forall \delta \leq \Bar{\delta}, \exists \Bar{n} \in \mathbb{N}, \forall n\geq \Bar{n}, \exists b^{\star}, \nonumber \\  &\mathrm{such that} \ \mathcal{P}_{b^{\star}}(E^2_{\delta}=1)\leq \varepsilon_2. \label{apeqwaa} 
\end{align}     
       The strategy $\sigma$ of the encoder consists of using $b^{\star}$ in order to transmit the pair $(m_1,m_2,m_0)$ such that \\ $(U^n,W^n_0(m_{0}),W_1^n(m_{0},m_1))$ is a jointly typical sequence. By construction, this satisfies equation \eqref{apeqwaa}. 
       
\lemma \label{aplemmmaa10}Let $\mathcal{Q}_{W_{0}W_{1}W_{2}|U} \in \tilde{Q}_{0}(R_1,R_2,R_0)$, then $\forall \varepsilon>0$, $\forall \alpha>0,\gamma>0$, there exists $\Bar{\delta}$, $\forall \delta \leq\Bar{\delta}$, $\exists \Bar{n}$, $\forall n \geq \Bar{n}$, $\exists \sigma$, such that   $1-\mathcal{P}^{\sigma}(B_{\alpha,\gamma,\delta})\leq \varepsilon$. \\
      \proof{of lemma \ref{aplemmmaa10}}
      We have:  \begin{align}
          &1-\mathcal{P}_{\sigma}(B_{\alpha,\gamma,\delta}):=\mathcal{P}_{\sigma}(B^c_{\alpha,\gamma,\delta}) \\
          &= \mathcal{P}_{\sigma}(E_{\delta}=0)\mathcal{P}_{\sigma}(B^c_{\alpha,\gamma,\delta}|E_{\delta}=0) \\
          &\leq \mathcal{P}_{\sigma}(B^c_{\alpha,\gamma,\delta}|E_{\delta}=0) \\
          &\leq \varepsilon_{2} + \mathcal{P}_{\sigma}(B^c_{\alpha,\gamma,\delta}|E_{\delta}=1). \label{apeqw222}
      \end{align}
      \ \ Moreover, \begin{align}
          &\mathcal{P}_{\sigma}(B^c_{\alpha,\gamma,\delta}|E_{\delta}=0) = \nonumber \\ &\sum_{w_1^n,w_2^n,w_0^n,\atop m_1,m_2,m_0}\mathcal{P}_{\sigma}\bigg((w_1^n,w_0^n,m_1,m_2,m_0)\in B^c_{\alpha,\gamma,\delta}\Bigg|E_{\delta}=0\bigg) \label{apeqa1}\\
          &= \sum_{w_1^n,w_2^n,w_0^n,\atop m_1,m_2,m_0}\mathcal{P}_{\sigma}\bigg((w_1^n,w_0^n,m_1,m_2,m_0) \ \mathrm{ s.t. } \nonumber \\ &\frac{|T_{\alpha}(w_1^n,w_0^n,m_1,m_2,m_0)|}{n}\leq 1-\gamma \Bigg|E_{\delta}=0\bigg) \\
          &= \mathcal{P}_{\sigma}\bigg(\frac{\#}{n}\bigg\{t, D\bigg(\mathcal{P}^{m_1,m_2,m_0}_{U_t} \bigg|\bigg|\mathcal{Q}_{U|W_{0}W_{1}W_{2}}(\cdot | W_{1t},W_{2t},W_{0t})\bigg) \nonumber \\ &\leq \frac{\alpha^2}{2\ln2}<1-\gamma\Bigg| E_{\delta}=0\bigg\} \\
          &= \mathcal{P}_{\sigma}\bigg(\frac{\#}{n}\bigg\{t, D\bigg(\mathcal{P}^{m_1,m_2,m_0}_{U_t} \bigg|\bigg|\mathcal{Q}_{U|W_{0}W_{1}W_{2}}(\cdot | W_{1t},W_{2t},W_{0t})\bigg) \nonumber \\ &> \frac{\alpha^2}{2\ln2} \nonumber \\ &\geq \gamma\Big| E_{\delta}=0\bigg\} \label{apeqa4}\\
          &\leq \frac{2\ln{2}}{\alpha^2\gamma}\cdot\mathbb{E}_{\sigma}\bigg[\frac{1}{n}\sum_{t=1}^n D\bigg(\mathcal{P}^{m_1,m_2,\atop m_0}_{t} \bigg|\bigg|\mathcal{Q}_{U}(\cdot | W_{1t},W_{2t},W_{0t})\bigg)\bigg] \label{apeqa5}\\
          &\leq \frac{2\ln{2}}{\alpha^2\gamma}\cdot \bigg(\eta +\delta+\frac{2}{n}+2\log_2|\mathcal{U}|\cdot\mathcal{P}_{\sigma}(E^2_{\delta}=1),\bigg) \label{apeqa6}
      \end{align}
      \begin{itemize}
          \item Equations \eqref{apeqa1} to \eqref{apeqa4} are simple reformulations. 
          \item Equation \eqref{apeqa5} comes from using Markov's inequality given in lemma \ref{apeqa0}.  
          \item Equation \eqref{apeqa6} comes from equation \eqref{apb7}. 
      \end{itemize}
\lemma (Markov's Inequality)\label{apeqa0}. For all $\varepsilon_1>0$ , $\varepsilon_2>0$ we have:
\begin{align}
    &\mathbb{E}_{\sigma}\bigg[\frac{1}{n}\sum_{t=1}^n D\bigg(\mathcal{P}^{m_1,m_2,m_0}_{U_t} \bigg|\bigg|\mathcal{Q}_{U|W_{0}W_{1}W_{2}}(\cdot | W_{1t},W_{2t},W_{0t})\bigg)\bigg] \nonumber \\ &\leq \varepsilon_0 
    \implies \nonumber \\ &\mathcal{P}\bigg(\frac{\#}{n}\bigg\{t, D\bigg(\mathcal{P}^{m_1,m_2,m_0}_{U_t} \bigg|\bigg|\mathcal{Q}(\cdot | W_{1t},W_{2t},W_{0t})\bigg)>\varepsilon_1\bigg\}>\varepsilon_2\bigg) \nonumber \\ &\leq \frac{\varepsilon_0}{\varepsilon_1\cdot\varepsilon_2}.
\end{align} 
      \proof{of lemma \ref{apeqa0}} We denote by $D_t=D(\mathcal{P}^{m_1,m_2,m_0}_{U_t} ||\mathcal{Q}_{U|W_{0}W_{1}W_{2}}(\cdot | W_{1t},W_{2t},W_{0t})$ and $D^n=\{D_t\}_t$ the K-L divergence. We have that:
      \begin{align}
          &\mathcal{P}\bigg(\frac{\#}{n}\bigg\{t, \mathrm{s.t.} D_t>\varepsilon_1\bigg\}>\varepsilon_2\bigg) \nonumber \\ &=\mathcal{P}\bigg(\frac{1}{n}\cdot\sum^n_{t=1}\mathbbm{1}\bigg\{D_t>\varepsilon_1\bigg\}>\varepsilon_2 \bigg) \label{apmarkovchain1}\\
          \leq& \frac{\mathbb{E}\bigg[\frac{1}{n}\cdot\sum_{t=1}^n\mathbbm{1}\bigg\{D_t>\varepsilon_1\bigg\}\bigg]}{\varepsilon_2} \label{apmarkovchain2}\\
          =& \frac{\frac{1}{n}\sum_{t=1}^n\mathbb{E}\bigg[\mathbbm{1}\bigg\{D_t>\varepsilon_1\bigg\}\bigg]}{\varepsilon_2} \label{apmarkovchain3}\\
          =& \frac{\frac{1}{n}\sum_{t=1}^n\mathcal{P}\bigg(D_t>\varepsilon_1\bigg)}{\varepsilon_2} \label{apmarkovchain4} \\
          \leq& \frac{\frac{1}{n}\sum_{t=1}^n\frac{\mathbb{E}\bigg[D_t\bigg]}{\varepsilon_1}}{\varepsilon_2} \label{apmarkovchain5}\\
          =& \frac{1}{\varepsilon_1\cdot\varepsilon_2}\cdot \mathbb{E}\bigg[\frac{1}{n}\sum^n_{t=1}D_t\bigg] \leq \frac{\varepsilon_0}{\varepsilon_1\cdot\varepsilon_2}. \label{apmarkovchain6}
      \end{align}
       \begin{itemize}
           \item Equations \eqref{apmarkovchain1},  \eqref{apmarkovchain3}, \eqref{apmarkovchain4} and \eqref{apmarkovchain6} are reformulations of probabilities and expectations.
           \item Equations \eqref{apmarkovchain2} and \eqref{apmarkovchain5}, come from Markov's inequality $\mathcal{P}(X\geq \alpha)\leq\frac{\mathbb{E}[X]}{\alpha}, \ \forall \alpha>0$.
       \end{itemize}

\section{Proof of Lemma \ref{lemmtre}} 
 \label{cobatru}
For all $w^n_{0},w^n_{1},w^n_2,t$, we have
\begin{align}
    &\mathcal{P}^{\sigma}(w_{1,t}|w_{0,t},w_{2,t})=\sum_{w^{t-1}_0,w_{0,t+1}^n, \atop w^{t-1}_2,w_{2,t+1}^n}\mathcal{P}^{\sigma}(w^{t-1}_0,w_{0,t+1}^n, \atop w^{t-1}_2,w_{2,t+1}^n|w_{0,t},w_{2,t})\cdot\mathcal{P}^{\sigma}(w_{1,t}|w_0^n,w_2^n). 
\end{align}
Similarly,
\begin{align}
    &\mathcal{P}^{\sigma}(w_{2,t}|w_{0,t},w_{1,t})=\sum_{w^{t-1}_0,w_{0,t+1}^n, \atop w^{t-1}_1,w_{1,t+1}^n}\mathcal{P}^{\sigma}(w^{t-1}_0,w_{0,t+1}^n, \atop w^{t-1}_1,w_{1,t+1}^n|w_{0,t},w_{1,t})\cdot\mathcal{P}^{\sigma}(w_{2,t}|w_0^n,w_1^n).
\end{align}
Therefore, using Lemma \ref{lemm134} the result follows.

\section{Additional Lemmas}
\lemma{(Typical Sequences Property 1, pp.26 in \cite{elgamal})}.
The typical  sequences $(u^n,w_1^n,w_2^n) \in \mathcal{T}_{\delta}^n$ satisfy: \begin{align}
    \forall \varepsilon>0, \  \exists \Bar{\delta}>0, \  \forall \delta \leq \Bar{\delta}, \  \forall n, \  \forall (u^n,w_1^n,w_2^n) \in \mathcal{T}_{\delta}^n, \nonumber \\ \Bigg|\frac{1}{n}\cdot\log_2\frac{1}{\Pi_{t=1}^n\mathcal{P}(u|w_{1,t},w_{2,t})}-H(U|W_1,W_2) \Bigg|\leq \varepsilon, \end{align}   where  $\Bar{\delta}=\varepsilon\cdot H(U|W_1,W_2)$.

\label{apa.20jet} 

\lemma \label{apa.2222jet} Let $U^n$ an i.i.d random variable and $M$ a random variable. For all $\varepsilon>0$, there exists $\Bar{n}\in \mathbb{N}$, such that for all $n \geq \Bar{n}$, we have  \begin{align}
    H(U^n|E_{\delta}=0) \geq n\cdot\Big(H(U)-\varepsilon\Big).
\end{align}
\begin{proof} 
\begin{align}
    &H(U^n|E_{\delta}=0)= \nonumber \\ &\frac{1}{\mathcal{P}(E_{\delta}=0)}\cdot\Big(H(U^n|E_{\delta}=1)-\mathcal{P}(E_{\delta}=1)\cdot H(U^n|E_{\delta}=1)\Big) \label{ttrq1} \\ 
    &\geq H(U^n|E_{\delta})-\mathcal{P}(E_{\delta}=1)\cdot H(U^n|E_{\delta}=1)\Big) \label{ttrq2} \\ 
    &\geq H(U^n)-H(E_{\delta})-\mathcal{P}(E_{\delta}=1)\cdot H(U^n|E_{\delta}=1)\Big) \label{ttrq3} \\ 
    &\geq H(U^n)-n\cdot \varepsilon \label{ttrq4}.
\end{align} 
\end{proof}

\begin{itemize}
    \item Equation \eqref{ttrq1} follows from the conditional entropy definition.
    \item Equation \eqref{ttrq2} follows since $\mathcal{P}(E_{\delta}=0) \leq 1$.
    \item Equation \eqref{ttrq3} comes from the property $H(U^n|M,E_{\delta})=H(U^n,M,E_{\delta})-H(M) - H(E_{\delta}) \geq H(U^n) -H(M) - H(E_{\delta})$. 
    \item Equation \eqref{ttrq4} follows since $U$ is i.i.d and the definition of $E_{\delta}=1$.Hence, for all $\varepsilon$, there exists an $\Bar{n}\in \mathbb{N}$ such that for all $n \geq \Bar{n}$ we have $H(\mathcal{P}(E_{\delta}=1))+H(M)+\mathcal{P}(E_{\delta}=1)\cdot \log_2|\mathcal{U}| \leq \varepsilon.$ 
\end{itemize}

\bibliographystyle{IEEEtran}
\bibliography{sample}

\begin{thebibliography}{10}
\providecommand{\url}[1]{#1}
\csname url@samestyle\endcsname
\providecommand{\newblock}{\relax}
\providecommand{\bibinfo}[2]{#2}
\providecommand{\BIBentrySTDinterwordspacing}{\spaceskip=0pt\relax}
\providecommand{\BIBentryALTinterwordstretchfactor}{4}
\providecommand{\BIBentryALTinterwordspacing}{\spaceskip=\fontdimen2\font plus
\BIBentryALTinterwordstretchfactor\fontdimen3\font minus
  \fontdimen4\font\relax}
\providecommand{\BIBforeignlanguage}[2]{{%
\expandafter\ifx\csname l@#1\endcsname\relax
\typeout{** WARNING: IEEEtran.bst: No hyphenation pattern has been}%
\typeout{** loaded for the language `#1'. Using the pattern for}%
\typeout{** the default language instead.}%
\else
\language=\csname l@#1\endcsname
\fi
#2}}
\providecommand{\BIBdecl}{\relax}
\BIBdecl

\bibitem{rouphael2021strategic}
R.~{Bou Rouphael} and M.~{Le Treust}, ``Strategic successive refinement coding
  for bayesian persuasion with two decoders,'' \emph{IEEE Information Theory
  Workshop, Kanazawa Japan (online)}, Oct. 2021.

\bibitem{crawford1982}
V.~Crawford and J.~Sobel, ``Strategic information transmission,''
  \emph{Econometrica}, vol.~50, no.~6, pp. 1431--51, 1982.

\bibitem{KamenicaGentzkow11}
E.~Kamenica and M.~Gentzkow, ``Bayesian persuasion,'' \emph{American Economic
  Review}, vol. 101, pp. 2590 -- 2615, 2011.

\bibitem{graywynersimplenetwork}
R.~M. Gray and A.~D. Wyner, ``Source coding for a simple network,'' \emph{The
  Bell System Technical Journal}, vol.~53, no.~9, pp. 1681--1721, 1974.

\bibitem{secondordergraywynerlossy}
L.~Zhou, V.~Y.~F. Tan, and M.~Motani, ``Second-order coding region for the
  discrete lossy gray-wyner source coding problem,'' in \emph{2016 IEEE
  International Symposium on Information Theory (ISIT)}, 2016, pp. 2409--2413.

\bibitem{koesslerlaclau}
F.~Koessler, M.~Laclau, and T.~Tomala, ``Interactive information design,''
  \emph{Mathematics of Operations Research}, 06 2021.

\bibitem{sar1}
S.~Saritas, S.~Yuksel, and S.~Gezici, ``Quadratic multi-dimensional signaling
  games and affine equilibria,'' \emph{IEEE Transactions on Automatic Control},
  vol.~62, no.~2, p. 605–619, Feb 2017.

\bibitem{SaritasFurrerGeziciLinderYukselISIT2019}
S.~{Sar{\i}ta\c{s}}, P.~{Furrer}, S.~{Gezici}, T.~{Linder}, and
  S.~{Y\"{u}ksel}, ``On the number of bins in equilibria for signaling games,''
  in \emph{2019 IEEE International Symposium on Information Theory (ISIT)},
  2019, pp. 972--976.

\bibitem{sar2}
S.~Sar\i{}ta\c{s}, S.~Y\"{u}ksel, and S.~Gezici, ``Dynamic signaling games with
  quadratic criteria under nash and stackelberg equilibria,''
  \emph{Automatica}, vol. 115, no.~C, May 2020.

\bibitem{dughmi}
S.~Dughmi, D.~Kempe, and R.~Qiang, ``Persuasion with limited communication,''
  in \emph{Proceedings of the 2016 ACM Conference on Economics and
  Computation}, ser. EC ’16.\hskip 1em plus 0.5em minus 0.4em\relax New York,
  NY, USA: Association for Computing Machinery, 2016, p. 663–680.

\bibitem{AkyolLangbortBasar15}
E.~Akyol, C.~Langbort, and T.~Ba\c{s}ar, ``Strategic compression and
  transmission of information,'' in \emph{IEEE Information Theory Workshop -
  Fall (ITW)}, Oct 2015, pp. 219--223.

\bibitem{akyol2017information}
E.~Akyol, C.~Langbort, and T.~Ba{\c{s}}ar, ``Information-theoretic approach to
  strategic communication as a hierarchical game,'' \emph{Proceedings of the
  IEEE}, vol. 105, no.~2, pp. 205--218, 2017.

\bibitem{LeTreustTomala(Allerton)16}
M.~Le~Treust and T.~Tomala, ``Information design for strategic coordination of
  autonomous devices with non-aligned utilities,'' \emph{IEEE Proc. of the 54th
  Allerton conference, Monticello, Illinois}, pp. 233--242, 2016.

\bibitem{jet}
------, ``Persuasion with limited communication capacity,'' \emph{Journal of
  Economic Theory}, vol. 184, p. 104940, 2019.

\bibitem{pointtopoint}
------, ``Point-to-point strategic communication,'' \emph{IEEE Information
  Theory Workshop}, 2020.

\bibitem{wyner-it-1976}
A.~D. Wyner and J.~Ziv, ``The rate-distortion function for source coding with
  side information at the decoder,'' \emph{IEEE Transactions on Information
  Theory}, vol.~22, no.~1, pp. 1--11, 1976.

\bibitem{akyol2016role}
E.~Akyol, C.~Langbort, and T.~Ba\c{s}ar, ``On the role of side information in
  strategic communication,'' in \emph{IEEE International Symposium on
  Information Theory (ISIT)}, July 2016, pp. 1626--1630.

\bibitem{corsica2020}
R.~{Bou Rouphael} and M.~{Le Treust}, ``Impact of private observation in
  bayesian persuasion,'' \emph{International Conference on NETwork Games
  COntrol and OPtimization NetGCoop}, Mar. 2020.

\bibitem{LeTreustTomalaISIT21}
M.~Le~Treust and T.~Tomala, ``Strategic communication with decoder side
  information,'' \emph{Information Symposium on Information Theory (ISIT)},
  2021.

\bibitem{voraachievable}
A.~S. Vora and A.~A. Kulkarni, ``Achievable rates for strategic
  communication,'' in \emph{2020 IEEE International Symposium on Information
  Theory (ISIT)}, 2020, pp. 1379--1384.

\bibitem{voraextraction}
\BIBentryALTinterwordspacing
------, ``Information extraction from a strategic sender: The zero error
  case,'' 2020. [Online]. Available: \url{https://arxiv.org/abs/2006.10641}
\BIBentrySTDinterwordspacing

\bibitem{Yu1999EssentialEO}
J.~Yu, ``Essential equilibria of n-person noncooperative games,'' \emph{Journal
  of Mathematical Economics}, vol.~31, pp. 361--372, 1999.

\bibitem{nash1951non}
J.~Nash, ``Non-cooperative games,'' \emph{Annals of mathematics}, pp. 286--295,
  1951.

\bibitem{MKfort51}
M.~J. Fort, ``Points of continuity of semicontinuous functions,'' \emph{Publ.
  Math. Debrecen}, no.~2, pp. 100--102, 1951.

\bibitem{Berge}
C.~Berge, \emph{Topological Spaces: Including A Treatment Of Multi-Valued
  Functions, Vector Spaces And Convexity}.\hskip 1em plus 0.5em minus
  0.4em\relax Oliver and Boyd, Edinburgh, 1963.

\bibitem{elgamal}
A.~El~Gamal and Y.-H. Kim, \emph{Network information theory}.\hskip 1em plus
  0.5em minus 0.4em\relax Cambridge university press, 2011.

\end{thebibliography}
\end{document}